\newtheorem{theorem}{Theorem}
\newtheorem{proposition}[theorem]{Proposition}
\newtheorem{lemma}[theorem]{Lemma}
\newtheorem{corollary}[theorem]{Corollary}
\theoremstyle{definition}
\newtheorem{definition}[theorem]{Definition}
\theoremstyle{remark}
\newtheorem{remark}[theorem]{Remark}
\newtheorem{notation}[theorem]{Notation}
\newtheorem{example}[theorem]{Example}
\newcommand\hypo{\Hypo}
\newcommand\infer{\Infer}
\newcommand\ellipsis{\Ellipsis}
\tikzset{
    dot/.style = {fill=black, circle, outer sep=2.5pt, inner sep=1pt},
    every loop/.style={}
}
\title{Stellar Resolution: Multiplicatives} %TODO Please add
\author{Boris Eng \\ Université Sorbonne Paris Nord \\ LIPN -- UMR 7030\\ \texttt{engboris@hotmail.fr} \and Thomas Seiller\\ CNRS \\ LIPN -- UMR 7030\\ \texttt{seiller@lipn.fr}}
\begin{document}

\maketitle

%TODO mandatory: add short abstract of the document
\begin{abstract}
We present a new asynchronous model of computation named \textit{Stellar Resolution} based on first-order unification \cite{herbrand1930recherches, robinson1965machine}. This model of computation is obtained as a formalisation of Girard's transcendental syntax programme, sketched in a series of three articles \cite{girard2017transcendental,girard2016transcendental,girard2016transcendental3}. As such, it is the first step towards a proper formal treatment of Girard's proposal to tackle first-order logic in a proofs-as-program approach \cite{girard2016transcendental3}. After establishing formal definitions and basic properties of stellar resolution, we explain how it generalises traditional models of computation, such as logic programming and combinatorial models such as Wang tilings. We then explain how it can represent multiplicative proof-structures \cite{girard1987linear}, their cut-elimination and the correctness criterion of Danos-Regnier \cite{danos1989structure}. Further use of realisability techniques lead to dynamic semantics for Multiplicative Linear Logic, following previous Geometry of Interaction models.
\end{abstract}

% ==============================================================
\section{Introduction}\label{sec:intro}
% ==============================================================

We present a new asynchronous model of computation named \textit{Stellar Resolution} based on first-order unification \cite{herbrand1930recherches, robinson1965machine}. This model arises from work in proof theory, and more precisely proof-theoretic semantics related to the Curry-Howard correspondence between proofs and programs \cite{curry1934functionality, howard1980formulae}. While Curry-Howard traditionally assimilates terms in the $\lambda$-calculus with proofs in intuitionnistic logic, stellar resolution terms -- named constellations -- will be assimilated with (generalisations of) proofs in linear logic.

Linear Logic was introduced by Girard  \cite{girard1987linear} as a refinement of intuitionnistic logic inspired from semantics of $\lambda$-calculus \cite{girard1988normal}. Soon after the introduction of linear logic, the \emph{geometry of interaction} programme emerged \cite{girard1989towards}, aiming at defining semantics of proofs and programs accounting for the dynamics of the cut-elimination procedure. This \emph{dynamic semantics} approach, a major inspiration behind game semantics \cite{hyland2000full,abramsky2000full}, distinguishes itself from denotational semantics in which cut-elimination is represented as equality.

However, the geometry of interaction ambition went beyond the dynamic semantics aspects. It also aimed to reconstruct logic -- in particular linear logic -- from what could be understood as an untyped model of computation by using realisability techniques. In some way, the idea is the same as the reconstruction of simple types from pure $\lambda$-calculus. In $\lambda$-calculus, one defines an orthogonality relation between terms $t$ and contexts $E(\cdot)$ by letting $t\perp E(\cdot)\Leftrightarrow E(t) \text{ is strongly normalising}$. A type $A$ is then defined as a set of contexts $T_A$ -- understood as tests --, and a term $t$ is considered typable with the type $A$ when $t\in T_A^\bot$, i.e. $\forall E(\cdot)\in T_A, t\perp E(\cdot)$. We refer the reader to, e.g., Riba's work\cite{riba2007strong} for more details.

In this aspect of reconstructing linear logic, several geometry of interaction models were defined using operators algebras \cite{girard1989geometry,girard1988geometry,girard2011geometry}, unification algebras \cite{girard1995geometry}, graphs \cite{seiller2012interaction,seiller2016interaction} and graphings \cite{seiller2017interaction,seiller2019interaction,seiller2016interaction2}. Although all these models did define rich models, that were in particular used to study computational complexity \cite{baillot2001elementary,aubert2016characterizing,aubert2016logarithmic,seiller2018interaction,seiller2020probabilistic}, they failed with respect to two different aspects. Firstly, the objects used to interpret even the most basic proofs were most of the time infinite objects and even when they were not, types were defined through an infinite number of tests. Secondly, the obtained models did interpret soundly the fragments of linear logic considered, but no completeness results exist\footnote{While this aspect is a failure somehow, it is also a feature as the models are very rich and open other paths of reflexion.}.

Recently, Girard published a series of articles \cite{girard2013three, girard2017transcendental, girard2016transcendental, girard2016transcendental3} sketching the main lines of a new kind of model that would have the qualities of geometry of interaction models, but improve on them at least concerning the first failure mentioned above. Those articles are interesting and claim great improvements, proposing in particular a Curry-Howard interpretation of first-order logic \cite{girard2016transcendental3}. However, these articles are too inexact in form to serve satisfactorily as the basis of a mathematical theory\footnote{The formulation is borrowed from Church's critic of von Mises notion of kollektiv \cite{church1940}.}. The current work is the first step towards a proper formal account of the model, with underlying motivation the representation of first-order logic and its possible applications in relation with descriptive complexity results \cite{immerman2012descriptive}, such as the Immerman-Vardi theorem \cite{vardi1982complexity,immermann1986relational}.

\noindent\textbf{Contributions.}
We formally describe a model of computation named stellar resolution, which extends the model of computation vaguely described by Girard. We prove the main properties of the model. In particular, while Girard claimed the failure of the Church-Rosser property, we are able to prove it holds for stellar resolution (Theorem \ref{thm:associativity}).  We also relate it to standard models of computation such as Wang tilings \cite{wang1961proving} and abstract tiles assembly models \cite{winfree1998algorithmic,patitz2014introduction}, which have applications in bio-computing \cite{seeman1982nucleic,woods2019diverse}. We then explain how this model captures the dynamics of cut-elimination for MLL, the multiplicative fragment of linear logic (Theorem \ref{def:dynamics}), and the correctness criterion for proof-structures -- a syntax for MLL (Theorem \ref{thm:correctness2}). Lastly, we explain how realisability techniques similar to those used in $\lambda$-calculus can be used to define types that organise into a denotational semantics for MLL (a $\ast$-autonomous category), and prove soundness and completeness of the model w.r.t. MLL+MIX, an extension of MLL with the so-called MIX rule (Theorems \ref{thm:soundness} \& \ref{thm:completeness}).

% ==============================================================
\section{Stellar Resolution}\label{sec:stellar}
% ==============================================================

% -------------------------------------------
\subsection{First-order Terms and Unification}\label{subsec:unifcation}
% -------------------------------------------

Stellar resolution is based on the theory of unification. We here recall basic definitions and refer the reader to the article of \textit{Lassez et al.} \cite{lassez1988unification} for more details.

\begin{definition}[Signature]
\label{def:signature}
A signature $\mathbb{S} = (\mathcal{V}, \mathcal{F}, \mathtt{ar})$ consists of an infinite countable set $\mathcal{V}$ of variables, a countable set $\mathcal{F}$ of function symbols whose arities are given by $\mathtt{ar} : \mathcal{F} \rightarrow \nat$.
\end{definition}

\begin{definition}[First-order term]
\label{def:term}
Let $\mathbb{S} = (\mathcal{V}, \mathcal{F}, \mathtt{ar})$ be a signature. The set of terms $\mathcal{T}_\mathbb{S}$ is inductively defined by the grammar $t, u ::= x \mid f(t_1, ..., t_n)$ with $x \in \mathcal{V}, f \in \mathcal{F}, \mathtt{ar}(f) = n$.
\end{definition}

We now fix a signature $\mathbb{S} = (\mathcal{V}, \mathcal{F}, \mathtt{ar})$ until the end of this section.

\begin{definition}[Substitution]
\label{def:subst}\label{def:rename}
A substitution is a function $\theta: \mathcal{V} \rightarrow \mathcal{T}_\mathbb{S}$ extended from variables to terms by $\theta(f(u_1, ..., u_k)) = f(\theta u_1, ..., \theta u_k)$.

A \emph{renaming} is a substitution $\alpha$ such that $\alpha(x)\in\mathcal{V}$ for all $x\in\mathcal{V}$.
\end{definition}

\begin{definition}[Unification problem]
\label{def:unifprob}\label{def:solprob}
An equation is an unordered pair $t \doteq u$ of terms in $\mathcal{T}_\mathbb{S}$. A unification problem is a set $P=\{t_1 \doteq u_1, ..., t_n \doteq u_n\}$ of equations.

A solution for $P$ is a substitution $\theta$ such that for all equations $t_i \doteq u_i$ in $P$, $\theta t_i = \theta u_i$.
\end{definition}

\begin{definition}[Matching]
\label{def:matching}
Two terms $t$ and $u$ are matchable if there exists a renaming $\alpha$ such that $\{\alpha t \doteq u\}$ has a solution.
\end{definition}

\begin{theorem}[Unification algorithm]
\label{def:unifalgo}
The problem of deciding if a solution to a given unification problem $P$ exists is decidable. Moreover, there exists a minimal solution $\unifalgo{P}$ w.r.t the preorder $\theta \preceq \psi \Leftrightarrow \exists \theta'.\psi = \theta' \circ \theta$, unique up to renaming.
\end{theorem}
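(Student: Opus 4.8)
The plan is to decide solvability and compute $\unifalgo{P}$ simultaneously by a terminating rewriting system on unification problems, in the style of Martelli--Montanari. I would work with problems augmented by a distinguished failure state $\bot$ and introduce transformation rules: \emph{decomposition}, replacing an equation $f(t_1,\dots,t_n)\doteq f(u_1,\dots,u_n)$ by the equations $t_1\doteq u_1,\dots,t_n\doteq u_n$; \emph{clash}, sending the problem to $\bot$ whenever it contains $f(\dots)\doteq g(\dots)$ with $f\neq g$; \emph{deletion} of a trivial equation $x\doteq x$; \emph{elimination}, which for an equation $x\doteq t$ with $x\notin\mathrm{vars}(t)$ and $x$ occurring elsewhere applies the substitution $[x:=t]$ to the rest of the problem; and \emph{occurs-check failure}, sending the problem to $\bot$ when it contains $x\doteq t$ with $x\in\mathrm{vars}(t)$ but $x\neq t$. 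Since equations are unordered pairs no orientation rule is needed: elimination fires as soon as one side of an equation is a variable absent from the other.

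The first task is to verify that each non-failure rule preserves the solution set, i.e. a substitution $\theta$ solves the left-hand problem iff it solves the right-hand one, and that the two failure rules fire exactly when no solution can exist (a function symbol cannot be unified with a distinct one, and $x$ cannot be identified with a strictly larger term containing it, by an easy argument on term size). By induction on the number of rewriting steps this shows that $P$ is solvable iff it does not reduce to $\bot$, which already yields decidability once termination is established.

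The main obstacle is termination, since elimination can increase the total size of a problem. I would resolve this with the Martelli--Montanari measure: a lexicographic quantity $\mu(P)=(n_1,n_2,n_3)\in\nat^3$, where $n_1$ counts the \emph{unsolved} variables of $P$ (a variable being solved when it occurs exactly once in $P$, as one side of an equation $x\doteq t$), $n_2$ is the total size of all terms in $P$, and $n_3$ the number of equations. Elimination strictly decreases $n_1$; decomposition and deletion never increase $n_1$ while strictly decreasing $(n_2,n_3)$. As $\nat^3$ under the lexicographic order is well-founded, the system terminates, and every irreducible problem other than $\bot$ is in \emph{solved form}, i.e. a set $\{x_1\doteq t_1,\dots,x_k\doteq t_k\}$ whose $x_i$ are pairwise distinct variables each occurring exactly once; in particular $x_i\notin\mathrm{vars}(t_j)$ for all $i,j$.

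From a solved form one reads off the (idempotent) substitution $\sigma=[x_1:=t_1,\dots,x_k:=t_k]$, which solves $P$ by construction. To see it is minimal for $\preceq$, take any solution $\theta$ of $P$; by solution preservation $\theta$ also solves the solved form, so $\theta x_i=\theta t_i$ for each $i$, whence checking on each variable gives $\theta=\theta\circ\sigma$, that is $\sigma\preceq\theta$. Setting $\unifalgo{P}:=\sigma$ thus yields a minimal solution. Finally, uniqueness up to renaming follows from antisymmetry of $\preceq$ modulo renaming: if $\sigma$ and $\sigma'$ are both minimal then $\sigma\preceq\sigma'\preceq\sigma$ gives $\theta,\theta'$ with $\sigma=\theta'\circ\theta\circ\sigma$, so $\theta'\circ\theta$ fixes every variable in the image of $\sigma$; restricting to the finitely many variables of $P$, a counting argument on variable occurrences forces $\theta$ to act as a renaming on the relevant variables, so the two minimal solutions coincide up to renaming.
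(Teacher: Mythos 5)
Your overall strategy is the textbook Martelli--Montanari/Baader--Nipkow one (the paper itself offers no proof of this theorem, merely citing the unification literature), and your solution-preservation, solved-form, minimality and uniqueness steps are essentially sound. However, the termination argument --- the step you yourself identify as the main obstacle --- has a genuine gap, and it stems precisely from your dismissal of orientation. Because equations are unordered and your notion of ``solved'' accepts either side, elimination can swap forever on variable--variable equations. Concretely, take $P = \{x \doteq y,\; g(x) \doteq g(f(w))\}$. Eliminating $x$ (with $t = y$) yields $\{x \doteq y,\; g(y) \doteq g(f(w))\}$; but now the same equation, read as $y \doteq x$, satisfies the elimination conditions ($y \notin \mathrm{vars}(x)$ and $y$ occurs elsewhere), and eliminating $y$ returns the original problem. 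The system therefore does not terminate, and your measure does not in fact decrease on this step: $x$ becomes solved but $y$, which was solved (its unique occurrence being one side of $x \doteq y$), becomes unsolved, so $n_1$ is unchanged; $n_2$ and $n_3$ are unchanged as well, since a variable is replaced by a variable. The claim ``elimination strictly decreases $n_1$'' is thus false exactly when the non-eliminated side $t$ is itself a solved variable --- the case the classical Orient rule exists to handle.

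The repair is standard but must be made explicit: treat equations as ordered internally (adding an Orient rule that puts the variable on the left when the other side is a non-variable), declare a variable solved only when its unique occurrence is the \emph{left-hand} side of its equation, and let elimination act only on left-hand variables. Then a solved variable $z$ can never occur in the substituted term $t$ (that would be a second occurrence), so no solved variable becomes unsolved, and $n_1$ strictly decreases at each elimination. Equivalently, you could break the symmetry with a fixed well-order on variables, eliminating only the larger variable of a variable--variable equation in favour of the smaller. With either fix, the rest of your argument --- normal forms are solved forms, reading off the idempotent unifier $\sigma$, the identity $\theta = \theta \circ \sigma$ for every solution $\theta$ giving minimality, and uniqueness up to renaming via $\sigma = \theta' \circ \theta \circ \sigma$ --- goes through and proves the theorem.
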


Let us note that several algorithms were designed to compute the unique solution when it exists, such that the Martelli-Montanari unification algorithm \cite{martelli1982efficient}.

% -------------------------------------------
\subsection{Stars and Constellations}\label{subsec:stars}
% -------------------------------------------

The stellar resolution model is based on first-order unification but extends it the notions of \emph{polarities} and \emph{colours}. Intuitively, polarised terms are prefixed with either a $+$ or $-$ sign indicating their ability to be composed (with a term of opposite polarity). Colours allow for the consideration of various types of composition. We define the core objects of our model.

\begin{definition}[Coloured Signature]
\label{def:colouredsignature}
A \emph{coloured signature} is a 4-tuple $\mathbb{S} = (\mathcal{V}, \mathcal{F}, \mathcal{C}, \mathtt{ar})$ where $ (\mathcal{V}, \mathcal{F}, \mathtt{ar})$ is a signature and $\mathcal{C} \subseteq \mathcal{F}$ is a specified set of \emph{colours}. The set of terms $\mathcal{T}_\mathbb{S}$ is defined as the set of terms over the signature $(\mathcal{V}, \mathcal{F}, \mathtt{ar})$.
\end{definition}

We will now work with the coloured signature $\mathbb{S} = (\mathcal{V}, \mathcal{F}, \mathcal{C}, \mathtt{ar})$ unless specified otherwise.

\begin{definition}[Ray]
\label{def:ray}
A ray is a term in the grammar $r ::= +c(t_1, ..., t_n) \mid -c(t_1, ..., t_n) \mid t$, where $\{t_1, ..., t_n\} \subseteq \mathcal{T}_\mathbb{S}$ and $c \in \mathcal{C}$ with $\mathtt{ar}(c) = n$. The underlying term of a ray is defined by $\floor{+c(t_1, ..., t_n)} = \floor{-c(t_1, ..., t_n)} = c(t_1, ..., t_n)$ and $\floor{t}=t$.
\end{definition}

\begin{notation}
\label{not:ray}
We will sometimes write $+c.t$ (resp. $-c.t$) instead of $+c(t)$ (resp. $-c(t)$) for unary colours when it is convenient.
\end{notation}

\begin{definition}[Star]
\label{def:star}
A star $\sigma$ over a coloured signature $\mathbb{S}$ is a finite and non-empty multiset of rays, i.e. a finite set $\abs{\sigma}$ together with a map $\rho_\sigma:\abs{\sigma}\rightarrow\rays{\mathbb{S}}$. The set of variables appearing in $\sigma$ is written $\mathtt{vars}(\sigma)$. Stars are written as multisets $\sigma=\gstar{r_1, ..., r_n}$.
\end{definition}

\begin{definition}[Substitutions and $\alpha$-equivalence]
\label{def:alphaeq}
Given a substitution $\theta$, its action extends to rays by letting $\theta(\pm c(t_1, ..., t_n))=\pm\theta(c(t_1, ..., t_n))$ with $\pm \in \{+, -\}$. It also extends to stars: $\theta\gstar{r_1, ..., r_n} = \gstar{\theta r_1, ..., \theta r_n}$.

We say that two stars $\sigma_1, \sigma_2$ are \emph{$\alpha$-equivalent}, written $\sigma_1 \alphaeq \sigma_2$, when there exists a renaming $\alpha$ such that $\alpha \sigma_1 = \sigma_2$.
\end{definition}

\begin{notation}
In this work, stars will be considered up to $\alpha$-equivalence. We therefore define $\setofstars$ as the set of all stars over a coloured signature $\mathbb{S}$, quotiented by $\alphaeq$.
\end{notation}

\begin{definition}[Constellation]
\label{def:constellation}
A constellation $\Sigma$ is a (countable) multiset of stars, i.e. a countable (possibly infinite) set $\abs{\Sigma}$ together with a map $\alpha_\Sigma: \abs{\Sigma}\rightarrow\setofstars$. The variables appearing in $\Sigma$ are considered (sometimes implicitly) bound to their star i.e $\bigcap_{e\in\abs{\Sigma}} \mathtt{vars}(\alpha_\Sigma(e)) = \emptyset$. The set of rays of $\Sigma$ is defined as $\rays{\Sigma}=\{ (s,r) \mid s\in\abs{\Sigma},r\in\abs{\alpha_\Sigma(s)}\}$. A finite constellation will sometimes be written $\Sigma = a_1\sigma_1 + ... + a_n\sigma_n$ ($a_i\in\nat$).
\end{definition}

\begin{example}
\label{ex:constellation}
Here is an example of two finite and an infinite constellation:
\begin{itemize}
    \item $\Sigma_{ex} = \gstar{g(x), f(x), +a(f(x))}+\gstar{-a(y), +b(y)}+\gstar{x, -b(g(x))}+\gstar{+b(x), x}$
    \item $\Sigma_a = \gstar{-a(x), +a(x), -b(x)}$
    \item $\Sigma_\nat$ is defined by $|\Sigma_\nat| = \nat$ and $\alpha_{\Sigma_\nat}(n) = \gstar{-nat.s^n(0), +nat.s^{n+1}(0)}$
\end{itemize}
where $s^0(t) = t$ and $s^{n+1}(t) = s(s^n(t))$, $nat$ is a colour and $s$ and $0$ are function symbols.
\end{example}

% -------------------------------------------
\subsection{Diagrams and Reduction}\label{subsec:diagrams}
% -------------------------------------------

Constellations are gatherings of stars destined to be connected together through their rays of opposite polarities. We define the unification graph of a constellation which is a multigraph specifying which stars can be connected together through the unification of their rays.

\begin{definition}[Duality]
\label{def:duality}
Two rays $r, r'$ are dual w.r.t. a set of colours $\mathcal{A}\subseteq\mathcal{C}$, written $r \poll_{\mathcal{A}} r'$, if there exists $c\in\mathcal{A}$ such that either $r=+c.t$ and $r'=-c.t'$, or $r=-c.t$ and $r'=+c.t'$, where $t,t'$ are matchable.
\end{definition}

\begin{definition}[Unification graph]
\label{def:ugraph}
The unification graph $\ugraph{\mathcal{A}}{\Sigma}$ of a constellation $\Sigma$ w.r.t. a set of colours $\mathcal{A}\subseteq\mathcal{C}$ is the undirected multigraph $(V^{\ugraph{\mathcal{A}}{\Sigma}},E^{\ugraph{\mathcal{A}}{\Sigma}},s^{\ugraph{\mathcal{A}}{\Sigma}})$ where $V^{\ugraph{\mathcal{A}}{\Sigma}}=\abs{\Sigma}$, $E^{\ugraph{\mathcal{A}}{\Sigma}}$ is the set of unordered pairs $\{ ((s,r),(s',r')) \in\rays{\Sigma} \mid  r \poll_{\mathcal{A}} r' \}$, and $s^{\ugraph{\mathcal{A}}{\Sigma}}$ is the function defined by $s^{\ugraph{\mathcal{A}}{\Sigma}}((s,r),(s',r'))=\{s,s'\}$.
\end{definition}

\begin{example}
\label{ex:ugraph}
If we take the constellations from example \ref{ex:constellation}, the unification graph $\ugraph{nat}{\Sigma_\nat}$ is an infinite linear graph and $\ugraph{b, c, d, e}{\Sigma_{ex}}$ is a linear graph
\begin{tikzpicture}
    \node[dot] at (0, 0) (a) {};
    \node[dot] at (0.5, 0) (b) {};
    \node[dot] at (1, 0) (c) {};
    \node[dot] at (1.5, 0) (d) {};
    \draw (a) -- (b);
    \draw (b) -- (c);
    \draw (c) -- (d);
\end{tikzpicture}
of $4$ vertices. The graph $\ugraph{a, b}{\Sigma_a}$ is a vertex with a loop.
\end{example}

The unification graph is used to define \emph{diagrams}, which are graphs obtained by composing occurrences of stars in a constellation along dual rays. They represent actual connexions between the stars of a constellation, following the connexions allowed by the unification graph.

\begin{definition}[Diagram]
\label{def:diagram}
Let $\mathcal{A}$ be a set of colours $\mathcal{A}\subseteq\mathcal{C}$. An $\mathcal{A}$-diagram $\delta$ over a constellation $\Sigma$ is a finite connected graph\footnote{The underlying graph is considered up to renaming of the vertices and edges. For practical purpose and without loss of generality, we will consider that vertices are natural numbers.} $D_{\delta}$ and a graph homomorphism $\delta : D_{\delta} \rightarrow \ugraph{\mathcal{A}}{\Sigma}$ which is \emph{ray injective}, i.e. for all $v\in V^{D_\delta}$ the function $\{ e\in E^{D_\delta}\mid v\in s^{D_\delta}(e)\}\rightarrow \abs{\delta(v)}; \{(\delta(v),r),(s',r')\}\mapsto r$ is injective.
If $D_{\delta}$ is a tree, we say $\delta$ is \emph{tree-like}.
\end{definition}

\begin{notation}
Given an $\mathcal{A}$-diagram $\delta$, we define its set of \emph{paired rays} as the set $\mathtt{paired}(\delta)=\{ (v,r) \mid \exists e\in E^{D_{\delta}}, v\in s^{D_{\delta}}(e), (\delta(v),r)\in\rays{\Sigma}\}$ and its set of free rays as $\mathtt{free}(\delta)=\{ (v,r) \mid v\in V^{D_{\delta}}, (\delta(v),r)\in\rays{\Sigma}, (v,r)\not\in \mathtt{paired}(\delta) \}$.
\end{notation}

\begin{example}
\label{ex:diagrams}
If we take the constellations from example \ref{ex:constellation} together with their unfication graph in example \ref{ex:ugraph}, any subchain of $\ugraph{nat}{\Sigma_\nat}$ and $\ugraph{b, c, d, e}{\Sigma_{ex}}$ can give rise to a diagram. As for, $\ugraph{a, b}{\Sigma_a}$, we can make a diagram be connecting as many occurrences of the star $\gstar{-a(x), +a(x), -b(x)}$ as we want.
\end{example}

\begin{remark}
The notion of diagram suggested by Girard \cite{girard2017transcendental} coincides with the notion of \emph{tree-like diagram} in this work. However, as we will explain below, the more general notion of diagram considered here allows for the representation of interesting models of computation, such as Wang tilings \cite{wang1961proving}.
\end{remark}

As mentioned above, we consider the underlying graphs of diagrams to have natural numbers as vertices. This is used here to produce a family of substitutions $(\theta(v,\_))_{v\in\nat}$ of pairwise disjoint codomains. This is used in the following definition to rename all occurrences of stars in diagrams with pairwise disjoint sets of free variables.

\begin{definition}[Underlying unification problem]
\label{def:underproblem}
The underlying unification problem $\mathcal{P}(\delta)$ of an $\mathcal{A}$-diagram $\delta$ over a constellation $\Sigma$ is defined as
\[ \{\theta(v,\floor{r}) \doteq \theta(v',\floor{r'}) \mid f\in E^{D_\delta}, s^{D_\delta}(f)=\{v,v'\}\wedge \delta(f)=((\delta(v),r),(\delta(v'),r')) \}.\]
\end{definition}

In some ways, diagrams generalise trails in a graph. Where paths describe a possible trajectory for a particle, diagrams describe the possible trajectories of a wave that can simultaneously spread in several directions when encountering forks. We now introduce the notion of \emph{saturated diagrams}, which correspond to \emph{maximal paths}.

\begin{definition}[Saturated Diagram]
\label{def:diagramorder}
We define a preorder $\sqsubseteq$ on $\mathcal{A}$-diagrams over a constellation $\Sigma$ by: $\delta \sqsubseteq \delta'$ if there exists an isomorphism $\varphi$ from a graph $D$ of $D_{\delta'}$ to $D_{\delta}$ such that $\delta=\delta'\circ\varphi$.

A maximal $\mathcal{A}$-diagram w.r.t. $\sqsubseteq$ is called \emph{saturated}.
\end{definition}

\begin{example}
For the constellations from example \ref{ex:constellation} and their unfication graph in example \ref{ex:ugraph}, $\Sigma_\nat$ has no saturated diagram since it is always possible to extend any star $\gstar{-nat.s^n(0), +nat.s^{n+1}(0)}$ by connecting it to
\[\gstar{-nat.s^{n+1}(0), +nat.s^{n+2}(0)} \in \Sigma_\nat.\]
Similarly, $\Sigma_a$ can always connect any star with a new copy of itself so no saturated diagram can be constructed. The star $\gstar{x, -b(g(x))} \in \Sigma_{ex}$ can be connected to either $\gstar{g(x), f(x), +a(f(x))}+\gstar{-a(y), +b(y)}$ or just $\gstar{+b(x), x}$ and these two connexions can't be extended. Therefore, they form saturated diagrams.
\end{example}

% -------------------------------------------
\subsection{Evaluation and Normal form}\label{subsec:evaluation}
% -------------------------------------------

Diagrams can be evaluated by confronting and annihilating the rays connected together similarly to a chemical reaction between molecules which will have an effect on the remaining free rays. This reaction defines a unification problem. When a solution (a substitution) to this problem exists, one can apply it to all the free rays involved in the diagram. Edges in the diagram then represent equalities and the diagram can be \emph{contracted} into a single star. This operation is called \emph{actualisation}.

\begin{definition}[Correct diagram and actualisation]
\label{def:correctdiagram}
Consider a constellation $\Sigma$ and a set of colours $\mathcal{A}\subseteq\mathcal{C}$. An $\mathcal{A}$-diagram $\delta$ is correct if $\mathtt{free}(\delta)\neq\emptyset$ and the associated unification problem $\mathcal{P}(\delta)$ has a solution.

The actualisation of a correct diagram $\delta$ is the star $\actu \delta$ defined as $\abs{\actu \delta}=\mathtt{free}(\delta)$ and $\rho_{\actu \delta}: (e,r)\in\mathtt{free}(\delta)\mapsto \psi(\theta(e,r))$, where $\psi = \unifalgo{\mathcal{P}(\delta)}$ and $\theta(e,\_)$ is the renaming used in Definition \ref{def:underproblem}.
\end{definition}

\begin{notation}
We write $\correctdiag{k}(\Sigma)$ (resp. $\correcttreediag{k}(\Sigma)$) for the set of correct $\mathcal{A}$-diagrams (resp. correct tree-like $\mathcal{A}$-diagrams) over $\Sigma$ with $k$ vertices, and $\saturateddiag{k}(\Sigma)$ (resp. $\saturatedtreediag{k}(\Sigma)$) the set of saturated $\mathcal{A}$-diagrams in $\correctdiag{k}(\Sigma)$ (resp. in $\correcttreediag{k}(\Sigma)$).
\end{notation}

Note that the approach proposed by Girard used a step-by-step procedure merging two adjacent stars in a diagram. It is possible to prove that his approach and our lead to the same result, and more precisely that a sequence of merging coincides with a step-by-step execution of the Martelli-Montanari unification algorithm \cite{martelli1982efficient}. Stating and proving this refined result is however outside of the scope of this paper.

\begin{definition}[Normalisation]
\label{def:normalalization}
The normalisation or execution (resp. tree-like normalisation) of a constellation $\Sigma$ w.r.t. a set of colours $\mathcal{A}\subseteq\mathcal{C}$ is defined by
$\exec^*_\mathcal{A}(\Sigma) = \bigcup_{k=0}^{\infty} \actu\saturatedstardiag{k}(\Sigma)$ ($*\in\{\emptyset,tree\}$), where $\actu\saturatedstardiag{k}(\delta)$ denotes the set $\{\actu\delta\mid \delta\in\saturatedstardiag{k}(\delta)\}$.
\end{definition}

\begin{example}
The constellation $\Sigma_{ex}$ is the only constellation from example \ref{ex:constellation} which has saturated diagrams. The only correct diagram $\delta$ connects $\gstar{x, -b(g(x))}$ and $\gstar{+b(y), y}$. We have to solve the problem $\mathcal{P}(\delta) = \{b(g(x)) \doteq b(y)\}$ producing $\theta = \{y \mapsto g(x)\}$ and we finally obtain $\exec(\Sigma_{ex}) = \theta\gstar{x, y} = \gstar{x, g(x)}$ since $\Sigma_{ex}$ has only two diagrams. One can check that the other diagram fails.
\end{example}

\begin{definition}[Strong normalisation]
\label{def:strongnormal}
A constellation $\Sigma$ is strongly normalising w.r.t. a set of colours $\mathcal{A}\subseteq\mathcal{C}$ if and only if $\exec(\Sigma)$ is a finite constellation. When $\mathcal{A}=\mathcal{C}$, we simply say that $\Sigma$ is strongly normalising.
\end{definition}

\begin{theorem}[Associativity / Church-Rosser]
\label{thm:associativity}
Let $\Sigma$ be a constellation, and $\mathcal{A,B}$ be two disjoint sets of colours, i.e. $\mathcal{A}\subseteq\mathcal{C}$, $\mathcal{B}\subseteq\mathcal{C}$, and $\mathcal{A}\cap\mathcal{B}=\emptyset$. Then:
\[ \exec^*_\mathcal{B}(\exec^*_\mathcal{A}(\Sigma)) = \exec^*_\mathcal{A\cup B}(\Sigma) = \exec^*_\mathcal{A}(\exec^*_\mathcal{B}(\Sigma)),\quad \mathrm{(*\in\{\emptyset,tree\}}) \]
\end{theorem}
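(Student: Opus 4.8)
The strategy is to prove the single equality $\exec^*_{\mathcal{A}\cup\mathcal{B}}(\Sigma) = \exec^*_\mathcal{B}(\exec^*_\mathcal{A}(\Sigma))$; the remaining identity then follows by exchanging the roles of $\mathcal{A}$ and $\mathcal{B}$, since the hypotheses are symmetric in the two colour sets. Both sides are multisets of stars indexed by saturated correct diagrams, so it suffices to exhibit an actualisation-preserving bijection between the saturated correct $(\mathcal{A}\cup\mathcal{B})$-diagrams over $\Sigma$ and the saturated correct $\mathcal{B}$-diagrams over the constellation $\exec^*_\mathcal{A}(\Sigma)$, and to check that this bijection restricts to the tree-like diagrams on each side.

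First I would set up the decomposition. Given a correct $(\mathcal{A}\cup\mathcal{B})$-diagram $\delta$, the disjointness $\mathcal{A}\cap\mathcal{B}=\emptyset$ lets me label each edge of $D_\delta$ as an $\mathcal{A}$-edge or a $\mathcal{B}$-edge according to which set the colour witnessing its duality belongs to. The connected components of the subgraph carrying only $\mathcal{A}$-edges are $\mathcal{A}$-sub-diagrams $\delta_1,\dots,\delta_m$ of $\delta$; contracting each component to a single vertex yields a graph whose edges are exactly the $\mathcal{B}$-edges, and this is the underlying graph of a $\mathcal{B}$-diagram $\bar\delta$ whose vertices are labelled by the actualised stars $\actu\delta_i$. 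Ray-injectivity of $\bar\delta$ and the identity $\mathtt{free}(\delta)=\mathtt{free}(\bar\delta)$ follow from the observation that a ray of $\delta$ is a $\mathcal{B}$-endpoint exactly when the corresponding ray of $\actu\delta_i$ is paired in $\bar\delta$, while the remaining free rays of the components are precisely the free rays of $\delta$.

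The technical heart is to show that actualisation commutes with this decomposition. I would split $\mathcal{P}(\delta)$ as $\mathcal{P}_\mathcal{A}\cup\mathcal{P}_\mathcal{B}$, where $\mathcal{P}_\mathcal{A}=\bigcup_i\mathcal{P}(\delta_i)$ collects the equations of the $\mathcal{A}$-edges and $\mathcal{P}_\mathcal{B}$ those of the $\mathcal{B}$-edges, and then appeal to the staged solvability of unification: a substitution solves $\mathcal{P}_\mathcal{A}\cup\mathcal{P}_\mathcal{B}$ iff it factors through a solution $\psi_\mathcal{A}=\unifalgo{\mathcal{P}_\mathcal{A}}$ of $\mathcal{P}_\mathcal{A}$ followed by a solution of $\psi_\mathcal{A}(\mathcal{P}_\mathcal{B})$. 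Because the components have pairwise disjoint variables after the renamings of Definition~\ref{def:underproblem}, $\psi_\mathcal{A}$ is the disjoint union of the $\unifalgo{\mathcal{P}(\delta_i)}$, so that $\psi_\mathcal{A}$ acting on the free rays produces exactly the rays of the stars $\actu\delta_i$, and $\psi_\mathcal{A}(\mathcal{P}_\mathcal{B})$ is, up to the fresh renaming used when forming $\bar\delta$, the underlying problem $\mathcal{P}(\bar\delta)$. Minimality and uniqueness up to renaming of $\unifalgo{-}$ (Theorem~\ref{def:unifalgo}) then give that $\mathcal{P}(\delta)$ is solvable iff every $\mathcal{P}(\delta_i)$ and $\mathcal{P}(\bar\delta)$ are, and that the two orders of solving yield the same substitution up to renaming; combined with $\mathtt{free}(\delta)=\mathtt{free}(\bar\delta)$ this yields that $\delta$ is correct iff every $\delta_i$ and $\bar\delta$ are, together with $\actu\delta \alphaeq \actu\bar\delta$.

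It remains to match saturation and tree-likeness, and to conclude. For saturation I would argue that maximality transfers in both directions precisely because $\mathcal{A}$- and $\mathcal{B}$-extensions are independent: any proper $\mathcal{A}$-extension of a component $\delta_i$ grafts into a proper $(\mathcal{A}\cup\mathcal{B})$-extension of $\delta$, and any proper $\mathcal{B}$-extension of $\bar\delta$ lifts to one of $\delta$; conversely every extension of $\delta$ adds an edge of a single colour, hence extends either some $\delta_i$ or $\bar\delta$. Here one uses that correctness forbids a component with empty free set, which cannot occur because connectedness of $\delta$ forces every component to carry an incident $\mathcal{B}$-edge or to be all of $\delta$. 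Thus $\delta$ is saturated iff all the $\delta_i$ are saturated --- so that each $\actu\delta_i$ is a vertex of $\ugraph{\mathcal{B}}{\exec^*_\mathcal{A}(\Sigma)}$ --- and $\bar\delta$ is saturated. The inverse map glues each vertex of a saturated correct $\mathcal{B}$-diagram over $\exec^*_\mathcal{A}(\Sigma)$ back into its underlying saturated $\mathcal{A}$-diagram, giving a bijection that preserves actualisation; hence the two multisets coincide. For the tree-like case, contracting the connected acyclic $\mathcal{A}$-components of a tree yields a tree and gluing trees along a tree pattern yields a tree, so $\delta$ is tree-like iff all the $\delta_i$ and $\bar\delta$ are, and the bijection restricts accordingly. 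The main obstacle I anticipate is the bookkeeping in the staged-unification step --- keeping the per-vertex renamings of the two constructions aligned so that the composite substitution really agrees up to renaming --- rather than the combinatorial correspondence, which is comparatively routine.
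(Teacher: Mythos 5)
Your construction is the same as the paper's: the paper's sketch builds the blow-up map from $\mathcal{B}$-diagrams over $\exec^*_\mathcal{A}(\Sigma)$ to $(\mathcal{A}\cup\mathcal{B})$-diagrams over $\Sigma$ and inverts it by exactly your contraction of $\mathcal{A}$-components, and your staged-unification treatment of $\mathcal{P}(\delta)=\mathcal{P}_\mathcal{A}\cup\mathcal{P}_\mathcal{B}$ fills in the actualisation bookkeeping ($\actu\delta\alphaeq\actu\bar\delta$, correctness transfer) that the paper leaves implicit; that part of your argument is sound, and is in fact more detailed than the paper's. The genuine gap is in the saturation step, in the clause ``conversely every extension of $\delta$ adds an edge of a single colour, hence extends either some $\delta_i$ or $\bar\delta$''. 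For an $\mathcal{A}$-coloured edge this is right, but for a $\mathcal{B}$-coloured edge it is not: the new edge attaches a star of $\Sigma$ that may have no counterpart in $\exec^*_\mathcal{A}(\Sigma)$ whatsoever, because that star may admit no completion to a saturated correct $\mathcal{A}$-diagram (it is ``erased'' by $\mathcal{A}$-execution through being indefinitely extendable). Such an extension of $\delta$ over $\Sigma$ induces no extension of $\bar\delta$ over $\exec^*_\mathcal{A}(\Sigma)$, so saturation of all the $\delta_i$ and of $\bar\delta$ does not give saturation of $\delta$ --- which is precisely what your inverse (gluing) map needs in order to land in saturated diagrams.

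The failure is concrete, not bookkeeping. Take $\mathcal{A}=\{a\}$, $\mathcal{B}=\{b\}$ and $\Sigma=\gstar{+b(0),x}+\gstar{-b(0),-a(0)}+\gstar{+a(y),-a(s(y))}$. The first star is isolated in $\ugraph{\mathcal{A}}{\Sigma}$, hence a saturated correct one-vertex $\mathcal{A}$-diagram; every $\mathcal{A}$-diagram containing either of the other two stars has an unpaired $-a(\cdot)$ ray and extends by a fresh copy of the third star, while the all-paired cyclic ones fail the occurs-check; so $\exec_\mathcal{A}(\Sigma)=\gstar{+b(0),x}$ and $\exec_\mathcal{B}(\exec_\mathcal{A}(\Sigma))=\gstar{+b(0),x}$. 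Over $\mathcal{A}\cup\mathcal{B}$, though, the one-vertex diagram on the first star is no longer saturated (the $b$-edge to the second star can be added), and no diagram containing the second or third star ever saturates correctly, so $\exec_{\mathcal{A}\cup\mathcal{B}}(\Sigma)=\emptyset$: the gluing of the saturated one-vertex $\mathcal{B}$-diagram is not saturated over $\Sigma$. (The same computation gives $\exec_\mathcal{A}(\exec_\mathcal{B}(\Sigma))=\emptyset$, and everything said holds for tree-like diagrams too.) So the equality being proved fails for this $\Sigma$ under the paper's stated definitions, and no amount of care in the bijection can close the gap. In fairness, the paper's own sketch is silent on exactly this point --- it never checks that blow-ups of saturated diagrams are saturated --- so you have reproduced, in more honest detail, the same hole; repairing it requires amending the statement (e.g. restricting to constellations whose stars all embed in saturated correct $\mathcal{A}$- and $\mathcal{B}$-diagrams) or the definitions of saturation and execution, not just tightening the proof.
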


\begin{proof}[Proof (Sketch).]
We establish the first isomorphism, which is enough by symmetry.
First note that the disjointness of the sets of colours implies that the set of edges in $\ugraph{\mathcal{A}\cup\mathcal{B}}{\Sigma}$ is the disjoint union of the sets of edges in $\ugraph{\mathcal{A}}{\Sigma}$ and those in $\ugraph{\mathcal{B}}{\Sigma}$. As a consequence, any diagram on $\ugraph{\mathcal{A}}{\Sigma}$ (resp. $\ugraph{\mathcal{B}}{\Sigma}$) can be thought of as a diagram on $\ugraph{\mathcal{A}\cup\mathcal{B}}{\Sigma}$.

Let $\delta : D_\delta \rightarrow \ugraph{\mathcal{B}}{\exec^*_\mathcal{A}(\Sigma)}$ be a correct saturated $\mathcal{B}$-diagram of $\exec^*_\mathcal{A}(\Sigma)$. For each $s\in\abs{\exec^*_\mathcal{A}(\Sigma)}$, the star $\exec^*_\mathcal{A}(\Sigma)(s)$, which we will write $\sigma_s$, corresponds to a diagram $\delta_s: D_s\rightarrow \ugraph{\mathcal{A}}{\Sigma}$.
We can therefore build a diagram $\bar{\delta}$ over $\ugraph{\mathcal{A}\cup\mathcal{B}}{\Sigma}$ by \emph{blowing up} $\delta$ along the diagrams $\delta_s$. More precisely, we construct the graph $\bar{D}_\delta$ obtained by replacing each vertex $s$ by the graph $D_s$; this is well-defined as each ray in $\delta(s)$ comes from a unique ray from a star $\delta_s(s')$ for some $s'\in V^{D_s}$, and therefore each edge $e$ in $D_\delta$ of source $s$ becomes an edge of source the unique star $s'\in V^{D_s}$. The morphism $\delta$ then extends uniquely to a morphism $\bar{\delta}: \bar{D}_\delta\rightarrow \ugraph{\mathcal{A}\cup\mathcal{B}}{\Sigma}$ whose action on the subgraphs $D_s$ coincides with that of $\delta_s$ (as a morphism into $\ugraph{\mathcal{A}\cup\mathcal{B}}{\Sigma}$).

To check that this mapping from $\mathcal{B}$-diagrams on $\exec^*_\mathcal{A}(\Sigma)$ to $(\mathcal{A\cup B})$-diagrams on $\Sigma$ is indeed an isomorphism, one can directly define an inverse mapping. For this purpose, the essential remark is that given a $(\mathcal{A\cup B})$-diagram $\bar{\delta}$ on $\Sigma$, one can recover the underlying $\mathcal{B}$-diagrams on $\exec^*_\mathcal{A}(\Sigma)$ as the restriction of $\bar{\delta}$ to the connected components of the graph obtained from $D_{\bar{\delta}}$ by removing the edges mapped to $\mathcal{A}$-coloured edges in $\ugraph{\mathcal{A}\cup\mathcal{B}}{\Sigma}$. The underlying graph of the corresponding $\mathcal{B}$-diagram on $\exec^*_\mathcal{A}(\Sigma)$ is then defined from $D_{\bar{\delta}}$ by contracting each of these connected components to a single vertex.
\end{proof}

\begin{remark}
In Girard's first paper on Transcendental Syntax \cite{girard2017transcendental}, the constellation $\Sigma = \gstar{+a.x, -a.x, +b.x}$ is mentioned as a counter-example for the Church-Rosser property. Here, we have $\exec^{\mathrm{tree}}_{\{a\}}(\exec^{\mathrm{tree}}_{\{b\}}(\Sigma)) = \exec^{\mathrm{tree}}_{\{b\}}(\exec^{\mathrm{tree}}_{\{a\}}(\Sigma)) = \emptyset$ (because no saturated diagram on $a$ nor on $b$ can be constructed). Our understanding of Girard's failure comes from his limitation to strongly normalising constellations, so that $\exec^{\mathrm{tree}}_{\{a\}}(\Sigma)$ was not defined.
\end{remark}

% -------------------------------------------
\subsection{Remarkable examples of constellations}\label{subsec:examples}
% -------------------------------------------

We now explain how the stellar resolution model generalises several established frameworks from the literature.

\subsubsection{(Hyper)graphs} As a first example, we detail how graphs (and more generally hypergraphs) can be encoded, and how the execution coincides with the computation of paths. A consequence of this result is that models of computation such as \emph{Interaction Graphs} \cite{seiller2012interaction,seiller2016interaction} are special cases of our construction.

\begin{definition}[Encoding of hypergraphs]
\label{ex:graphencoding}
Let $G=(V,E,s,t)$ be a directed hypergraph, i.e. $s,t$ are maps $E\rightarrow\wp(V)$. We suppose given a signature in which there exists distinct unary function symbols to represent the elements of $V$ and $E$. Then, for each $e\in E$ with $s(e)=\{v_1,\dots, v_n\}$ and $t(e)=\{w_1,\dots,w_m\}$, we define the star
\[ \bar{e}=\gstar{-v_1(x), -v_2(x),\dots -v_n(x), +w_1(e(x)), +w_2(e(x)),\dots, +w_m(e(x))}.\]
The graph $G$ is then represented by the constellation $\bar{G}$ defined as the multiset $\alpha: E\rightarrow \bar{e}$.
\end{definition}

\noindent One can check that this encoding has the property that diagrams $\rho: T\rightarrow U(\bar{G})$ in which $T$ is a \emph{linear graph} (i.e. a graph $\bullet \rightarrow \bullet\rightarrow \dots\rightarrow\bullet$) are in bijection with walks in $G$; moreover the diagram is saturated if and only if the walk is \emph{maximal}.

\subsubsection{Logic programming}

The stellar resolution is very close to concepts coming from logic programming since they all use first-order terms to do computations. The main difference is that logic programming has practical motivations and enjoy logical interpretations through model theory while our approach is purely computational and doesn't obey logic. We choose to encode first-order disjunctive clauses \cite{robinson1965machine}.

\begin{definition}[Encoding of first-order disjunctive clauses]
We have the following encoding for first-order atoms: $P(t_1, ..., t_n)^\bigstar = +P(t_1, ..., t_n)$, as well as $(\lnot P(t_1, ..., t_n))^\bigstar = -P(t_1, ..., t_n)$ where $P$ is a colour. A clause is encoded as $\{A_1, ..., A_n\}^\bigstar = \gstar{A_1^\bigstar, ..., A_n^\bigstar}$ where $A_1, ..., An$ are first-order atoms. A set of clauses is translated into $\{C_1, ..., C_n\}^\bigstar = C_1^\bigstar + ... + C_n^\bigstar$. Clauses are connected through the resolution rule: from $\Gamma \cup \{P(t_1, ..., t_n)\}$ and $\Delta \cup \{\lnot P(u_1, ..., u_n)\}$, we infer $\theta(\Gamma \cup \Delta)$ where $\theta = \unifalgo{\{t_i \doteq u_i\}_{1 \leq i \leq n}}$.
\end{definition}

The actualisation of a diagram corresponds to several applications of the resolution rule. In particular, a saturated diagram is a full computation (instead of a partial/unfinished computation). The unification graph we use is reminiscent of dependency graphs \cite{nguyen2007termination, dimopoulos1996graph} and the normalisation of the computation of answer sets \cite{gelfond2008answer, eiter2009answer}. Although not described here, it may also be possible to simulate goal-directed inferences (such as SLD \cite{kowalski1974predicate} or SLO \cite{lobo1991semantics} resolution) by using unpolarised rays.

Another close model is the model of flows \cite{girard1995geometry, bagnol2014resolution} made of couples of terms $t \leftharpoonup u$ where $t$ and $u$ have exactly the same variables.

\begin{definition}[Encoding of flows]
Let $f = t \leftharpoonup u$ be a flow. We define its encoding as $f^\bigstar = \gstar{+t, -u}$. A wiring $F = f_1+...+f_n$ is a set of flows and it is encoded into $F^\bigstar = f_1^\bigstar + ... + f_n^\bigstar$.
\end{definition}

A wiring will describe a graph where flows represent edges. Matching (definition \ref{def:matching}) decides which vertices the edges connect. The execution of a wiring $\exec(F)$ computing maximal paths in corresponding graph corresponds to $\exec(F^\bigstar)$.

\subsubsection{Wang tiles and abstract tile assembly models}

Wang tiles \cite{wang1961proving} are square tiles with a colour on each side. For instance:
$W = \left\{\scalebox{0.3}{
    \wang{yellow}{red}{blue}{yellow}
    \wang{green}{green}{blue}{red}
}\right\}$. We are interested in the possible tilings made by putting copies of the tiles side by side such that two adjacent sides have the same colour. Our setting naturally generalises the structure and behaviour of Wang tiles in $\nat^2$.

\begin{definition}[Encoding of Wang tiles]
Let $t^i = (c^i_w, c^i_e, c^i_s, c^i_n)$, $i=1,\dots,k$, be a set of Wang tile characterised by west, east, south and north colours. We encode each Wang tile in $\nat^2$ by the star
\[ (t^i)^\bigstar = \gstar{-h(c^i_w(x), x, y), -v(c^i_s(y), x, y), +h(c^i_e(s(x)), x, y), +v(c^i_n(s(y)), x, y)}.\]
\end{definition}

This supposes the signature to contain a unary function symbol for each colour used in the Wang tiles set considered. We use two additional ternary colours $h$ (for \emph{horizontal axis}) and $v$ (for \emph{vertical axis}) for the rays. The terms $s^n(x), s^n(y)$ for $n \leq 0$ represent coordinates on the horizontal and vertical axis encoded as natural numbers, and forbids geometric patterns that would be impossible in the original Wang tile framework (e.g. a tile connected to itself).
A (possibly cyclic) diagram will correspond to a finite tiling.

Wang tilings are a Turing-complete model of computation \cite{berger1966undecidability}. As the simulation works by representing the space-time diagram of the execution, it is enough to restrict to Wang tiles over $\nat^2$ by considering machines with a single right-infinite tape (i.e. Turing's original \emph{a-machine} model \cite{turing1936computable}). As a consequence, we obtain the following theorem.

\begin{theorem}[Turing-completeness]
Stellar resolution can simulate Turing machines.
\end{theorem}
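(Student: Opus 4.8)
The plan is to chain together the reductions that the paper has already laid out rather than to build a Turing-machine simulation directly in constellations. The key observation is that the previous subsections provide an encoding of Wang tiles as stars, and the text explicitly invokes the classical fact that Wang tilings are a Turing-complete model \cite{berger1966undecidability}. So the proof becomes a matter of verifying that the encoding of definition \textit{Encoding of Wang tiles} is faithful enough that diagrams over the encoded constellation correspond exactly to the tilings one is allowed to form in $\nat^2$. Concretely, I would first fix a Turing machine $M$ and use the standard space-time (tiling) simulation to produce a finite Wang tile set $W_M$ over $\nat^2$ whose valid tilings encode the halting computations of $M$ on a given input, restricting to a single right-infinite tape as the text suggests. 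This step is cited, not reproved.

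Next I would make precise the correspondence between tilings and diagrams. The encoding assigns to each tile $t^i$ the star $(t^i)^\bigstar$ with horizontal rays $\pm h(c(x),x,y)$ and vertical rays $\pm v(c(y),x,y)$, where the coordinate arguments $x,y$ together with the successor term $s(\cdot)$ enforce adjacency in $\nat^2$. I would show that two encoded tiles can be connected along an $h$-edge (resp.\ $v$-edge) in the unification graph $\ugraph{\{h,v\}}{W_M^\bigstar}$ precisely when the corresponding coloured sides match and the tiles sit in horizontally (resp.\ vertically) adjacent positions; this is exactly the content of the duality relation $\poll_{\{h,v\}}$ together with matchability of the coordinate and colour arguments. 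The crucial role of the coordinate terms is to rule out the geometric pathologies the text warns about (a tile connected to itself, or a degenerate non-planar gluing), so that a correct diagram over $W_M^\bigstar$ must embed as an honest finite patch of the plane. Thus diagrams correspond bijectively to finite Wang tilings, and a halting computation of $M$ yields a correct (saturated) diagram whose actualisation records the relevant data, while non-halting runs yield no finite saturated diagram.

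The main obstacle I expect is not the high-level reduction but the faithfulness of the coordinate machinery. One must verify that the unification constraints imposed by the $s(\cdot)$ terms really force a rigid grid structure: that an edge created by $\poll_{\{h,v\}}$ can only connect position $(n,m)$ to $(n{+}1,m)$ or $(n,m{+}1)$, never to a spurious position, and that no diagram can glue tiles in a way that has no planar realisation. Because diagrams here are allowed to be cyclic (unlike the tree-like case), I would need to check that the cycles permitted by the unification graph are exactly the expected lattice cycles of $\nat^2$ and introduce no extra solutions; this is the point where a careless encoding would let the model simulate something strictly weaker or stronger than Wang tilings. Once this rigidity lemma is in hand, Turing-completeness follows immediately: since $M$ halts on an input iff the associated tile set admits the corresponding finite tiling iff $W_M^\bigstar$ admits the corresponding correct diagram, stellar resolution inherits Turing-completeness from \cite{berger1966undecidability, turing1936computable}.
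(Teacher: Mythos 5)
Your proposal is correct and follows essentially the same route as the paper: the paper also obtains this theorem as a direct consequence of its encoding of Wang tiles, citing Berger's Turing-completeness of Wang tilings and noting that the space-time simulation of a machine with a single right-infinite tape fits in $\nat^2$. The faithfulness/rigidity lemma you identify (that the coordinate terms $s^n(x), s^n(y)$ force an honest grid and exclude spurious gluings) is indeed the only substantive verification, and the paper leaves it implicit, stating only that the coordinates forbid impossible geometric patterns.
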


Wang tiles are very close to the abstract tile assembly models (aTAM) \cite{winfree1998algorithmic, patitz2014introduction} which can be used as a model of computation with applications in DNA computing \cite{seeman1982nucleic}. Our model is able to represent aTAM, and even generalise it.

A tile $t = (g_w, g_e, g_s, g_n)$ in the aTAM is a Wang tile where the sides $g \in \mathcal{G}$ (called glues types instead of colours) are associated to an integer $\mathrm{str}(g) \geq 0$ called their strength. An additional model parameter is called the \emph{temperature}, which we write $\mathtt{t}$. A tile can be attached to an assembly (i.e. a diagram in the stellar resolution terminology) if adjacent sides have matching glue type and strength, and the total strength of its connections is at least\footnote{As a consequence, a tile can connect to an assembly in temperature 2 through two faces of strength 1, but cannot connect through a single of those faces.} $\mathtt{t}$. To encode those we use colours $h,v$ of arity 3 as above but decline them in pairs $h_{-},h_{+}$ and $v_{-},v_{+}$, together with a binary symbol $G$, and define for all glue type $g \in \mathcal{G}$ the term $\mathrm{gl}(g)(x)=G(g(x), s^{\mathrm{str}(g)}(0))$. Temperature is dealt with by adding \emph{ambiance stars} that are required to connect between tile encodings. Ambiance stars connect to a tile translation on several sides and force the sum of forces to be larger than the temperature. Ambiance stars $A^{a,b,c,d}$ are defined as
\[
\begin{array}{p{12cm}}
$[-v_{-}(G(x_a, s^{a}(y_a)), z, w),$\\
\hspace{1cm} $+v_{+}(G(x_b, s^{b}(y_b)), z, w),$\\
\hspace{2cm} $-h_{-}(G(x_c, s^{c}(y_c)), z, w),$\\
\hspace{3cm} $+h_{+}(G(x_d, s^{d}(y_d)), z, w),$\\
\hspace{4cm} $+v_{+}(G(x_a,s^{a}(y_a),z_1,w_1)),$\\
\hspace{5cm} $-v_{-}(G(x_b,s^{b}(y_b),z_2,w_2)),$\\
\hspace{6cm} $+h_{+}(G(x_c, s^{c}(y_c)), z_3, w_3),$\\
\hspace{7cm} $-h_{-}(G(x_d, s^{d}(y_d)), z_4, w_4)]$
\end{array}
\]
for all $0\leqslant a,b,c,d\leqslant \mathtt{t}$ such that $a+b+c+d=\mathtt{t}$. We also require the addition of stars $[-v_{+}(x)]$, $[+v_{-}(x)]$, $[-h_{+}(x)]$, $[+v_{-}(x)]$ to \emph{plug} connexions where no tiles can be added.

\begin{definition}[Encoding of self-assembling tiles]
Let $t^i = (g^i_w, g^i_e, g^i_s, g^i_n)$, $i\in I$ (countable but possibly infinite) be a set of self-assembling tiles. The tile $t^i$ is encoded as the star
$(t^i)^\bigstar$:
\[\begin{array}{c}
[-h_{+}(\mathrm{gl}(g^i_w)(x), x, y),\\
\hspace{2cm} -v_{+}(\mathrm{gl}(g^i_s)(y), x, y),\\
\hspace{4cm} +h_{-}(\mathrm{gl}(g^i_e)(s(x)), x, y),\\
\hspace{6cm} +v_{-}(\mathrm{gl}(g^i_n)(s(y)), x, y)].\end{array}\]

The aTAM model with tiles $(t^i)_{i\in I}$ and temperature $\mathtt{t}$ is encoded as \[\sum_{a+b+c+d=\mathtt{t}}A^\mathtt{a,b,c,d}+\sum_{i\in I} (t^i)^\bigstar+[-v_{+}]+[+v_{-}]+[-h_{+}]+[+v_{-}].\]
\end{definition}

% ==============================================================
\section{Interpreting the computational content of MLL}\label{sec:computational}
% ==============================================================

In this section, we restrict the definitions to the tree-like normalisation i.e $\exec(\Sigma)$ will be a shortcut for $\exec_{\mathcal{A}}^{\mathrm{tree}}(\Sigma)$ where $\mathcal{A}$ is a set of colour left implicit.

% -------------------------------------------
\subsection{Multiplicative Linear Logic}\label{subsec:mll}
% -------------------------------------------

\begin{figure}[t]
    %%% Formulas
    \begin{subfigure}{.275\textwidth}
        \[A, B = X_i \mid X_i^\bot \mid A \otimes B \mid A \parr B \quad \text{where $i \in \nat$}\]
        \caption{{\footnotesize MLL Formulas}}
        \label{subfig:mllformulas}
    \end{subfigure}

    %%% Sequent calculus
    \vspace{1em}
    \begin{subfigure}{.275\textwidth}
        \scalebox{0.9}{
        \begin{prooftree}
            \infer0{ \vdash A, A^\bot }
        \end{prooftree}
        \qquad
        \begin{prooftree}
            \hypo{ \vdash \Gamma, A \quad \vdash \Delta, B }
            \infer1{ \vdash \Gamma, \Delta, A \otimes B }
        \end{prooftree}}

        \medskip
        \scalebox{0.9}{
        \begin{prooftree}
            \hypo{\vdash \Gamma, A \quad \vdash \Delta, A^\bot}
            \infer1{\vdash \Gamma, \Delta}
        \end{prooftree}
        \qquad
        \begin{prooftree}
            \hypo{\vdash \Gamma, A, B}
            \infer1{\vdash \Gamma, A \parr B}
        \end{prooftree}}
        \caption{Sequent calculus rules}
        \label{subfig:mllsequent}
    \end{subfigure}
    \hspace{2cm}
    %%% Proof-structures
    \begin{subfigure}{.4\textwidth}
        \scalebox{0.75}{
        \begin{minipage}{9cm}
        \begin{tikzpicture}
          \node[dot] at (0, 0) (a) {};
          \node[dot] at (2, 0) (ad) {};
          \node at (1, 0.75) (ax) {ax};
          \draw[-latex, rounded corners=5pt] (ax) -| (a);
          \draw[-latex, rounded corners=5pt] (ax) -| (ad);
          \node at (1, -1) (label) {Axiom};
        \end{tikzpicture}
        \quad
        \begin{tikzpicture}
          \node[dot] at (0, 0) (a) {};
          \node[dot] at (2, 0) (ad) {};
          \node at (1, -0.75) (cut) {cut};
          \draw[-latex, rounded corners=5pt] (a) |- (cut);
          \draw[-latex, rounded corners=5pt] (ad) |- (cut);
          \node at (1, -1.5) (label) {Cut};
        \end{tikzpicture}
        \quad
        \begin{tikzpicture}
          \node[dot] at (-0.75, 0.75) (a) {};
          \node[dot] at (0.75, 0.75) (b) {};
          \node at (0, 0) (tens) {$\otimes$};
          \node[dot] at (0, -0.75) (ab) {};
          \draw[-stealth] (a) -- (tens);
          \draw[-stealth] (b) -- (tens);
          \draw[-stealth] (tens) -- (ab);
          \node at (0, -1.5) (label) {Tensor};
        \end{tikzpicture}
        \quad
        \begin{tikzpicture}
          \node[dot] at (-0.75, 0.75) (a) {};
          \node[dot] at (0.75, 0.75) (b) {};
          \node at (0, 0) (par) {$\parr$};
          \node[dot] at (0, -0.75) (ab) {};
          \draw[-stealth] (a) -- (par);
          \draw[-stealth] (b) -- (par);
          \draw[-stealth] (par) -- (ab);
          \node at (0, -1.5) (label) {Par};
        \end{tikzpicture}
        \end{minipage}}
        \caption{{\footnotesize Proof-structures}}
        \label{subfig:proofstructures}
    \end{subfigure}

    %% Cut-elimination
    \begin{subfigure}{0.9\textwidth}
        \scalebox{.75}{
          \begin{minipage}{0.25\textwidth}
            \begin{tikzpicture}
                \node[dot, label={-90:$A$}] at (0, 0) (a) {};
                \node[dot, label={0:$A^\bot$}] at (1, 0) (ad) {};
                \node at (0.5, 0.75) (ax) {ax};
                \node at (1.75, -0.75) (cut) {cut};
                \node[dot, label={90:$A$}] at (2.5, 0) (a2) {};
                \draw[-latex, rounded corners=5pt] (ax) -| (a);
                \draw[-latex, rounded corners=5pt] (ax) -| (ad);
                \draw[-latex, rounded corners=5pt] (ad) |- (cut);
                \draw[-latex, rounded corners=5pt] (a2) |- (cut);
            \end{tikzpicture}
          \end{minipage}
          $\quad\leadsto\quad$
          \begin{minipage}{0.1\textwidth}
            \begin{tikzpicture}
             \node at (0, 0) (a) {$A$};
           \end{tikzpicture}
          \end{minipage}
          \hspace{0.5cm}
          \begin{minipage}{0.4\textwidth}
            \begin{tikzpicture}
                \node[dot, label={-180:$A$}] at (-0.75, 0.75) (a) {};
                \node[dot, label={-180:$B$}] at (0.75, 0.75) (b) {};
                \node at (0, 0) (tens) {$\otimes$};
                \node[dot, label={-180:$A \otimes B$}] at (0, -0.75) (ab) {};
                \draw[-stealth] (a) -- (tens);
                \draw[-stealth] (b) -- (tens);
                \draw[-stealth] (tens) -- (ab);

                \node[dot, label={-180:$A^\bot$}] at (2, 0.75) (c) {};
                \node[dot, label={-180:$B^\bot$}] at (3.5, 0.75) (d) {};
                \node at (2.75, 0) (par) {$\parr$};
                \node[dot, label={-180:$A^\bot \parr B^\bot$}] at (2.75, -0.75) (cd) {};
                \draw[-stealth] (c) -- (par);
                \draw[-stealth] (d) -- (par);
                \draw[-stealth] (par) -- (cd);

                \node at (1.25, -1.25) (cut) {cut};
                \draw[-latex, rounded corners=5pt] (ab) |- (cut);
                \draw[-latex, rounded corners=5pt] (cd) |- (cut);
            \end{tikzpicture}
          \end{minipage}
          $\quad\leadsto\quad$
          \begin{minipage}{0.3\textwidth}
            \begin{tikzpicture}
              \node[dot, label={-180:$A$}] at (0, 0) (a) {};
              \node[dot, label={-180:$B$}] at (1.5, 0) (b) {};
              \node[dot, label={-180:$A^\bot$}] at (3, 0) (c) {};
              \node[dot, label={-180:$B^\bot$}] at (4.5, 0) (d) {};
              \node at (0.75, -0.5) (cut1) {cut};
              \node at (3, -1) (cut2) {cut};
              \draw[-latex, rounded corners=5pt] (a) |- (cut1);
              \draw[-latex, rounded corners=5pt] (c) |- (cut1);
              \draw[-latex, rounded corners=5pt] (b) |- (cut2);
              \draw[-latex, rounded corners=5pt] (d) |- (cut2);
           \end{tikzpicture}
          \end{minipage}} % scalebox
        \caption{{\footnotesize Cut-elimination reductions: axiomatic cut (left), multiplicative cut (right)}}
        \label{subfig:cutelimination}
    \end{subfigure}
\caption{Syntax of Multiplicative Linear Logic (MLL)}
\label{fig:mll}
\end{figure}

Multiplicative linear logic (MLL) is a fragment of linear logic \cite{girard1987linear} restricted to the tensor $\otimes$ and par $\parr$ connectives. The set $\mathcal{F}_{\mathrm{MLL}}$ of MLL formulas is defined by the grammar of the figure \ref{subfig:mllformulas}. Linear negation $(\cdot)^\bot$ is extended to formulas by involution and de Morgan laws: $X^{\bot\bot} = X$, $(A \otimes B)^\bot = A^\bot \otimes B^\bot$, and $(A \parr B)^\bot = A^\bot \parr B^\bot$.

Proofs of MLL can be written in the traditional sequent calculus fashion \cite{gentzen1935untersuchungen, gentzen1935untersuchungen2} using the set of rules shown in figure \ref{subfig:mllsequent}. In his seminal paper \cite{girard1987linear}, Girard introduced an alternative syntax, akin to \emph{natural deduction}, based on a graph-theoretic representation of proofs. In this syntax, one considers the notion of proof-structure, i.e. an directed hypergraph with vertices labelled by formulas and constructed from hyperedges\footnote{For practical purposes, the source edges are ordered, and we will talk about the "left" and "right" sources since there never are more than two; illustrations implicitly represent the left (resp. right) source on the left (resp. right).} labelled within $\{ax, cut, \otimes, \parr\}$ and satisfying the arities and labelling constraints shown in figure \ref{subfig:proofstructures}. A proof-structure also satisfies the additional constraint that each vertex must be (1) the target of exactly one hyperedge, and (2) the source of at most one hyperedge. When needed, a proof-structure will be defined as a 6-tuple $(V,E,s,t,\ell_V,\ell_E)$, where $(V,E,s,t)$ is a directed hypergraph (see Definition \ref{ex:graphencoding}) and $\ell_V: V\rightarrow \mathcal{F}_{\mathrm{MLL}}$, $\ell_E: E\rightarrow\{\otimes,\parr,\mathrm{ax},\mathrm{cut}\}$ are labelling maps.

The cut-elimination procedure, which is defined in the natural way for MLL sequent calculus, becomes a graph-rewriting system on proof-structures, defined by the two rewriting rules (figure \ref{subfig:cutelimination}). The following definition explains how sequent calculus proofs can be represented as proof-structures.

\begin{definition}[Translation of MLL sequent calculus]
\label{def:translation}
We define a translation $\interp{\cdot}$ from MLL sequent calculus derivations to proof-structures. The proof-structures of the domain of $\interp{\cdot}$ are called the \emph{proof-nets}.

\scalebox{.75}{
% ax
\begin{minipage}{0.1\textwidth}
    \begin{prooftree}
        \infer0{\vdash A, A^\bot}
    \end{prooftree}
\end{minipage}
$\leadsto^{\interp{\cdot}}$
\begin{minipage}{0.4\textwidth}
    \begin{tikzpicture}
      \node[dot, label={-180:$A$}] at (0, 0) (a) {};
      \node[dot, label={-180:$A^\bot$}] at (2, 0) (ad) {};
      \node at (1, 0.75) (ax) {$ax$};
      \draw[-latex, rounded corners=5pt] (ax) -| (a);
      \draw[-latex, rounded corners=5pt] (ax) -| (ad);
    \end{tikzpicture}
\end{minipage}

% cut
\begin{minipage}{0.25\textwidth}
    \begin{prooftree}
        \hypo{\pi_1}
        \ellipsis{}{\vdash \Gamma, A}
        \hypo{\pi_2}
        \ellipsis{}{\vdash \Delta, A^\bot}
        \infer2{\vdash \Gamma, \Delta}
    \end{prooftree}
\end{minipage}
$\leadsto^{\interp{\cdot}}$
\begin{minipage}{0.6\textwidth}
    \begin{tikzpicture}
      \draw (-0.75,0.5) rectangle (0.5,1);
      \node at (0, 0.75) (label) {$\interp{\pi_1}$};
      \draw (1.5,0.5) rectangle (2.75,1);
      \node at (2, 0.75) (label) {$\interp{\pi_2}$};
      \node at (-0.5, 0) (gamma) {$\Gamma$};
      \node at (2.5, 0) (delta) {$\Delta$};
      \node[dot, label={0:$A$}] at (0, 0) (a) {};
      \node[dot, label={180:$A^\bot$}] at (2, 0) (ad) {};
      \draw[-latex] (-0.75, 0.5) -| (a);
      \draw[-latex] (-0.75, 0.5) -| (gamma);
      \draw[-latex] (1.5, 0.5) -| (ad);
      \draw[-latex] (1.5, 0.5) -| (delta);
      \node at (1, -0.75) (cut) {$cut$};
      \draw[-latex, rounded corners=5pt] (a) |- (cut);
      \draw[-latex, rounded corners=5pt] (ad) |- (cut);
    \end{tikzpicture}
\end{minipage}
}

\scalebox{.75}{
% par
\begin{minipage}{0.15\textwidth}
    \begin{prooftree}
        \hypo{\pi}
        \ellipsis{}{\vdash \Gamma, A, B}
        \infer1{\vdash \Gamma, A \parr B}
    \end{prooftree}
\end{minipage}
$\leadsto^{\interp{\cdot}}$
\begin{minipage}{0.4\textwidth}
    \begin{tikzpicture}
      \draw (-1,1.25) rectangle (1.75,1.75);
      \node at (0, 1.5) (label) {$\interp{\pi}$};
      \node at (1.5, 0.75) (gamma) {$\Gamma$};
      \node[dot, label={180:$A$}] at (-0.75, 0.75) (a) {};
      \node[dot, label={180:$B$}] at (0.75, 0.75) (b) {};
      \draw[-latex] (-1, 1.25) -| (gamma);
      \draw[-latex] (-1, 1.25) -| (a);
      \draw[-latex] (-1, 1.25) -| (b);
      \node at (0, 0) (par) {$\parr$};
      \node[dot, label={180:$A \parr B$}] at (0, -0.75) (ab) {};
      \draw[-stealth] (a) -- (par);
      \draw[-stealth] (b) -- (par);
      \draw[-stealth] (par) -- (ab);
    \end{tikzpicture}
\end{minipage}

% tensor
\begin{minipage}{0.2\textwidth}
    \begin{prooftree}
        \hypo{\pi_1}
        \ellipsis{}{\vdash \Gamma, A}
        \hypo{\pi_2}
        \ellipsis{}{\vdash \Delta, B}
        \infer2{\vdash \Gamma, \Delta, A \otimes B}
    \end{prooftree}
\end{minipage}
$\leadsto^{\interp{\cdot}}$
\begin{minipage}{0.4\textwidth}
    \begin{tikzpicture}
      \draw (-1.5,1.25) rectangle (-0.5,1.75);
      \draw (0.5,1.25) rectangle (1.5,1.75);
      \node at (-1, 1.5) (label) {$\interp{\pi_1}$};
      \node at (1, 1.5) (label) {$\interp{\pi_1}$};
      \node at (-1.25, 0.75) (gamma) {$\Gamma$};
      \node at (1.25, 0.75) (delta) {$\Delta$};
      \node[dot, label={0:$A$}] at (-0.75, 0.75) (a) {};
      \node[dot, label={180:$B$}] at (0.75, 0.75) (b) {};
      \draw[-latex] (-1.5, 1.25) -| (gamma);
      \draw[-latex] (-1.5, 1.25) -| (a);
      \draw[-latex] (1.5, 1.25) -| (delta);
      \draw[-latex] (1.5, 1.25) -| (b);
      \node at (0, 0) (par) {$\otimes$};
      \node[dot, label={180:$A \otimes B$}] at (0, -0.75) (ab) {};
      \draw[-stealth] (a) -- (par);
      \draw[-stealth] (b) -- (par);
      \draw[-stealth] (par) -- (ab);
    \end{tikzpicture}
\end{minipage}
}
\end{definition}

Note that this translation is not surjective, and some proof-structures do not represent sequent calculus proofs. This is tackled by the \emph{correctness criterion} which characterises those proof-structures that do translate sequent calculus proofs through topological properties. This is discussed in the next section but for the time being we define the notion of proof-net.

\begin{definition}[Proof-nets]
A \emph{proof-net} is a proof-structure $\mathcal{S}$ such that there exists a MLL sequent calculus proof $\pi$ with $\mathcal{S}=\interp{\pi}$.
\end{definition}

% -------------------------------------------
\subsection{Reconstruction of proof-structures and their dynamics}\label{subsec:proofs}
% -------------------------------------------

Let us note that proof-structures can be defined inductively. A proof-structure with only one hyperedge is necessarily an axiom with two conclusions. Then a proof-structure with $n$ hyperedges is either built from the union of two proof-structures with respectively $k$ and $n-k$ hyperedges, or from a proof-structure with $n-1$ hyperedges extended by either a $\otimes$, $\parr$, or $cut$ hyperedge on two of its conclusions. In the following we use this inductive definition to define the \emph{address} of occurrences of atoms in a proof-structure.

\begin{definition}
A vertex $v$ is \emph{above} another vertex $u$ in a proof-structure if there exists a directed path from $v$ to $u$ going through only $\otimes$ and $\parr$ hyperedges.
\end{definition}

\begin{definition}[Address]
\label{def:loc}
We now consider a signature in which there exists at least two unary function symbols $\mathtt{r},\mathtt{l}$ and unary functions symbols $p_A$ for all occurrences\footnote{The set of formulas $\mathcal{F}$ is countable, and there are only finite numbers of occurrences of a given formula in a given proof-structure, hence the set $\mathcal{F}\times\nat$ suffices and is still countable.} of MLL formulas $A$. We define the partial address $\ploc_\mathcal{S}(d,x)$ of an occurrence of atom $d$ in a MLL proof-structure $\mathcal{S}$, with respect to the variable $x$, inductively:
\begin{itemize}
\item $\ploc_\mathcal{S}(d,x)=x$ when $\mathcal{S}$ consists only of an axiom hyperedge;
\item $\ploc_\mathcal{S}(d,x)=\ploc_{\mathcal{S}_i}(d,x)$ if $\mathcal{S}$ is the union of two smaller proof-structures $\mathcal{S}_1, \mathcal{S}_2$ and $d$ appears in $\mathcal{S}_i$;
\item $\ploc_\mathcal{S}(d,x)=\mathtt{l}(\ploc_{\mathcal{S}'}(d,x)$ (resp. $\ploc_\mathcal{S}(d,x)=\mathtt{r}(\ploc_{\mathcal{S}'}(d,x)$) if $\mathcal{S}$ is obtained from $\mathcal{S}'$ by adding a $\otimes$ or $\parr$ hyperedge $e$, and if $d$ is above the left source (resp. the right source) of $e$.
\item $\ploc_\mathcal{S}(d,x)=\ploc_{\mathcal{S}'}(d,x)$ otherwise.
\end{itemize}
The partial address of $d$ is defined with respect to either a conclusion of the structure of the source of a cut hyperedge, which is uniquely defined as the occurrence of formula $c$ such that $d$ is above $c$ and $c$ is not source of either a $\otimes$ or a $\parr$; the \emph{address of $d$} is then defined as the term $\loc_\mathcal{S}(d,x)=p_c(\ploc_\mathcal{S}(d,x))$.
\end{definition}

\begin{notation}
Let $\mathcal{S}$ be a proof-structure. We write $\mathrm{Ax}(\mathcal{S})$ (resp. $\mathrm{Cut}(\mathcal{S})$) the set of axioms (resp. cut) hyperedges in $\mathcal{S}$. Given $e\in\mathrm{Ax}(\mathcal{S})$ ($e\in\mathrm{Cut}(\mathcal{S})$), we write $C_e^l$ and $C_e^r$ the left and right conclusions (resp. sources) of $e$ respectively.
\end{notation}

\begin{definition}[Vehicle]
\label{def:vehicle}
The \emph{vehicle} of proof-structure $\mathcal{S}$ is the constellation $\mathcal{S}^{\mathrm{ax}}$ defined by: $|\mathcal{S}^{\mathrm{ax}}| = \mathrm{Ax}(\mathcal{S})$ and $\alpha_{\mathcal{S}^{\mathrm{ax}}}(e) = \gstar{\loc_{\mathcal{S}}(C_e^l), \loc_{\mathcal{S}}(C_e^r)}$.
\end{definition}

\begin{definition}[Colouration of constellation]
\label{def:colouration}
The colouration $\pm c.\Sigma$ of $\Sigma$ with a colour $c \in \mathcal{C}$ and a polarity $\pm \in \{+, -\}$ changes all the rays of its stars by the following functions $\phi_c^\pm$ defined by: $\phi_c^\pm(+d.t) = \pm c.t \quad \phi_c^\pm(-d.t) = \pm c.t \quad \phi_c^\pm(t) = \pm c.t$.
\end{definition}

\begin{notation}
Let $\Sigma_1,\Sigma_2$ be two constellations. We write $\Sigma_1 \cdot \Sigma_2$ their multiset union, i.e. the coproduct $\rho_{\Sigma_1}+\rho_{\Sigma_2}:\abs{\Sigma_1}+\abs{\Sigma_2}\rightarrow \setofstars$.
\end{notation}

\begin{definition}[Translation of proof-structures]
\label{def:transpn}
The translation of a proof-structure $\mathcal{S}$ into a constellation is defined as $\mathcal{S}^\bigstar = (\mathcal{S}^\mathrm{ax}, \mathcal{S}^\mathrm{cut})$ where $\mathcal{S}^\mathrm{cut}$ is the constellation defined by $|\mathcal{S}^\mathrm{cut}| = \mathrm{Cut}(\mathcal{S})$ and $\alpha_{\mathcal{S}^\mathrm{cut}}(e) = \gstar{p_{C_e^l}(x), p_{C_e^r}(x)}$.
Execution is extended to the translation of proof-structures: $\exec(\mathcal{S}^\star) = \exec (\posc{+c.\Sigma_{ax}} \cdot \negc{-c.\Sigma_{cut}})$.
\end{definition}

\begin{lemma}
\label{def:cutelimlemma}
Let $\mathcal{R}$ be a MLL proof-structure reducing in one step to another proof-structure $\mathcal{S}$. $\exec(\mathcal{R}^\bigstar)=\exec(\mathcal{S}^\bigstar)$.
\end{lemma}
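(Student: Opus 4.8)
The plan is to argue by case analysis on the two cut-elimination rules of Figure~\ref{subfig:cutelimination}, establishing in each case a bijection between the saturated correct tree-like diagrams of $\mathcal{R}^\bigstar$ and those of $\mathcal{S}^\bigstar$ that preserves actualisation. The guiding principle is \emph{locality}: since the translation of Definition~\ref{def:transpn} represents only the axiom and cut hyperedges as stars --- the multiplicative structure being recorded inside the addresses $\loc_\mathcal{S}(\cdot)$ through the prefixes $p_c$, $\mathtt{l}$ and $\mathtt{r}$ --- a single reduction step only alters the cut star(s) at the redex together with the axiom stars whose conclusions lie above the cut formula. I would first isolate exactly these modified stars and check that every other star of $\mathcal{R}^\bigstar$ occurs unchanged in $\mathcal{S}^\bigstar$, so that any diagram avoiding the redex is common to both constellations and actualises identically.

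First I would treat the axiomatic cut. Here an axiom star $\gstar{\loc_\mathcal{R}(C'), \loc_\mathcal{R}(C)}$ has its conclusion $C$ plugged into a cut star $\gstar{p_C(x), p_D(x)}$, and reduction deletes both while rerouting $D$ onto $C'$. The key observation is that the affected axiom ray matches \emph{only} this cut star (its head symbol $p_C$ is unique to the occurrence $C$), so in any saturated diagram that ray is necessarily paired with the cut; contracting the two-edge segment axiom--cut--neighbour then solves a unification problem whose only effect is to forward $D$ onto the address of $C'$, which is precisely the direct link created in $\mathcal{S}$. I would verify that this contraction is a bijection on the saturated diagrams meeting the redex and leaves the free rays, hence the actualised star, unchanged up to renaming.

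The delicate case is the multiplicative cut. Before reduction a single cut star $\gstar{p_{A\otimes B}(x), p_{A^\bot\parr B^\bot}(x)}$ carries the \emph{same} variable $x$ on both rays, while an atom lying above $A$ (resp.\ above $B$) has an address of the form $p_{A\otimes B}(\mathtt{l}(t))$ (resp.\ $p_{A\otimes B}(\mathtt{r}(t))$), and symmetrically on the $\parr$ side. Consequently a connection through this cut forces $x$ to agree on both sides, so an $\mathtt{l}$-addressed ray can only meet an $\mathtt{l}$-addressed ray: the shared variable rules out the ``crossed'' links $A$--$B^\bot$ and $B$--$A^\bot$ exactly as unification does. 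After reduction the redex is replaced by the two cuts $\gstar{p_A(x), p_{A^\bot}(x)}$ and $\gstar{p_B(x), p_{B^\bot}(x)}$, and an atom above $A$ now has address $p_A(t)$, the top prefix $p_{A\otimes B}(\mathtt{l}(\cdot))$ being replaced by $p_A(\cdot)$. I would then check that the underlying unification problem (Definition~\ref{def:correctdiagram}) of a diagram routed through the old cut and that of the corresponding diagram routed through the matching new cut induce the same identifications of variables, and therefore the same substitution on the free rays; this yields the desired bijection preserving actualisation.

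The main obstacle I anticipate is the bookkeeping in this last case: one must show that the rerouting bijection is well-defined on diagrams using several occurrences of the cut star, that it preserves both saturation (maximality with respect to $\sqsubseteq$) and the correctness condition $\mathtt{free}(\delta)\neq\emptyset$, and that the relabelling of axiom addresses across the redex is exactly compensated by the substitution $p_{A\otimes B}(\mathtt{l}(\cdot))\leftrightarrow p_A(\cdot)$ in the matching. Degenerate configurations --- a cut directly between two axioms, an axiom both of whose conclusions are cut, or nested cut formulas --- should be handled by the same local analysis, taking care that the tree-likeness restriction of Section~\ref{sec:computational} keeps the relevant diagrams well-behaved. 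Once the bijection and the equality of actualisations are established in both cases, the equality $\exec(\mathcal{R}^\bigstar)=\exec(\mathcal{S}^\bigstar)$ follows by taking the union over all saturated diagrams.
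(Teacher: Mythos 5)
Your proposal is correct and follows essentially the same route as the paper's proof: a case analysis on the two reduction rules, using uniqueness of addresses to make connexions unambiguous, contracting the axiom--cut--axiom segment in the axiomatic case, and observing in the multiplicative case that the relocalised axiom rays connect to the duplicated (resp.\ split) cut stars identically, so the free rays and hence the actualised stars coincide. Your explicit remark that the shared variable $x$ in the cut star $\gstar{p_{A\otimes B}(x), p_{A^\bot\parr B^\bot}(x)}$ rules out the crossed links $A$--$B^\bot$ and $B$--$A^\bot$ is left implicit in the paper's proof, but it is the same underlying argument.
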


\begin{proof}
% Case ax/cut
Since each stars are unique by definition of $\loc$ the connexions are never ambiguous in the proof. We have two cases of reduction. If we have a cut on an axiom, the corresponding execution will be $\exec(\mathcal{R}^\bigstar) = \exec(\Sigma_\mathcal{R} + \vaxc{A_1}{t}{A_2^\bot}{x} + \vcut{A_2^\bot}{A_3})$. The cut $\vcut{A_2^\bot}{A_3}$ connects $\vaxc{A_1}{t}{A_2^\bot}{x}$ and another axiom $\vaxc{A_3}{u}{A_4^\bot}{v}$. We can connect them all to produce the star $\theta\vaxc{A_1}{t}{A_4^\bot}{v}$ where $\theta = \{x \mapsto u\}$. We identified $A_1$ and $A_3$ two occurrences of $A$. This is exactly how $\mathcal{S}$ is translated thus $\exec(\mathcal{R}^\bigstar) = \exec(\mathcal{S}^\bigstar)$.
% Case par/tensor
If we have a $\parr/\otimes$ cut, the corresponding executions are $\exec(\mathcal{R}^\bigstar) = \exec(\Sigma_\mathcal{R} + \vcut{A \otimes B}{A^\bot \parr B^\bot})$ and
$\exec(\mathcal{S}^\bigstar) = \exec(\Sigma_\mathcal{S} + \vcut{A}{A^\bot} + \vcut{B}{B^\bot})$. The $\parr$ and $\otimes$ vertices have both two premises translated into rays $\posc{+c.p_{A^\bot \parr B^\bot}(\mathtt{l} \cdot u)}, \posc{+c.p_{A^\bot \parr B^\bot}(\mathtt{r} \cdot u')}$, $\posc{+c.p_{A \otimes B}(\mathtt{l} \cdot t)}$ and $\posc{+c.p_{A \otimes B}(\mathtt{r} \cdot t')}$ of axiom stars in $\Sigma_\mathcal{R}$. In order to construct a diagram, the cut has to be duplicated twice to saturate these rays. Since a $\parr$ and $\otimes$ vertex disappear after the reduction, the previous axioms are relocalised into $\posc{+c.p_{A^\bot}(u)}, \posc{+c.p_{B^\bot}(u')}, \posc{+c.p_A(t)}$ and $\posc{+c.p_B(t')}$ in $\Sigma_\mathcal{S}$. The cuts $\vcut{A}{A^\bot}$ and $\vcut{B}{B^\bot}$ connect these rays in the same way as in $\Sigma_\mathcal{R}$. Since $\Sigma_\mathcal{S}$ is $\Sigma_\mathcal{R}$ with a relocalisation of axioms they both have the same free rays. Moreover, all connexions with cuts does not involve free rays, hence $\exec(\mathcal{R}^\bigstar)=\exec(\mathcal{S}^\bigstar)$.
\end{proof}

\begin{theorem}[Dynamics]
\label{def:dynamics}
For a proof-net $\mathcal{R}$ of normal form $\mathcal{S}$, we have $\exec(\mathcal{R}^\bigstar)=\mathcal{S}^\bigstar$.
\end{theorem}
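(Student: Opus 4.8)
The plan is to reduce the statement to the one-step cut-elimination Lemma~\ref{def:cutelimlemma} together with a direct computation of the execution on a cut-free vehicle. First I would recall the classical fact that cut-elimination on MLL proof-nets is strongly normalising and confluent, so that the normal form $\mathcal{S}$ of $\mathcal{R}$ is well-defined, reached after finitely many reduction steps, and is \emph{cut-free}. Writing a reduction sequence $\mathcal{R} = \mathcal{R}_0 \to \mathcal{R}_1 \to \dots \to \mathcal{R}_n = \mathcal{S}$ in which every $\mathcal{R}_i$ is again a proof-structure (cut-elimination preserves the proof-structure constraints), each single step falls under the hypotheses of Lemma~\ref{def:cutelimlemma}, giving $\exec(\mathcal{R}_i^\bigstar) = \exec(\mathcal{R}_{i+1}^\bigstar)$. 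By transitivity over the finitely many steps we obtain $\exec(\mathcal{R}^\bigstar) = \exec(\mathcal{S}^\bigstar)$, so that the whole problem collapses to evaluating $\exec(\mathcal{S}^\bigstar)$.

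The second part computes $\exec(\mathcal{S}^\bigstar)$ for the cut-free normal form. Since $\mathcal{S}$ has no cut hyperedge, its cut component $\mathcal{S}^{\mathrm{cut}}$ is the empty constellation and, by Definition~\ref{def:transpn}, $\exec(\mathcal{S}^\bigstar)$ reduces to the execution of $+c.\mathcal{S}^{\mathrm{ax}}$, whose every ray carries the polarity $+$. As duality (Definition~\ref{def:duality}) requires two rays of opposite polarity, no pair of rays in $+c.\mathcal{S}^{\mathrm{ax}}$ is dual, hence the unification graph $\ugraph{\mathcal{A}}{+c.\mathcal{S}^{\mathrm{ax}}}$ has no edges. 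A graph homomorphism cannot create edges, so the only connected diagrams are the single-vertex ones, each of which is trivially saturated and correct: its underlying unification problem is empty (solved by the identity) and both its rays are free. Its actualisation therefore returns the star unchanged, so $\exec(\mathcal{S}^\bigstar)$ is exactly the multiset of axiom stars of $\mathcal{S}$, i.e. $\mathcal{S}^\bigstar$ itself, up to the bookkeeping of the connection colour $c$ that decorates the surviving free rays.

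Combining the two parts yields $\exec(\mathcal{R}^\bigstar) = \exec(\mathcal{S}^\bigstar) = \mathcal{S}^\bigstar$. I expect the genuinely delicate points to be not the algebra of the argument but two background facts that must be pinned down. The first is that the normal form of a proof-\emph{net} is cut-free and reached in finitely many steps; this is precisely where restricting to proof-nets rather than arbitrary proof-structures is essential, since a non-correct structure may contain a deadlocked axiom--cut cycle that never clears, and then no cut-free normal form exists. The second is the exact identification convention under which the $+c$-coloured output of the execution is read back as the vehicle $\mathcal{S}^\bigstar$. A secondary routine check is that each intermediate $\mathcal{R}_i$ is indeed a proof-structure to which Lemma~\ref{def:cutelimlemma} applies, which is immediate from the shape of the two rewriting rules of Figure~\ref{subfig:cutelimination}.
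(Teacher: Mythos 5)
Your proposal follows essentially the same route as the paper's own proof: induction on the number of cut-elimination steps via Lemma~\ref{def:cutelimlemma}, combined with the observation that $\exec(\mathcal{S}^\bigstar)=\mathcal{S}^\bigstar$ because the normal form $\mathcal{S}$ is cut-free. Your additional detail (no cuts implies all rays of $+c.\mathcal{S}^{\mathrm{ax}}$ are positive, hence an edgeless unification graph and identity execution) and your flagging of the $+c$-colouring bookkeeping are points the paper leaves implicit, but the argument is the same.
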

\begin{proof}
This result is a consequence of lemma \ref{def:cutelimlemma} by induction of the number of cut-elimination steps from $\mathcal{R}$ to $\mathcal{S}$, as well as the fact that $\exec(\mathcal{S}^\bigstar)=\mathcal{S}^\bigstar$ since $\mathcal{S}$ does not contain cuts.
\end{proof}

\begin{example}
\label{ex:mll}
Take the following reduction $\mathcal{S} \leadsto \mathcal{S}'$ of proof-structure:

\begin{center}
\scalebox{0.64}{
\begin{minipage}{0.5\textwidth}
\begin{tikzpicture}
    \node[dot, label={180:$A_1^\bot$}] at (-0.75, 0.75) (a) {};
    \node[dot, label={180:$A_1$}] at (0.75, 0.75) (b) {};
    \node at (0, 0) (par) {$\parr$};
    \node[dot, label={180:$A_1^\bot \parr A_1$}] at (0, -0.75) (ab) {};
    \draw[-stealth] (a) -- (par);
    \draw[-stealth] (b) -- (par);
    \draw[-stealth] (par) -- (ab);

    \node[dot, label={0:$A_2$}] at (1.75, 0.75) (c) {};
    \node[dot, label={0:$A_3^\bot$}] at (3.25, 0.75) (d) {};
    \node at (2.5, 0) (tens) {$\otimes$};
    \node[dot, label={0:$A_2 \otimes A_3^\bot$}] at (2.5, -0.75) (cd) {};
    \draw[-stealth] (c) -- (tens);
    \draw[-stealth] (d) -- (tens);
    \draw[-stealth] (tens) -- (cd);

    \node at (1.25, -1.25) (cut) {cut};
    \draw[-latex, rounded corners=5pt] (ab) |- (cut);
    \draw[-latex, rounded corners=5pt] (cd) |- (cut);
    \node at (0, 1.5) (ax) {ax};
    \draw[-latex, rounded corners=5pt] (ax) -| (a);
    \draw[-latex, rounded corners=5pt] (ax) -| (c);
    \node at (2, 1.75) (ax) {ax};
    \draw[-latex, rounded corners=5pt] (ax) -| (b);
    \draw[-latex, rounded corners=5pt] (ax) -| (d);
\end{tikzpicture}
\end{minipage}
$\leadsto\qquad$
\begin{minipage}{0.45\textwidth}
\begin{tikzpicture}
    \node[dot, label={180:$A_1^\bot$}] at (-0.75, 0.75) (a) {};
    \node[dot, label={180:$A_1$}] at (0.75, 0.75) (b) {};

    \node[dot, label={180:$A_2$}] at (1.75, 0.75) (c) {};
    \node[dot, label={180:$A_3^\bot$}] at (3.25, 0.75) (d) {};

    \node at (0, 0) (cut) {cut};
    \draw[-latex, rounded corners=5pt] (a) |- (cut);
    \draw[-latex, rounded corners=5pt] (c) |- (cut);
    \node at (2, -0.5) (cut) {cut};
    \draw[-latex, rounded corners=5pt] (b) |- (cut);
    \draw[-latex, rounded corners=5pt] (d) |- (cut);
    \node at (0, 1.5) (ax) {ax};
    \draw[-latex, rounded corners=5pt] (ax) -| (a);
    \draw[-latex, rounded corners=5pt] (ax) -| (c);
    \node at (2, 1.75) (ax) {ax};
    \draw[-latex, rounded corners=5pt] (ax) -| (b);
    \draw[-latex, rounded corners=5pt] (ax) -| (d);
\end{tikzpicture}
\end{minipage},} \end{center}
then $\exec(\mathcal{S}^\bigstar) = \exec(\vaxc{A_1^\bot \parr A_1}{\mathtt{l} \cdot x}
     {A_2 \otimes A_3^\bot}{\mathtt{r} \cdot x}+$
     $\vcut{A_1^\bot \parr A_1}{A_2 \otimes A_3^\bot}+
\vaxc{A_2 \otimes A_2^\bot}{\mathtt{l} \cdot x}
{A_1^\bot \parr A_1}{\mathtt{r} \cdot x})$ and $\exec(\mathcal{S}^\bigstar) = \emptyset$. This is equal to $\exec(\mathcal{S}'^\bigstar)$ as the only correct saturated diagram for $\mathcal{S}^\bigstar$ (made by duplicating the star
representing the cut) has no free rays. If we look at the following reduction $\mathcal{S} \leadsto^* \mathcal{S}'$ instead:

\begin{center}
\scalebox{0.64}{
\begin{minipage}{0.6\textwidth}
\begin{tikzpicture}
    \node[dot, label={180:$A_1^\bot$}] at (-0.75, 0.75) (a) {};
    \node[dot, label={180:$A_1$}] at (0.75, 0.75) (b) {};
    \node at (0, 0) (par) {$\parr$};
    \node[dot, label={180:$A_1^\bot \parr A_1$}] at (0, -0.75) (ab) {};
    \draw[-stealth] (a) -- (par);
    \draw[-stealth] (b) -- (par);
    \draw[-stealth] (par) -- (ab);

    \node[dot, label={180:$A_2^\bot$}] at (1.75, 0.75) (f1) {};
    \node[dot, label={180:$A_3$}] at (6.25, 0.75) (f2) {};

    \node[dot, label={180:$A_2$}] at (3.25, 0.75) (c) {};
    \node[dot, label={180:$A_3^\bot$}] at (4.75, 0.75) (d) {};
    \node at (4, 0) (tens) {$\otimes$};
    \node[dot, label={180:$A_2 \otimes A_3^\bot$}] at (4, -0.75) (cd) {};
    \draw[-stealth] (c) -- (tens);
    \draw[-stealth] (d) -- (tens);
    \draw[-stealth] (tens) -- (cd);

    \node at (1.75, -1.25) (cut) {cut};
    \draw[-latex, rounded corners=5pt] (ab) |- (cut);
    \draw[-latex, rounded corners=5pt] (cd) |- (cut);
    \node at (0, 1.5) (ax) {ax};
    \draw[-latex, rounded corners=5pt] (ax) -| (a);
    \draw[-latex, rounded corners=5pt] (ax) -| (b);
    \node at (2.5, 1.5) (ax) {ax};
    \draw[-latex, rounded corners=5pt] (ax) -| (f1);
    \draw[-latex, rounded corners=5pt] (ax) -| (c);
    \node at (5.5, 1.5) (ax) {ax};
    \draw[-latex, rounded corners=5pt] (ax) -| (f2);
    \draw[-latex, rounded corners=5pt] (ax) -| (d);
\end{tikzpicture}
\end{minipage}
$\leadsto\quad$
\begin{minipage}{0.55\textwidth}
\begin{tikzpicture}
    \node[dot, label={180:$A_1^\bot$}] at (-0.75, 0.75) (a) {};
    \node[dot, label={180:$A_1$}] at (0.75, 0.75) (b) {};

    \node[dot, label={180:$A_2^\bot$}] at (1.75, 0.75) (f1) {};
    \node[dot, label={180:$A_3$}] at (6.25, 0.75) (f2) {};

    \node[dot, label={180:$A_2$}] at (3.25, 0.75) (c) {};
    \node[dot, label={180:$A_2^\bot$}] at (4.75, 0.75) (d) {};

    \node at (0, -0.75) (cut) {cut};
    \draw[-latex, rounded corners=5pt] (a) |- (cut);
    \draw[-latex, rounded corners=5pt] (c) |- (cut);
    \node at (4, 0) (cut) {cut};
    \draw[-latex, rounded corners=5pt] (b) |- (cut);
    \draw[-latex, rounded corners=5pt] (d) |- (cut);
    \node at (0, 1.5) (ax) {ax};
    \draw[-latex, rounded corners=5pt] (ax) -| (a);
    \draw[-latex, rounded corners=5pt] (ax) -| (b);
    \node at (2.5, 1.5) (ax) {ax};
    \draw[-latex, rounded corners=5pt] (ax) -| (f1);
    \draw[-latex, rounded corners=5pt] (ax) -| (c);
    \node at (5.5, 1.5) (ax) {ax};
    \draw[-latex, rounded corners=5pt] (ax) -| (f2);
    \draw[-latex, rounded corners=5pt] (ax) -| (d);
\end{tikzpicture}
\end{minipage}
$\leadsto^*\quad$
\begin{minipage}{0.1\textwidth}
\begin{tikzpicture}
    \node[dot, label={180:$A_2^\bot$}] at (-0.75, 0.75) (a) {};
    \node[dot, label={180:$A_3$}] at (0.75, 0.75) (b) {};
    \node at (0, 1.5) (ax) {ax};
    \draw[-latex, rounded corners=5pt] (ax) -| (a);
    \draw[-latex, rounded corners=5pt] (ax) -| (b);
\end{tikzpicture}
\end{minipage}
} \end{center}
Then $\exec(\mathcal{S}^\bigstar) =
\vaxc{A_1^\bot \parr A_1}{\mathtt{l} \cdot x}
     {A_1^\bot \parr A_1}{\mathtt{r} \cdot x}
+
\vaxc{A_2^\bot}{x}{A_2 \otimes A_3^\bot}{\mathtt{l} \cdot x}
+$ \\
$\vaxc{A_2 \otimes A_3^\bot}{\mathtt{r} \cdot x}{A_3}{x}
+
\vcut{A_1^\bot \parr A_1}{A_2 \otimes A_3^\bot})$. Computing the executions, one obtains $\exec(\mathcal{S}^\bigstar) = \vaxc{A_2^\bot}{x}{A_3}{x} = \mathcal{S}'^\bigstar$.
\end{example}

% ==============================================================
\section{Interpreting the logical content of MLL}\label{sec:logical}
% ==============================================================

% -------------------------------------------
\subsection{Correctness of proof-structures}\label{subsec:correctionps}
% -------------------------------------------

As mentioned above, proof-structures are more permissive than sequent calculus proofs. In other words, some proof-structures do not represent proofs, and the syntax of MLL is therefore restricted to proof-nets, i.e. proof-structures that do represent sequent calculus proofs. A beautiful result of Girard, analysed by many subsequent works \cite{danos1989structure,danos1990logique,lafont1995proof,murawski2000dominator,de2011correctness}, is that those proof-structures that are proof-nets can be characterised by a topological property called a \emph{correctness criterion}. While Girard's original criterion, called the long-trip criterion \cite{girard1987linear}, is about the set of walks in a proof-structure, we will here work with Danos and Regnier's simplified criterion  \cite{danos1989structure}.

\begin{notation}
Given a proof-structure $\mathcal{S}=(V,E,s,t,\ell_V,\ell_E)$, we write $\parr(\mathcal{S})$ the subset $P\subseteq E$ of $\parr$-labelled edges, i.e. $\parr(\mathcal{S})=\{e\in E\mid \ell_E(e)=\parr \}$.
\end{notation}

We now define \emph{correction graphs}, which are the undirected hypergraphs obtained by removing one source of each $\parr$-labelled edge. The Danos-Regnier criterion then states that a proof-structure is a proof-net if and only if all correction graphs are trees.

\begin{definition}[Correction Graph]
\label{def:switching}\label{def:inducedgraph}
Let $\mathcal{S}=(V,E,s,t,\ell_V,\ell_E)$ be a proof-structure. A switching is a map $\sigma:\parr(\mathcal{S})\rightarrow\{l,r\}$. The correction hypergraph $\mathcal{S}_\sigma$ is the undirected hypergraph $(V,E,s')$ induced by the switching $\sigma$ is defined by letting $s'(e)=\{v\}\cup t(e)$ where $v$ is the left (resp. right) source of $v$ in $\mathcal{S}$ when $e\in\parr(\mathcal{S})$ and $\sigma(e)=l$ (resp. $\sigma(e)=r$), and $s'(e)=s(e)\cup t(e)$ for $e\not\in\parr(\mathcal{S})$.
\end{definition}

\begin{theorem}[Danos-Regnier correctness criterion \cite{danos1989structure}]
\label{thm:correctness}
A proof-structure $\mathcal{S}$ is a proof-net if and only if $\mathcal{S}_{\sigma}$ is a tree for all switching $\sigma$.
\end{theorem}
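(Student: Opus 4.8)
The plan is to prove the two implications separately, both by induction, with the reverse implication being substantially harder. For the forward direction (every proof-net satisfies the criterion), I would induct on a sequent calculus derivation $\pi$ with $\mathcal{S}=\interp{\pi}$, showing that every rule preserves the invariant that $\mathcal{S}_\sigma$ is a tree for all switchings $\sigma$. The axiom case yields a single edge between two vertices, which is a tree. For a cut or a $\otimes$ rule, a switching of the conclusion restricts to switchings of the two premise structures, whose correction graphs are trees by the induction hypothesis; the new cut-edge (resp. $\otimes$-hyperedge) joins these two disjoint trees — in the $\otimes$ case together with the fresh output vertex — along a single (hyper)edge, which keeps the result connected and acyclic. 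For the $\parr$ rule, the switching selects one of the two premises to which the fresh $A\parr B$ vertex is attached; since this vertex is a leaf of the new correction graph, adding it and its single incident edge to the tree given by the induction hypothesis again yields a tree.

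For the reverse direction (every proof-structure passing the criterion is a proof-net), I would induct on the number of links. If $\mathcal{S}$ has a terminal $\parr$ link with output $A\parr B$, I remove it: in any correction graph of the smaller structure $\mathcal{S}'$, this amounts to deleting the leaf $A\parr B$ together with its unique incident edge from a tree, so $\mathcal{S}'$ still passes the criterion, is a proof-net by the induction hypothesis, and the $\parr$ rule reconstructs $\mathcal{S}$; a terminal cut is handled identically through the cut rule.

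The crucial case is when every terminal link is a $\otimes$. Here I would invoke the \emph{splitting tensor lemma}: a correct structure all of whose terminal links are tensors admits a terminal $\otimes$ link $t$, with premises $A$ and $B$, that is \emph{splitting}, i.e. removing $t$ and its output disconnects $\mathcal{S}$ into two substructures $\mathcal{S}_1\ni A$ and $\mathcal{S}_2\ni B$ that each still pass the criterion. Granting the lemma, the induction hypothesis makes $\mathcal{S}_1$ and $\mathcal{S}_2$ proof-nets, and the $\otimes$ rule glues their derivations into a derivation of $\mathcal{S}$.

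The main obstacle is the splitting tensor lemma. The easy half is the observation that for a terminal tensor $t$ with premises $A,B$, deleting its arity-three hyperedge from any tree $\mathcal{S}_\sigma$ breaks it into three components, isolating the output and separating $A$ from $B$; hence $A$ and $B$ lie in distinct components of $\mathcal{S}_\sigma$ minus $t$ for \emph{every} switching $\sigma$. The difficulty is that the induced bipartition of the remaining vertices may depend on $\sigma$, so an arbitrary terminal tensor need not split the whole structure, and one must exhibit at least one that does. I would establish this via the classical notion of the \emph{empire} of a formula occurrence — the largest correct substructure having it as a conclusion — together with an extremal argument: choosing a terminal tensor that maximises the empire of one of its premises forces the bipartition to be switching-independent, following Danos and Regnier \cite{danos1989structure}.
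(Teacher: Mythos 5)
The paper itself does not prove this theorem; it is quoted from Danos and Regnier \cite{danos1989structure}, so your sketch must be measured against the classical sequentialisation argument of the cited literature --- which is indeed the route you take. Your forward direction (induction on the sequent derivation, checking that each rule preserves tree-ness of all correction graphs) is correct, and in the reverse direction you correctly identify the real crux: a terminal tensor always separates its premises in every switched graph, but the induced bipartition of the vertices may vary with the switching, which is exactly what the splitting/empire argument must rule out.

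There is, however, one genuine gap: the claim that ``a terminal cut is handled identically through the cut rule'' to the terminal $\parr$ case. It is not. Removing a terminal $\parr$ deletes a leaf vertex together with its single incident edge from each correction tree, so the smaller structure still satisfies the criterion and your induction hypothesis applies. Removing a cut deletes an edge both of whose endpoints remain, so every correction graph of the smaller structure becomes a forest with exactly two components; the smaller structure \emph{never} passes the criterion, and the induction hypothesis cannot be invoked. Worse, to reconstruct $\mathcal{S}$ by the cut rule you need the structure minus the cut to decompose into two disjoint substructures, one containing each cut formula --- and whether the two tree components determine a switching-independent bipartition is precisely the splitting problem you flagged for tensors. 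A cut behaves exactly like a tensor here, and terminal cuts can fail to be splitting just as terminal tensors can. The standard repair is either to fold cuts into the splitting lemma (when no terminal $\parr$ exists, \emph{some} terminal link among the tensors \emph{and cuts} is splitting), or, more simply, to replace every cut hyperedge by a $\otimes$ hyperedge with a fresh conclusion vertex --- this preserves the criterion in both directions, since each correction graph changes only by attaching one leaf --- then sequentialise the cut-free-style structure and convert the corresponding $\otimes$ rules back into cut rules. With that repair, plus the trivial base case of a structure consisting of a single axiom (which your case analysis omits), your sketch coincides with the proof of the cited reference.
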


\begin{remark}
Each correction graphs can be defined as the union of two graphs: one which comes from the axioms and is uniquely defined by the proof-structure, and one which is obtained from edges that are not axioms and is dependent on the switching. This point of view allows for an \emph{interactive} formulation of the correctness criterion in which the set of axioms is \emph{tested} against graphs corresponding to switchings \cite{seiller2016computational}.
\end{remark}

% -------------------------------------------
\subsection{Reconstruction of correctness}\label{subsec:stellcorr}
% -------------------------------------------

We have already seen in the previous section how constellations can represent proofs. We now explain how to define tests to allow for an interactive, internal, representation of the correctness criterion. This is done by translating the Danos-Regnier criterion within the framework of stellar resolution.

We now use two colours $c$ (computation) and $t$ (testing). A vehicle will be coloured with the colour $c$ when we want its execution by connecting it with cuts and it will be coloured with the colour $t$ when being subject to tests against ordeals.

\begin{definition}[Ordeal]
\label{def:ordeals}
Let $\mathcal{S}$ be a MLL proof-structure and $\sigma$ one of its switchings. The ordeal $\mathcal{S}_\sigma^\bigstar$ associated to $\mathcal{S}_\sigma$ is the constellation obtained by translating all the vertices of $\mathcal{S}_\sigma^\bigstar$ in the following way:
\begin{itemize}
    %% Ax
    \item $(C_e^d)^\bigstar = \gax{S}{C_e^d}$ for $e \in \mathrm{Ax}(\mathcal{S})$,
    %% Par
    \item $(A \parr_L B)^\bigstar = \gparl{A}{B}{A \parr B}$,
    \item $(A \parr_R B)^\bigstar = \gparr{A}{B}{A \parr B}$,
    %% Tensor
    \item $(A \otimes B)^\bigstar = \gtens{A}{B}{A \otimes B}$,
    \item We add $\gconc{A}$ for each conclusion $A$
\end{itemize}
We define $q_A(t)$ as a shortcut for $p_A(\mathtt{g} \cdot t)$ with $\mathtt{g}$ a constant only used for that definition.
\end{definition}

\begin{theorem}[Stellar correctness criterion]
\label{thm:correctness2}
A proof-structure $\mathcal{S}$ is a proof-net if and only if for all switching $\sigma$, we have $\exec(\post{+t.\mathcal{S}^\mathrm{ax}}\cdot \negc{-c.\mathcal{S}^\mathrm{cut}} \cdot \mathcal{S}_\sigma^\bigstar) = \gstar{p_{A_1}(x), ..., p_{A_n}(x)}$ where $A_1, ..., A_n$ are the conclusions of $\mathcal{S}$.
\end{theorem}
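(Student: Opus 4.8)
The plan is to reduce the statement, for each fixed switching, to the Danos--Regnier criterion (Theorem \ref{thm:correctness}) by setting up a precise dictionary between the combined constellation $\post{+t.\mathcal{S}^{\mathrm{ax}}}\cdot\negc{-c.\mathcal{S}^{\mathrm{cut}}}\cdot\mathcal{S}_\sigma^\bigstar$ and the correction graph $\mathcal{S}_\sigma$ of Definition \ref{def:switching}. The first step is to compute the unification graph (Definition \ref{def:ugraph}) of this constellation and check that, up to the incidence encoding of hyperedges as stars, it is isomorphic to $\mathcal{S}_\sigma$: the vehicle stars realise the axiom links, the cut stars realise the cut links (through the colour $c$), the ordeal stars $\gtens{\cdot}{\cdot}{\cdot}$ and $\gparl{\cdot}{\cdot}{\cdot}/\gparr{\cdot}{\cdot}{\cdot}$ realise the $\otimes$ and switched $\parr$ hyperedges (through the colour $t$), and the stars $\gconc{\cdot}$ provide the leaves. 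Since the addresses $\loc_{\mathcal{S}}$ and the symbols $p_{(\cdot)}$ are occurrence-specific, each ray admits a \emph{unique} dual (as already observed in the proof of Lemma \ref{def:cutelimlemma}); combined with ray-injectivity this forbids any star from occurring twice, so that tree-like diagrams over the combined constellation are in bijection with subtrees of the incidence graph of $\mathcal{S}_\sigma$. Under this encoding, ``$\mathcal{S}_\sigma$ is a tree'' transfers faithfully to ``the unification graph is a tree''.

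With the dictionary in place, I would characterise the saturated tree-like diagrams (Definition \ref{def:diagramorder}). In a connected graph every maximal subtree is spanning, so when $\mathcal{S}_\sigma$ is a tree there is exactly one saturated tree-like diagram, namely $\mathcal{S}_\sigma$ itself; it is correct in the sense of Definition \ref{def:correctdiagram} because unification along a tree merely propagates a single variable, and its only free rays are the conclusion rays. Conversely, if $\mathcal{S}_\sigma$ is disconnected, maximal subtrees are confined to a single component and the normal form (Definition \ref{def:normalalization}) contains at least two distinct actualised stars; and if $\mathcal{S}_\sigma$ is connected but contains a cycle, every spanning tree omits at least one edge, so the two rays of that edge survive as \emph{internal} free rays of the actualisation. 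In either failure case the output differs from the single star $\gstar{p_{A_1}(x),\dots,p_{A_n}(x)}$ whose free rays are exactly the conclusions.

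The third step is the explicit actualisation in the favourable case: propagating the unique solution of $\mathcal{P}(\delta)$ through the tree identifies all variables carried by the axiom, cut and connective stars, and applying it to the surviving rays of the $\gconc{\cdot}$ stars yields precisely $\gstar{p_{A_1}(x),\dots,p_{A_n}(x)}$ with a single shared variable $x$. This establishes, for each $\sigma$, the equivalence between ``$\mathcal{S}_\sigma$ is a tree'' and ``$\exec(\cdots)=\gstar{p_{A_1}(x),\dots,p_{A_n}(x)}$''. Since the ordeals $\mathcal{S}_\sigma^\bigstar$ range exactly over the switchings of $\mathcal{S}$, quantifying over all $\sigma$ and invoking Theorem \ref{thm:correctness} then delivers the theorem.

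I expect the main obstacle to be the cyclic case: ruling out that a cycle could still produce the target requires showing that no saturated tree-like diagram can simultaneously span $\mathcal{S}_\sigma$ and pair every internal ray. Here the ray-injectivity constraint of Definition \ref{def:diagram} must be used to forbid closing the cycle (which would break tree-likeness), while the rigidity of addresses—each ray having a unique dual—forbids re-pairing the broken rays through a duplicated star. A secondary technical point is the bookkeeping of the hyperedge-to-star incidence encoding, so that connectivity and acyclicity of the hypergraph $\mathcal{S}_\sigma$ transfer exactly to the unification graph; this is routine but must be stated carefully to make the bijection with subtrees precise.
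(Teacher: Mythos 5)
Your overall strategy coincides with the paper's: fix a switching, build a dictionary between the unification graph of $\post{+t.\mathcal{S}^{\mathrm{ax}}}\cdot\negc{-c.\mathcal{S}^{\mathrm{cut}}}\cdot\mathcal{S}_\sigma^\bigstar$ and the correction graph $\mathcal{S}_\sigma$, and conclude via Theorem~\ref{thm:correctness}; your analysis of the tree case and of the disconnected case matches the paper's sketch. However, there is a genuine gap, and it sits exactly where you predicted the difficulty would be. It originates in your first step: the claim that occurrence-specific addresses together with ray-injectivity ``forbid any star from occurring twice'', so that tree-like diagrams are in bijection with subtrees of the unification graph. This is false. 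A diagram is a graph \emph{homomorphism} $\delta : D_\delta \rightarrow \ugraph{\mathcal{A}}{\Sigma}$ (Definition~\ref{def:diagram}), not a subgraph inclusion: distinct vertices of $D_\delta$ may be mapped to the same star, i.e.\ stars can be duplicated. Ray-injectivity is a condition local to each vertex of $D_\delta$ and does not exclude this; indeed the proof of Lemma~\ref{def:cutelimlemma} relies on precisely such duplication (``the cut has to be duplicated twice''). Uniqueness of duals does not help either: the unique dual of a ray is a ray of a specific \emph{star}, but that star may be instantiated by a fresh vertex of the diagram, so duplication precisely \emph{does} allow re-pairing the two rays left open by removing an edge of a cycle --- the opposite of what your last paragraph asserts. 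Your bijection is valid exactly when the unification graph is acyclic as a multigraph (there, a ray-injective homomorphism from a tree is locally injective on edges, hence injective), but acyclicity is what you must \emph{prove} in the direction where you use the bijection, so the argument is circular at the critical point.

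What actually rules out the cyclic case is not that winding/re-pairing is impossible, but that it never saturates: wrapping a tree-like diagram around a cycle always leaves freshly opened polarised rays at its extremities, so every such diagram extends properly, and one obtains correct tree-like diagrams of unbounded size. This is the phenomenon the paper invokes in its $(\Leftarrow)$ direction (``otherwise we would have infinitely many correct saturated diagrams''), and it is the content of Corollary~\ref{cor:mixcorrection} relating acyclicity of all correction graphs to strong normalisation of the constellation. The repair is local: replace your ``every spanning tree omits an edge, so two internal rays survive'' argument by the observation that a cycle in $\mathcal{S}_\sigma$ transfers to a cycle in the unification graph, which yields an unbounded family of correct diagrams by winding, so the execution cannot be the single finite star $\gstar{p_{A_1}(x), \dots, p_{A_n}(x)}$; conversely, when the execution \emph{is} that single star, the unification graph, hence $\mathcal{S}_\sigma$, is acyclic, and uniqueness of the output star forces connectedness. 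With that substitution your proof aligns with the paper's.
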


\begin{proof}[Proof (Sketch).]
% =>
$(\Rightarrow)$ One can observe that an ordeal $\mathcal{S}_\sigma^\bigstar$ perfectly reproduces the structure of $Sub = (V^{\mathcal{S}_\sigma}, E^{\mathcal{S}_\sigma}-\mathrm{Ax}(\mathcal{S}), s')$ which is always a forest (because it is made of the syntactic tree of $A_1, ..., A_n$) and so is $\ugraph{c,t}{(\negc{-c.\mathcal{S}^\mathrm{cut}} \cdot \mathcal{S}_\sigma^\bigstar)}$. We can construct a saturated diagram of $\negc{-c.\mathcal{S}^\mathrm{cut}} \cdot \mathcal{S}_\sigma^\bigstar$ by connecting all its stars and by cancelling the free rays not corresponding to conclusions thanks to the cuts in $\negc{-c.\mathcal{S}^\mathrm{cut}}$. This connexion corresponds to the contraction of a forest and we end up with the constellation $\Sigma_{ord} = \sigma_{A_1} + ... + \sigma_{A_n}$ where $p_{A_i}(x), \negt{-t.B_1(t_1)}, ..., \negt{-t.B_m(t_m)} \in \sigma_{A_i}$ for each $B_1, ..., B_m$ subformulas of $A_i$ which are conclusions of axioms. If $\mathcal{S}$ is a proof-net then $\mathcal{S}_\sigma$ must be tree. In this case, $(\ugraph{c,t}{\post{+t.\mathcal{S}^\mathrm{ax}}\cdot\Sigma_{ord})}$ must by acyclic (otherwise, we would have a cycle in $\mathcal{S}_\sigma$). Since all matchings are exact (i.e produce equations $t \doteq t$), we have a unique correct saturated diagram with free rays $p_{A_1}, ..., p_{A_n}$.
% <=
$(\Leftarrow)$ If $\exec(\post{+t.\mathcal{S}^\mathrm{ax}}\cdot \negc{-c.\mathcal{S}^\mathrm{cut}} \cdot \mathcal{S}_\sigma^\bigstar) = \gstar{p_{A_1}(x), ..., p_{A_n}(x)}$, it means that $\ugraph{c,t}{(\post{+t.\mathcal{S}^\mathrm{ax}}\cdot \negc{-c.\mathcal{S}^\mathrm{cut}} \cdot \mathcal{S}_\sigma^\bigstar)}$ is acyclic (otherwise we would have infinitely many correct saturated diagrams because only axioms cause cycles). Since the ordeal together with the translation of axioms are designed to reproduce the structure of $\mathcal{S}_\sigma$, then $\mathcal{S}_\sigma$ must by acyclic. Since the normalisation produces only one unique star, $\mathcal{S}_\sigma$ must be connected. Therefore, $\mathcal{S}_\sigma$ is a tree and $\mathcal{S}$ is a proof-net.
\end{proof}

\begin{corollary}[Corollary of the proof of Theorem \ref{thm:correctness2}]\label{cor:mixcorrection}
All correction graphs of a proof-structure $\mathcal{S}$ are acyclic if and only if $\post{+t.\mathcal{S}^\mathrm{ax}}\cdot \negc{-c.\mathcal{S}^\mathrm{cut}} \cdot \mathcal{S}_\sigma^\bigstar$ is strongly normalising.
\end{corollary}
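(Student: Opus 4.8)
The plan is to extract the corollary directly from the machinery developed in the proof of Theorem~\ref{thm:correctness2}, which already isolates the precise relationship between acyclicity of correction graphs and the structure of correct saturated diagrams. The key observation, made explicit in both directions of that proof, is that cycles in the normalisation arise \emph{only} from axioms: a cycle in $\ugraph{c,t}{(\post{+t.\mathcal{S}^\mathrm{ax}}\cdot \negc{-c.\mathcal{S}^\mathrm{cut}} \cdot \mathcal{S}_\sigma^\bigstar)}$ corresponds exactly to a cycle in the correction graph $\mathcal{S}_\sigma$, and conversely. I would therefore reduce strong normalisation to an acyclicity statement about this unification graph and then transport it to $\mathcal{S}_\sigma$ via the structural correspondence already established.

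First I would recall from Definition~\ref{def:strongnormal} that $\post{+t.\mathcal{S}^\mathrm{ax}}\cdot \negc{-c.\mathcal{S}^\mathrm{cut}} \cdot \mathcal{S}_\sigma^\bigstar$ is strongly normalising precisely when $\exec$ of this constellation is finite, i.e.\ when only finitely many saturated diagrams contribute. For the direction ($\Rightarrow$), assuming all correction graphs are acyclic, I would argue as in the forward direction of the theorem: the ordeal together with the coloured cuts reproduce the forest $Sub$, and adding the axioms $\post{+t.\mathcal{S}^\mathrm{ax}}$ to an acyclic $\mathcal{S}_\sigma$ keeps $\ugraph{c,t}{(\post{+t.\mathcal{S}^\mathrm{ax}}\cdot\Sigma_{ord})}$ acyclic. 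An acyclic unification graph admits only finitely many distinct saturated diagrams (no star can be revisited along a cycle to produce unboundedly large diagrams), so $\exec$ is finite and the constellation is strongly normalising. For the converse ($\Leftarrow$), I would use the contrapositive exactly as in the theorem's proof: if some correction graph $\mathcal{S}_\sigma$ contains a cycle, then since only axioms cause cycles, the unification graph has a cycle, which yields infinitely many correct saturated diagrams of unbounded size, hence an infinite $\exec$ and failure of strong normalisation.

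The main point to handle carefully is that the corollary quantifies over \emph{all} correction graphs (``all correction graphs of $\mathcal{S}$ are acyclic''), whereas the displayed strong-normalisation statement fixes a single switching $\sigma$ in its notation. I would clarify that the intended reading matches the per-switching correspondence: for each $\sigma$, acyclicity of $\mathcal{S}_\sigma$ is equivalent to strong normalisation of the associated constellation, and the universal statement is the conjunction over all switchings. The essential content is thus the biconditional between acyclicity of $\mathcal{S}_\sigma$ and finiteness of $\exec$, which is exactly what the theorem's proof establishes; the corollary merely drops the connectedness half of the tree condition and replaces ``produces a single star'' by ``strongly normalising.''

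The step I expect to require the most care is justifying the quantitative claim that a cyclic unification graph forces \emph{infinitely many} saturated diagrams rather than merely one large one. The crux is that traversing a cycle through axiom edges can be iterated arbitrarily many times --- each iteration reusing fresh renamed copies of the axiom stars via the disjoint substitutions $\theta(v,\_)$ of Definition~\ref{def:underproblem} --- so that diagrams of every size $k$ remain correct and saturated. Establishing that these iterated traversals all pass the underlying unification problem (so that correctness genuinely persists, exploiting that axiom matchings are exact, producing equations of the form $t \doteq t$) is the delicate combinatorial heart of the argument, but it is already implicit in the remark ``only axioms cause cycles'' invoked in the proof of Theorem~\ref{thm:correctness2}.
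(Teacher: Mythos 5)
Your proposal is correct and takes essentially the same route as the paper: the corollary is obtained exactly as you describe, by extracting from the proof of Theorem~\ref{thm:correctness2} the per-switching correspondence between cycles in $\mathcal{S}_\sigma$ and cycles in the unification graph, so that acyclicity gives finitely many correct saturated diagrams (hence strong normalisation) while a cycle gives infinitely many (hence failure of strong normalisation), discarding the connectedness/single-star half of the criterion. Your reading of the quantification as a conjunction of per-switching equivalences, and your identification of the delicate point --- that exact axiom matchings allow cyclic traversals to be iterated, which the paper leaves implicit in the remark that ``only axioms cause cycles'' --- both match the paper's intended argument.
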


\begin{example}
\label{ex:ordeals1}
We have two proof-structures for a chosen switching together with their corresponding ordeal\footnote{The purely esthetical use of the fractional notation is used to ease the reading of the stars.}:

\begin{minipage}{0.275\textwidth}
    \scalebox{0.75}{\begin{tikzpicture}
      % tensor
      \node[dot, label={180:$A$}] at (-0.75, 0.75) (a) {};
      \node[dot, label={180:$B$}] at (0.75, 0.75) (b) {};
      \node at (0, 0) (tens) {$\otimes$};
      \node[dot, label={180:$A \otimes B$}] at (0, -0.75) (ab) {};
      \draw[-] (a) -- (tens);
      \draw[-] (b) -- (tens);
      \draw[-] (tens) -- (ab);

      % par
      \node[dot, label={0:$A^\bot$}] at (1.25, 0.75) (ad) {};
      \node[dot, label={0:$B^\bot$}] at (2.75, 0.75) (bd) {};
      \node at (2, 0) (par) {$\parr_L$};
      \node[dot, label={180:$A^\bot \parr B^\bot$}] at (2, -0.75) (adbd) {};
      \draw[-] (bd) -- (par);
      \draw[-] (par) -- (adbd);

      % ax
      \node at (0, 1.25) (ax1) {ax};
      \node at (2, 1.5) (ax2) {ax};
      \draw[-, rounded corners=5pt] (ax1) -| (a);
      \draw[-, rounded corners=5pt] (ax1) -| (ad);
      \draw[-, rounded corners=5pt] (ax2) -| (b);
      \draw[-, rounded corners=5pt] (ax2) -| (bd);
    \end{tikzpicture}}
\end{minipage}
\begin{minipage}{0.7\textwidth}
    $\dgloc{A \otimes B}{\mathtt{l} \cdot x}{A}+
     \dgloc{A \parr B}{\mathtt{l} \cdot x}{B}+
     \dgloc{A \otimes B}{\mathtt{r} \cdot x}{A^\bot}+
     \dgloc{A \parr B}{\mathtt{r} \cdot x}{B^\bot}+$ \\
    $\dgtens{A}{B}{A \otimes B}+\dgparl{A^\bot}{B^\bot}{A^\bot \parr B^\bot}+$ \\
    $\dgconc{A \otimes B}+\dgconc{A^\bot \parr B^\bot}$
\end{minipage}

\vspace{1em}
\begin{minipage}{0.2\textwidth}
    \scalebox{0.75}{\begin{tikzpicture}
      % tensor
      \node[dot, label={180:$A$}] at (-0.75, 0.75) (a) {};
      \node[dot, label={180:$A^\bot$}] at (0.75, 0.75) (b) {};
      \node at (0, 0) (tens) {$\otimes$};
      \node[dot, label={180:$A \otimes A^\bot$}] at (0, -0.75) (ab) {};
      \draw[-] (a) -- (tens);
      \draw[-] (b) -- (tens);
      \draw[-] (tens) -- (ab);

      % ax
      \node at (0, 1.25) (ax) {ax};
      \draw[-, rounded corners=5pt] (ax) -| (a);
      \draw[-, rounded corners=5pt] (ax) -| (b);
  \end{tikzpicture}}
\end{minipage}
\begin{minipage}{0.7\textwidth}
    $\dgloc{A \otimes A^\bot}{\mathtt{l} \cdot x}{A}+
     \dgloc{A \otimes A^\bot}{\mathtt{r} \cdot x}{A^\bot}+
    \dgtens{A}{A^\bot}{A \otimes A^\bot}+\dgconc{A \otimes A^\bot}$
\end{minipage}

The first one will be connected with the axioms \[\vaxt{A \otimes B}{\mathtt{l} \cdot x}{A^\bot \parr B^\bot}{\mathtt{l} \cdot x}+\vaxt{A \otimes B}{\mathtt{r} \cdot x}{A^\bot \parr B^\bot}{\mathtt{r} \cdot x}.\] By connecting all stars of the ordeal, we will form a unique correct tree-like saturated diagram normalising into the conclusion stars $\gstar{p_{A \otimes B}(x)}+\gstar{p_{A^\bot \parr B^\bot}(x)}$ since they are the only free rays. From the second ordeal, when connected to $\vaxt{A \otimes A^\bot}{\mathtt{l} \cdot x}{A \otimes A^\bot}{\mathtt{r} \cdot x}$, we can construct infinitely many correct saturated diagrams because of the cycle. So the vehicle does not satisfy the stellar correctness criterion.
\end{example}

% -------------------------------------------
\subsection{Reconstruction of formulas}\label{subsec:formulas}
% -------------------------------------------

We here follow standard realisability constructions for linear logic \cite{girard2001locus,hyland2003glueing,seiller2012interaction}. Note that we explicit the \emph{trefoil property} \cite{seiller2016interaction} instead of the special case that is usually called \emph{adjunction}.

\begin{definition}[Orthogonality]
\label{def:ortho}
We say that two constellations $\Sigma_1, \Sigma_2$ are orthogonal w.r.t. a set of colours $\mathcal{A}\subseteq\mathcal{C}$, written $\Sigma_1 \perp_\mathcal{A} \Sigma_2$, when $\exec_{\mathcal{A}}(\Sigma_1 \cdot \Sigma_2)$ is strongly normalising. The orthogonal of a set of constellations is defined by $\mathbf{A}^{\bot_\mathcal{A}} = \{\Sigma \mid \forall \Sigma' \in \mathbf{A}, \Sigma \perp_\mathcal{A} \Sigma'\}$.
\end{definition}

\begin{notation}
Let $(P,\preceq)$ be a partially ordered set, and $A\subseteq P$. We write $\prefix{A}$ the set of \emph{prefixes in A} i.e $\prefix{A}=\{ a\in A \mid \forall b\in A, \lnot (b\preceq a) \}$.
\end{notation}

\begin{definition}[Order on rays]
We define the following partial order on the set of rays: given $r, r'$ two rays, $r\preceq r'$ if and only there exists a substitution $\theta$ such that $\theta r= r'$.
\end{definition}

We leave the verification that this defines a partial order to the reader. More intuitively, we have $r \preceq r'$ when $r$ is less specialized (thus more general) than $r'$. For instance, $\pm f(x) \preceq \pm f(g(y))$ because for $\theta = \{x \mapsto g(y)\}$, we have $\theta(\pm f(x)) = \pm f(g(y))$.

\begin{definition}[Location]
We define:
\begin{itemize}
\item the location $\location{\sigma}$ of a star $\rho_\sigma:\abs{\sigma}\rightarrow\rays{\mathbb{S}}$ as the set \[\prefix{\{\rho_\sigma(s) \mid s\in\abs{\sigma}\}};\]
\item the location $\location{\Sigma}$ of a constellation $\alpha_\Sigma: \abs{\Sigma}\rightarrow\setofstars$ as the set \[\prefix{\cup_{\sigma\in\abs{\Sigma}} \location{\alpha_\Sigma(\sigma)}};\]
\item the location $\location{\mathbf{A}}$ of a set $\mathbf{A}$ of constellations as the set $\prefix{\cup_{\Sigma\in \mathbf{A}} \location{\Sigma}}$.
\end{itemize}
\end{definition}

\begin{definition}[Conduct]
\label{def:conduct}
A set of constellation $\mathbf{A}$ is a conduct w.r.t. a set of colours $\mathcal{A}\subseteq\mathcal{C}$ if there exists a set of constellation $\mathbf{B}$ such that $\mathbf{A} = \mathbf{B}^{\bot_\mathcal{A}}$.
\end{definition}

\begin{proposition}[Biorthogonal closure]
A set of constellations $\mathbf{A}$ is a conduct w.r.t. a set of colours $\mathcal{A}\subseteq\mathcal{C}$ if and only if $\mathbf{A} = (\mathbf{A}^{\bot_\mathcal{A}})^{\bot_\mathcal{A}}$.
\end{proposition}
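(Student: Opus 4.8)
The plan is to treat the operator $(\cdot)^{\bot_\mathcal{A}}$ as the closure map attached to a symmetric orthogonality relation, and to deduce the statement from the three standard formal properties of such maps; throughout I abbreviate $\mathbf{A}^{\bot_\mathcal{A}}$ as $\mathbf{A}^{\bot}$. The essential preliminary observation is that $\perp_\mathcal{A}$ is \emph{symmetric}. Indeed, $\Sigma_1\cdot\Sigma_2$ is the multiset union of constellations, hence commutative up to the canonical bijection $\abs{\Sigma_1}+\abs{\Sigma_2}\cong\abs{\Sigma_2}+\abs{\Sigma_1}$, so $\exec_\mathcal{A}(\Sigma_1\cdot\Sigma_2)$ is strongly normalising if and only if $\exec_\mathcal{A}(\Sigma_2\cdot\Sigma_1)$ is. Thus $\Sigma_1\perp_\mathcal{A}\Sigma_2$ iff $\Sigma_2\perp_\mathcal{A}\Sigma_1$, and symmetry is the only property of orthogonality the argument will use.

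From symmetry I would extract three facts, each a one-line verification. First, $(\cdot)^{\bot}$ is \emph{antitone}: if $\mathbf{A}\subseteq\mathbf{B}$ then $\mathbf{B}^{\bot}\subseteq\mathbf{A}^{\bot}$, since a constellation orthogonal to every member of $\mathbf{B}$ is in particular orthogonal to every member of the smaller set $\mathbf{A}$. Second, $(\cdot)^{\bot}$ is \emph{extensive}: $\mathbf{A}\subseteq\mathbf{A}^{\bot\bot}$, because any $\Sigma\in\mathbf{A}$ is, by symmetry, orthogonal to every $\Sigma'\in\mathbf{A}^{\bot}$ and therefore lies in $\mathbf{A}^{\bot\bot}$. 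Third, combining these yields the triorthogonal identity $\mathbf{A}^{\bot}=\mathbf{A}^{\bot\bot\bot}$: extensivity applied to $\mathbf{A}$ gives $\mathbf{A}\subseteq\mathbf{A}^{\bot\bot}$, which by antitonicity gives $\mathbf{A}^{\bot\bot\bot}\subseteq\mathbf{A}^{\bot}$, while extensivity applied to $\mathbf{A}^{\bot}$ gives the reverse inclusion $\mathbf{A}^{\bot}\subseteq\mathbf{A}^{\bot\bot\bot}$.

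With these in hand the two implications are immediate. For ($\Leftarrow$), if $\mathbf{A}=\mathbf{A}^{\bot\bot}$ then $\mathbf{A}=\mathbf{B}^{\bot}$ with $\mathbf{B}=\mathbf{A}^{\bot}$, so $\mathbf{A}$ is a conduct by Definition \ref{def:conduct}. For ($\Rightarrow$), suppose $\mathbf{A}=\mathbf{B}^{\bot}$ for some set $\mathbf{B}$. Extensivity gives $\mathbf{A}\subseteq\mathbf{A}^{\bot\bot}$, and for the converse inclusion I substitute $\mathbf{A}=\mathbf{B}^{\bot}$ into the triorthogonal identity for $\mathbf{B}$, obtaining $\mathbf{A}^{\bot\bot}=\mathbf{B}^{\bot\bot\bot}=\mathbf{B}^{\bot}=\mathbf{A}$. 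The two inclusions give $\mathbf{A}=\mathbf{A}^{\bot\bot}$, as required.

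I do not expect a genuine obstacle here: the content is the purely formal Galois-connection argument, and the only point demanding care is the symmetry of $\perp_\mathcal{A}$, which rests on the commutativity of multiset union together with the fact that strong normalisation (finiteness of $\exec_\mathcal{A}$, in the sense of Definition \ref{def:strongnormal}) is insensitive to the order of the operands. Everything else is routine bookkeeping with inclusions.
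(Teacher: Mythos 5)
Your proof is correct. The paper states this proposition without giving any proof (treating it as a routine fact), and your argument is exactly the standard one that fills that gap: the triorthogonal identity derived from antitonicity and extensivity, with the only model-specific point --- the symmetry of $\perp_{\mathcal{A}}$ via commutativity of multiset union --- correctly identified and justified.
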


\begin{definition}[Intersection up to unification]
Let $R$ and $Q$ be sets of rays. We define their \emph{intersection up to unification} as the set:
\[ R\ucap Q=\prefix{\{ m\in\rays{\mathbb{S}} \mid \exists r\in R, q\in Q, r\preceq m\text{ and }q\preceq m \}}. \]
We say that $R$ and $Q$ are \emph{disjoint} when $R\ucap Q=\emptyset$; by extension, we say that two sets of constellations $\mathbf{A},\mathbf{B}$ are disjoint when $\location{\mathbf{A}}\ucap\location{\mathbf{B}}=\emptyset$.
\end{definition}

\begin{definition}[Tensor]
\label{def:tensor}
Let $\mathbf{A}, \mathbf{B}$ be disjoint conducts. We define their tensor by
\[\mathbf{A} \otimes_\mathcal{A} \mathbf{B} = (\{ \Sigma_1 \cdot \Sigma_2 \mid \Sigma_1 \in \mathbf{A}, \Sigma_2 \in \mathbf{B} \}^{\bot_\mathcal{A}})^{\bot_\mathcal{A}}.\]
\end{definition}

\begin{proposition}[Associativity/commutativity]
Given $\mathbf{A}, \mathbf{B}, \mathbf{C}$ pairwise disjoint conducts w.r.t. a set of colours $\mathcal{A}\subseteq\mathcal{C}$, we have $\mathbf{A} \otimes_\mathcal{A} \mathbf{B} = \mathbf{B} \otimes_\mathcal{A} \mathbf{A}$ and $\mathbf{A} \otimes_\mathcal{A} (\mathbf{B} \otimes_\mathcal{A} \mathbf{C}) = (\mathbf{A} \otimes_\mathcal{A} \mathbf{B}) \otimes_\mathcal{A} \mathbf{C}$.
\end{proposition}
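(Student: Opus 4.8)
The plan is to reduce both the commutativity and the associativity statements to purely set-theoretic manipulations of biorthogonal closures, the only computational input being that the multiset union $\cdot$ is associative and commutative and that orthogonality depends only on it. Throughout I would write $\mathbf{A}\odot\mathbf{B}=\{\Sigma_1\cdot\Sigma_2\mid \Sigma_1\in\mathbf{A},\Sigma_2\in\mathbf{B}\}$ for the pointwise product, so that $\mathbf{A}\otimes_\mathcal{A}\mathbf{B}=(\mathbf{A}\odot\mathbf{B})^{\bot_\mathcal{A}\bot_\mathcal{A}}$, and abbreviate $\bot_\mathcal{A}$ by $\bot$. Commutativity is then immediate: since $\Sigma_1\cdot\Sigma_2=\Sigma_2\cdot\Sigma_1$ as multisets (variables being bound to their stars, no clash arises), the generating sets satisfy $\mathbf{A}\odot\mathbf{B}=\mathbf{B}\odot\mathbf{A}$, and taking the biorthogonal of equal sets yields $\mathbf{A}\otimes_\mathcal{A}\mathbf{B}=\mathbf{B}\otimes_\mathcal{A}\mathbf{A}$.

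For associativity, first I would record the \emph{adjunction} $\Sigma_1\cdot\Sigma_2\perp\Sigma_3\iff\Sigma_1\perp\Sigma_2\cdot\Sigma_3$. Unlike the situation of Theorem~\ref{thm:associativity}, here no splitting of the set of colours is involved: both sides merely assert that $\exec_\mathcal{A}(\Sigma_1\cdot\Sigma_2\cdot\Sigma_3)$ is strongly normalising, and the equivalence holds because $(\Sigma_1\cdot\Sigma_2)\cdot\Sigma_3=\Sigma_1\cdot(\Sigma_2\cdot\Sigma_3)$ as multisets; commutativity of $\cdot$ likewise lets one permute the three factors freely. This is the concrete incarnation of the trefoil property in the present single-colour setting, and it is all the dynamic information the proof needs.

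The technical heart is the stability lemma $(\mathbf{X}\odot\mathbf{Y}^{\bot\bot})^{\bot\bot}=(\mathbf{X}\odot\mathbf{Y})^{\bot\bot}$. The inclusion $\supseteq$ follows from $\mathbf{Y}\subseteq\mathbf{Y}^{\bot\bot}$ and monotonicity of the closure. For $\subseteq$, since the right-hand side is a conduct it suffices to show $\mathbf{X}\odot\mathbf{Y}^{\bot\bot}\subseteq(\mathbf{X}\odot\mathbf{Y})^{\bot\bot}$: I would take $\Sigma_X\in\mathbf{X}$, $\Sigma_Y\in\mathbf{Y}^{\bot\bot}$ and $\Psi\in(\mathbf{X}\odot\mathbf{Y})^{\bot}$, and prove $\Sigma_X\cdot\Sigma_Y\perp\Psi$. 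By the adjunction this amounts to $\Sigma_Y\perp\Sigma_X\cdot\Psi$, and since $\Sigma_Y\in\mathbf{Y}^{\bot\bot}$ it is enough to check $\Sigma_X\cdot\Psi\in\mathbf{Y}^{\bot}$, i.e. that $\Sigma_X\cdot\Psi\perp\Sigma_Y'$ for every $\Sigma_Y'\in\mathbf{Y}$. Commuting factors, this is $\Sigma_X\cdot\Sigma_Y'\perp\Psi$, which holds because $\Sigma_X\cdot\Sigma_Y'\in\mathbf{X}\odot\mathbf{Y}$ and $\Psi\in(\mathbf{X}\odot\mathbf{Y})^{\bot}$. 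The mirror identity $(\mathbf{X}^{\bot\bot}\odot\mathbf{Y})^{\bot\bot}=(\mathbf{X}\odot\mathbf{Y})^{\bot\bot}$ then follows by commutativity.

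Associativity now unwinds formally. Applying the lemma with $\mathbf{Y}=\mathbf{B}\odot\mathbf{C}$ gives $\mathbf{A}\otimes_\mathcal{A}(\mathbf{B}\otimes_\mathcal{A}\mathbf{C})=(\mathbf{A}\odot(\mathbf{B}\odot\mathbf{C})^{\bot\bot})^{\bot\bot}=(\mathbf{A}\odot\mathbf{B}\odot\mathbf{C})^{\bot\bot}$, using associativity of $\odot$ (inherited from that of $\cdot$); symmetrically the mirror identity gives $(\mathbf{A}\otimes_\mathcal{A}\mathbf{B})\otimes_\mathcal{A}\mathbf{C}=(\mathbf{A}\odot\mathbf{B}\odot\mathbf{C})^{\bot\bot}$, whence the two agree. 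The step I expect to require the most care is not this computation but the \emph{well-definedness} of the iterated tensors: the definition of $\otimes_\mathcal{A}$ demands disjoint conducts, so one must verify that $\mathbf{B}\otimes_\mathcal{A}\mathbf{C}$ remains disjoint from $\mathbf{A}$, and $\mathbf{A}\otimes_\mathcal{A}\mathbf{B}$ from $\mathbf{C}$. This requires controlling $\location{\mathbf{B}\otimes_\mathcal{A}\mathbf{C}}$ in terms of $\location{\mathbf{B}}$ and $\location{\mathbf{C}}$, i.e. checking that the biorthogonal closure does not create rays unifiable with those of $\mathbf{A}$ — the one place where the geometry of locations, rather than the formal closure algebra, genuinely enters.
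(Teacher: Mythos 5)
The paper actually states this proposition \emph{without} proof; the closest it comes to indicating an argument is the remark after Theorem~\ref{thm:altlin} that associativity of execution (Theorem~\ref{thm:assocexec}) and the trefoil property (Theorem~\ref{thm:adjunction}) are the ingredients for the $\ast$-autonomous structure, both of which carry location-disjointness hypotheses. Your route is different and more elementary, and its algebraic core is correct. The adjunction you use, $\Sigma_1\cdot\Sigma_2\perp_\mathcal{A}\Sigma_3\iff\Sigma_1\perp_\mathcal{A}\Sigma_2\cdot\Sigma_3$, holds for exactly the trivial reason you give: by Definition~\ref{def:ortho} both sides assert finiteness of $\exec_\mathcal{A}(\Sigma_1\cdot\Sigma_2\cdot\Sigma_3)$, since multiset union is associative and commutative. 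One quibble: this is \emph{not} ``the concrete incarnation of the trefoil property'' --- the trefoil property involves the intermediate execution $\exec_\mathcal{A}(\Sigma_2\cdot\Sigma_3)$ inside an orthogonality judgement, is not a tautology, and is precisely where the location hypotheses matter. Your adjunction is a much weaker statement, but the mislabelling is cosmetic since the weak form is all your proof uses. Granting it, your stability lemma $(\mathbf{X}\odot\mathbf{Y}^{\bot\bot})^{\bot\bot}=(\mathbf{X}\odot\mathbf{Y})^{\bot\bot}$ is the standard closure-operator computation, correctly carried out, and both bracketings collapse to $(\mathbf{A}\odot\mathbf{B}\odot\mathbf{C})^{\bot\bot}$. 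What your approach buys: the two equations hold for \emph{arbitrary} conducts, the pairwise-disjointness hypothesis playing no role in the computation. What the trefoil-based route buys: it survives in settings (e.g.\ interaction graphs) where orthogonality is not a one-shot finiteness condition on the union and your cheap adjunction is unavailable.

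The caveat you flag but leave open --- whether $\mathbf{B}\otimes_\mathcal{A}\mathbf{C}$ is disjoint from $\mathbf{A}$, so that the nested tensor falls under Definition~\ref{def:tensor} at all --- is real, and in general it fails: biorthogonal closure does not preserve locations. For instance, suppose $\location{\mathbf{A}}$ contains an unpolarised ray $r$ (interpretations of formulas do contain proof-like constellations, whose conclusion rays are unpolarised). Take any $\Sigma\in\mathbf{B}\odot\mathbf{C}$ and consider $\Sigma+\gstar{r}$: since an unpolarised ray is dual to no ray, the star $\gstar{r}$ has no incident edges in any unification graph, so the only diagram containing it is the correct, saturated single-vertex one; hence $\exec_\mathcal{A}(\Sigma+\gstar{r}+\Psi)$ is finite whenever $\exec_\mathcal{A}(\Sigma+\Psi)$ is, so $\Sigma+\gstar{r}$ lies in $\mathbf{B}\otimes_\mathcal{A}\mathbf{C}$, whose location therefore meets $\location{\mathbf{A}}$. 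So pairwise disjointness of $\mathbf{A},\mathbf{B},\mathbf{C}$ does not make the nested tensors well-defined in the strict sense of Definition~\ref{def:tensor}. This is a defect of the paper's formulation rather than of your argument: the only sensible reading is that $\otimes_\mathcal{A}$ is given by its defining formula whether or not the disjointness precondition holds, and under that reading your proof is complete.
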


\begin{definition}[Par and linear implication]
\label{def:par}
Let $\mathbf{A}, \mathbf{B}$ be conducts w.r.t. a set of colours $\mathcal{A}\subseteq\mathcal{C}$. We define:
$\mathbf{A} \parr_\mathcal{A} \mathbf{B} = (\mathbf{A}^{\bot_\mathcal{A}} \otimes_\mathcal{A} \mathbf{B}^{\bot_\mathcal{A}})^{\bot_\mathcal{A}}$ and $\mathbf{A} \multimap_\mathcal{A} \mathbf{B} = \mathbf{A}^{\bot_\mathcal{A}} \parr_\mathcal{A} \mathbf{B}$.
\end{definition}

\begin{theorem}[Associativity of execution]
\label{thm:assocexec}
Choose a set of colours $\mathcal{A}\subseteq\mathcal{C}$. For constellations $\Sigma_1, \Sigma_2, \Sigma_3$ such that $\location{\Sigma_1}\ucap\location{\Sigma_2}\ucap\location{\Sigma_3}=\emptyset$, we have
\[\exec_\mathcal{A}(\Sigma_1 \cdot \exec_\mathcal{A}(\Sigma_2 \cdot \Sigma_3)) = \exec_\mathcal{A}(\exec_\mathcal{A}(\Sigma_1 \cdot \Sigma_2) \cdot \Sigma_3)\]
\end{theorem}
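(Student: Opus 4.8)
The plan is to reduce the identity to a statement about correct saturated diagrams, mimicking the blow-up/contraction technique used in the proof sketch of Theorem~\ref{thm:associativity}. By Definition~\ref{def:normalalization} both sides are unions of actualisations $\actu\delta$ of correct saturated $\mathcal{A}$-diagrams, so it suffices to exhibit a bijection between the correct saturated diagrams contributing to $\exec_\mathcal{A}(\Sigma_1 \cdot \exec_\mathcal{A}(\Sigma_2 \cdot \Sigma_3))$ and those contributing to $\exec_\mathcal{A}(\Sigma_1 \cdot \Sigma_2 \cdot \Sigma_3)$, under which corresponding diagrams actualise to $\alpha$-equivalent stars. The symmetric argument then shows the same for the right-hand side, and both sides equal $\exec_\mathcal{A}(\Sigma_1\cdot\Sigma_2\cdot\Sigma_3)$.

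First I would set up the blow-up map. Given a correct saturated $\mathcal{A}$-diagram $\delta$ over $\Sigma_1 \cdot \exec_\mathcal{A}(\Sigma_2\cdot\Sigma_3)$, every vertex sent to a star of $\exec_\mathcal{A}(\Sigma_2\cdot\Sigma_3)$ is by definition the actualisation $\actu\delta_s$ of some correct saturated $\{2,3\}$-diagram $\delta_s$. Since $\abs{\actu\delta_s}=\mathtt{free}(\delta_s)$, the rays of such a vertex are in canonical bijection with the free rays of $\delta_s$, so each edge of $\delta$ incident to it attaches unambiguously to a ray of a unique star of $\delta_s$. Replacing each such vertex by $\delta_s$ yields a graph $\bar D_\delta$ and a ray-injective morphism $\bar\delta\colon\bar D_\delta\to\ugraph{\mathcal{A}}{\Sigma_1\cdot\Sigma_2\cdot\Sigma_3}$, exactly as in Theorem~\ref{thm:associativity}. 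I would then verify that $\bar\delta$ is correct and saturated: connectedness is inherited, and solvability of $\mathcal{P}(\bar\delta)$ follows because, after the fresh renamings of Definition~\ref{def:underproblem}, $\mathcal{P}(\bar\delta)$ is up to renaming the disjoint union of the inner problems $\mathcal{P}(\delta_s)$ with the outer problem $\mathcal{P}(\delta)$ rephrased on the un-actualised rays; the composite of the inner unifiers with the outer one solves it. This is the associativity of unification, namely that $\unifalgo{\mathcal{P}(\bar\delta)}$ factors as $\unifalgo{\mathcal{P}(\delta)}\circ\bigl(\bigsqcup_s\unifalgo{\mathcal{P}(\delta_s)}\bigr)$ up to renaming, which simultaneously gives the equality of the actualised stars.

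The inverse contraction map is where the hypothesis enters. Given a correct saturated diagram $\bar\delta$ over $\Sigma_1\cdot\Sigma_2\cdot\Sigma_3$, I would delete every edge incident to a $\Sigma_1$-vertex, take the connected components of the remaining graph spanned by the $\{2,3\}$-vertices, and contract each to a single vertex, recovering a diagram over $\Sigma_1\cdot\exec_\mathcal{A}(\Sigma_2\cdot\Sigma_3)$. For this to land in the image of the inner execution, each such component must itself be a \emph{saturated} $\{2,3\}$-diagram. This is the crux: the restriction of a saturated $\{1,2,3\}$-diagram to its $\{2,3\}$-edges could a priori fail to be saturated, since a possible $\Sigma_2$--$\Sigma_3$ connection might have been preempted by a ray already consumed in a connection with $\Sigma_1$, which becomes free again once the $\Sigma_1$-edges are deleted. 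The hypothesis $\location{\Sigma_1}\ucap\location{\Sigma_2}\ucap\location{\Sigma_3}=\emptyset$ is precisely what forbids such three-way contention over a common location, so that the $\Sigma_2$--$\Sigma_3$ connections realisable in $\bar\delta$ are independent of the connections with $\Sigma_1$ and the contracted components are genuinely saturated.

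I expect this saturation argument — showing that the disjointness of the three locations decouples the internal $\{2,3\}$-reduction from the interaction with $\Sigma_1$ — to be the main obstacle; everything else (connectedness, ray-injectivity, and the equality of actualised stars) reduces to the associativity of unification and the bookkeeping of fresh renamings already handled in Definition~\ref{def:underproblem} and Theorem~\ref{thm:associativity}. Once the bijection is established and seen to commute with actualisation via the factorisation of most general unifiers, the first equality with $\exec_\mathcal{A}(\Sigma_1\cdot\Sigma_2\cdot\Sigma_3)$ follows, and the second is obtained by the symmetric contraction collapsing the $\{1,2\}$-components instead of the $\{2,3\}$-components.
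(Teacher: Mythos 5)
Your proposal takes a genuinely different route from the paper, and the difference is substantive: the paper's proof contains no blow-up/contraction argument at all. It reads the disjointness hypothesis as saying that stars belonging to distinct $\Sigma_i$ can never be connected by any edge of the unification graph, so that $\saturatedstardiag{}(\Sigma_2 \cdot \Sigma_3) = \saturatedstardiag{}(\Sigma_2) \uplus \saturatedstardiag{}(\Sigma_3)$ and, iterating, every saturated diagram of every mixed constellation lives entirely inside a single $\Sigma_i$; both sides of the equality then collapse to the same disjoint union $\saturatedstardiag{}(\Sigma_1)\uplus\saturatedstardiag{}(\Sigma_2)\uplus\saturatedstardiag{}(\Sigma_3)$ and there is nothing left to prove. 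Your proposal instead targets the non-trivial situation in which the three constellations genuinely interact pairwise (which is what a GoI-style associativity theorem ought to cover); under the paper's reading your entire bijection machinery is unnecessary.

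In that non-trivial situation, however, there is a genuine gap at exactly the step you flag as the crux, and it cannot be repaired from the stated hypothesis: $\location{\Sigma_1}\ucap\location{\Sigma_2}\ucap\location{\Sigma_3}=\emptyset$ does \emph{not} forbid three-way contention. The order $\preceq$ underlying $\ucap$ is generated by substitutions, which preserve polarity and colour; but contention is mediated by \emph{dual} rays, which carry opposite polarities and hence never admit a common instance, so they are invisible to $\ucap$. Even comparing underlying terms, a ray that is matchable with a ray of $\Sigma_1$ and with a ray of $\Sigma_3$ does not produce a common instance of all three. Concretely, with $\mathcal{A}=\{c\}$, take $\Sigma_1 = \gstar{-c(a), f_1}$, $\Sigma_2 = \gstar{+c(x), f_2(x)}$, $\Sigma_3 = \gstar{-c(b), f_3}$, where $a\neq b$ are constants and the $f_i$ are uncoloured symbols. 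Here the triple intersection is empty (already $\location{\Sigma_1}\ucap\location{\Sigma_2}=\emptyset$, since no two of these rays have a common instance), yet $+c(x)$ is contended by both $-c(a)$ and $-c(b)$, and one computes
\[\exec_\mathcal{A}(\Sigma_1 \cdot \exec_\mathcal{A}(\Sigma_2 \cdot \Sigma_3)) = \gstar{-c(a), f_1} + \gstar{f_2(b), f_3} \neq \gstar{f_1, f_2(a)} + \gstar{-c(b), f_3} = \exec_\mathcal{A}(\exec_\mathcal{A}(\Sigma_1 \cdot \Sigma_2) \cdot \Sigma_3).\]
In this example the contraction of the saturated diagram joining $\Sigma_1$ to $\Sigma_2$ leaves the one-vertex $\{2,3\}$-diagram on $\Sigma_2$, which is not saturated; symmetrically, the blow-up of the saturated one-vertex diagram on $\gstar{-c(a),f_1}$ is not saturated over $\Sigma_1\cdot\Sigma_2\cdot\Sigma_3$ (note that your sketch verifies correctness of $\bar\delta$ but never its saturation, which needs the same decoupling). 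What your argument actually requires is a strictly stronger pairwise-interface condition — no ray of $\Sigma_j$ dual-matchable both to a ray of $\Sigma_i$ and to a ray of $\Sigma_k$ — which neither follows from the stated triple intersection nor is what the paper's proof invokes; under the hypothesis as literally stated your bijection does not exist, and the identity itself fails.
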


\begin{proof}
Since all constellations have disjoint locations, their stars cannot be connected together and we have $\saturatedstardiag{}(\Sigma_2 \cdot \Sigma_3) = \saturatedstardiag{}(\Sigma_2) \uplus \saturatedstardiag{}(\Sigma_3)$. These diagrams can't be connected to the ones of $\Sigma_1$ which has its own saturated diagrams. Hence, $\saturatedstardiag{}(\Sigma_1 \cdot \exec_\mathcal{A}(\Sigma_2 \cdot \Sigma_3)) = \saturatedstardiag{}(\Sigma_1) \uplus \saturatedstardiag{}(\Sigma_2) \uplus \saturatedstardiag{}(\Sigma_3)$. With the same reasoning, we obtain $\saturatedstardiag{}(\Sigma_1 \cdot \exec_\mathcal{A}(\Sigma_2 \cdot \Sigma_3)) = \saturatedstardiag{}(\exec_\mathcal{A}(\Sigma_1 \cdot \Sigma_2) \cdot \Sigma_3)$. Therefore, $\exec_\mathcal{A}(\Sigma_1 \cdot \exec_\mathcal{A}(\Sigma_2 \cdot \Sigma_3)) = \exec_\mathcal{A}(\exec_\mathcal{A}(\Sigma_1 \cdot \Sigma_2) \cdot \Sigma_3)$.
\end{proof}

\begin{theorem}[Trefoil Property]
\label{thm:adjunction}
Choose a set of colours $\mathcal{A}\subseteq\mathcal{C}$. For all constellations $\Sigma_1, \Sigma_2, \Sigma_3$ s.t. $\location{\Sigma_1}\ucap\location{\Sigma_2}\ucap\location{\Sigma_3}=\emptyset$:
\[\begin{array}{p{11cm}}
$\big( \Sigma_2 \perp_\mathcal{A} \Sigma_3 \text{ and } \Sigma_1 \perp_\mathcal{A} \exec_\mathcal{A}(\Sigma_2 \cdot \Sigma_3)\big)$\hfill\\
\hfill  if and only if \hfill \\ \hfill $\big( \Sigma_1 \perp_\mathcal{A} \Sigma_2 \text{ and } \exec_\mathcal{A}(\Sigma_1 \cdot \Sigma_2) \perp_\mathcal{A} \Sigma_3 \big)$
\end{array}\]
\end{theorem}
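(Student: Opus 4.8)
The plan is to unfold orthogonality into strong normalisation (Definitions \ref{def:ortho} and \ref{def:strongnormal}) and to exploit that, since multiset union is commutative and associative, the relation $\perp_\mathcal{A}$ is symmetric and the three-way composite $\Sigma_1\cdot\Sigma_2\cdot\Sigma_3$ does not depend on any grouping. Writing $E_{23}=\exec_\mathcal{A}(\Sigma_2\cdot\Sigma_3)$ and $E_{12}=\exec_\mathcal{A}(\Sigma_1\cdot\Sigma_2)$, the left-hand side reads ``$E_{23}$ is finite and $\exec_\mathcal{A}(\Sigma_1\cdot E_{23})$ is finite'' while the right-hand side reads ``$E_{12}$ is finite and $\exec_\mathcal{A}(E_{12}\cdot\Sigma_3)$ is finite''. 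By Theorem \ref{thm:assocexec} the two ``outer'' executions coincide, $\exec_\mathcal{A}(\Sigma_1\cdot E_{23})=\exec_\mathcal{A}(E_{12}\cdot\Sigma_3)$, so the two second conjuncts are finite simultaneously; the whole difficulty is therefore concentrated in the two first conjuncts $\Sigma_2\perp_\mathcal{A}\Sigma_3$ and $\Sigma_1\perp_\mathcal{A}\Sigma_2$.

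First I would prove a flattening lemma: whenever $\Sigma_2\perp_\mathcal{A}\Sigma_3$, one has $\exec_\mathcal{A}(\Sigma_1\cdot\exec_\mathcal{A}(\Sigma_2\cdot\Sigma_3))=\exec_\mathcal{A}(\Sigma_1\cdot\Sigma_2\cdot\Sigma_3)$. The argument reuses the \emph{blowing-up}/contraction correspondence from the proof of Theorem \ref{thm:associativity}: each star of the now \emph{finite} constellation $E_{23}$ is the actualisation of a saturated diagram over $\Sigma_2\cdot\Sigma_3$, and replacing each such vertex in a saturated diagram over $\Sigma_1\cdot E_{23}$ by the diagram it came from yields a saturated diagram over $\Sigma_1\cdot\Sigma_2\cdot\Sigma_3$; the triple-disjointness hypothesis $\location{\Sigma_1}\ucap\location{\Sigma_2}\ucap\location{\Sigma_3}=\emptyset$ is what guarantees that no connection is created or destroyed in the process and that this map is a bijection. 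Granting this lemma, both conjunctions reduce to a single, manifestly symmetric statement: the strong normalisation of the composite $\exec_\mathcal{A}(\Sigma_1\cdot\Sigma_2\cdot\Sigma_3)$. Applying the lemma with inner pair $(\Sigma_2,\Sigma_3)$ shows the left-hand conjunction is equivalent to that composite being finite; applying the same lemma after relabelling, so that the inner pair is $(\Sigma_1,\Sigma_2)$ and the outer constellation is $\Sigma_3$ (legitimate since the disjointness condition and $\perp_\mathcal{A}$ are symmetric), does the same for the right-hand conjunction. The Theorem then follows at once.

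The main obstacle is the flattening lemma, and in particular the implication it must support in the direction where the inner composite is \emph{not} assumed strongly normalising. The delicate point is to rule out the pathological situation in which $E_{23}$ is infinite while the outer execution $\exec_\mathcal{A}(\Sigma_1\cdot E_{23})$ nonetheless stays finite, because every extra star of $E_{23}$ gets absorbed into a diagram with no free ray once confronted with $\Sigma_1$, and hence contributes nothing correct. Here the triple-disjointness hypothesis is essential: by forbidding any ray from being simultaneously matchable against material coming from all three of $\Sigma_1,\Sigma_2,\Sigma_3$, it ensures that an infinite family of correct saturated diagrams of $\Sigma_2\cdot\Sigma_3$ survives, as an infinite family of correct saturated diagrams, into $\ugraph{\mathcal{A}}{\Sigma_1\cdot\Sigma_2\cdot\Sigma_3}$, so that finiteness of $\exec_\mathcal{A}(\Sigma_1\cdot\Sigma_2\cdot\Sigma_3)$ forces $\Sigma_2\perp_\mathcal{A}\Sigma_3$. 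Establishing this persistence carefully --- that an infinite diagram family cannot be ``neutralised'' by the third constellation --- is where the real combinatorial work lies, and is the step I would spend the most care on.
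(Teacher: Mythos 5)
You should first be aware that the paper never actually proves Theorem~\ref{thm:adjunction}: the statement is followed immediately by Theorem~\ref{thm:altlin} and the discussion of the $\ast$-autonomous structure, with a pointer to earlier work of the second author. So there is no proof to compare yours against, and it must stand on its own. Your opening reduction is the natural one: Theorem~\ref{thm:assocexec} identifies the two outer executions, so everything reduces to showing that, when this common execution is finite, $\Sigma_2 \perp_\mathcal{A} \Sigma_3$ holds if and only if $\Sigma_1 \perp_\mathcal{A} \Sigma_2$ holds. The two steps you propose for this, however, both fail, and for the same structural reason: the hypothesis $\location{\Sigma_1}\ucap\location{\Sigma_2}\ucap\location{\Sigma_3}=\emptyset$ only forbids overlaps involving all three constellations \emph{simultaneously}, whereas the phenomena you need to exclude are \emph{pairwise} (between $\Sigma_1$ and material coming from $\Sigma_2\cdot\Sigma_3$), and saturation is relative to the ambient constellation.

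Your flattening lemma is false. It is not a variant of Theorem~\ref{thm:associativity}, whose blowing-up correspondence relies essentially on the inner and outer executions using \emph{disjoint} colour sets; with a single set $\mathcal{A}$ used twice, inner-saturated diagrams need not stay saturated in the triple, and stars of $\Sigma_2\cdot\Sigma_3$ that execution discards (because none of their diagrams ever saturates) are still present in the triple to interact with $\Sigma_1$. Concretely, with one colour $c$, let $\Sigma_1=\gstar{-c.x, q(x)}$, $\Sigma_2=\gstar{+c.f(x), -c.x}$, $\Sigma_3=\gstar{p(x)}$. The three-way intersection of locations is empty (nothing is an instance of rays of all three, as $\Sigma_3$ only carries the symbol $p$), and $\Sigma_2\perp_\mathcal{A}\Sigma_3$ holds: every diagram built from copies of $\Sigma_2$'s star is a chain with a free ray $-c.x$, hence extendable, hence not saturated, so $\exec_\mathcal{A}(\Sigma_2\cdot\Sigma_3)=\gstar{p(x)}$. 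But then
\[ \exec_\mathcal{A}(\Sigma_1\cdot\exec_\mathcal{A}(\Sigma_2\cdot\Sigma_3))=\gstar{-c.x, q(x)}+\gstar{p(x)}\ \neq\ \gstar{p(x)}=\exec_\mathcal{A}(\Sigma_1\cdot\Sigma_2\cdot\Sigma_3), \]
since in the triple the star of $\Sigma_1$ loses saturation (its ray $-c.x$ connects to $+c.f(x)$) yet none of its extensions ever saturates. Note that the same example also refutes the equality asserted by Theorem~\ref{thm:assocexec} itself, because here $\exec_\mathcal{A}(\Sigma_1\cdot\Sigma_2)=\emptyset$, so $\exec_\mathcal{A}(\exec_\mathcal{A}(\Sigma_1\cdot\Sigma_2)\cdot\Sigma_3)=\gstar{p(x)}$: the defect is endemic to the ternary hypothesis, not merely to your lemma.

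Worse, the persistence step you rightly single out as the crux is not just delicate but false, and the theorem with it. Take colours $a,b$ and $\Sigma_1=\gstar{-b.f(x)}$, $\Sigma_2=\gstar{+a.x}+\gstar{-a.x}$, $\Sigma_3=\gstar{-a.x, +a.x, +b.x}$. No ray is an instance of rays of both $\Sigma_1$ (colour $b$, argument $f(x)$) and $\Sigma_2$ (colour $a$), so the stated hypothesis holds. Now $\exec_\mathcal{A}(\Sigma_2\cdot\Sigma_3)$ is infinite: for each $n\geq 1$, the chain joining $\gstar{+a.x}$ to $\gstar{-a.x}$ through $n$ copies of $\Sigma_3$'s star is correct and saturated, with the $n$ copies of $+b.x$ as free rays; so the left-hand side of the equivalence fails. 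But the right-hand side holds: $\exec_\mathcal{A}(\Sigma_1\cdot\Sigma_2)=\gstar{-b.f(x)}$ is finite, and $\exec_\mathcal{A}(\gstar{-b.f(x)}\cdot\Sigma_3)=\emptyset$ is finite, because every diagram there keeps a free ray of colour $a$ and hence never saturates. This is exactly the neutralisation scenario you hoped to rule out: confronted with $\Sigma_1$, the infinite family of stars of $\exec_\mathcal{A}(\Sigma_2\cdot\Sigma_3)$ yields only diagrams that are either extendable or have no free rays, and leaves no trace. Since this example satisfies the theorem's hypothesis, no elaboration of your plan --- nor any other argument --- can close the gap: the statement is only salvageable under a genuinely pairwise disjointness hypothesis preventing dual-ray interaction between the $\Sigma_i$, which is in fact how the paper's own one-line proof of Theorem~\ref{thm:assocexec} silently reads its hypothesis (``their stars cannot be connected together''), at the price of making the three constellations mutually inert and the statement trivial.
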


\begin{theorem}[Alternative linear disjunction]
\label{thm:altlin}
Let $\mathbf{A}, \mathbf{B}$ be two conducts w.r.t. a set of colours $\mathcal{A}\subseteq\mathcal{C}$. Then: $\mathbf{A} \multimap_\mathcal{A} \mathbf{B} = \{\Sigma_f \mid \forall\  \Sigma_a \in \mathbf{A}, \exec_\mathcal{A}(\Sigma_f \cdot \Sigma_a) \in \mathbf{B} \text{ and } \Sigma_f \perp_\mathcal{A} \Sigma_a \}$.
\end{theorem}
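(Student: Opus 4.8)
The plan is to collapse the definition of $\multimap_\mathcal{A}$ to a single orthogonal, and then read off membership via the Trefoil Property.

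First I would unfold Definitions~\ref{def:par} and~\ref{def:tensor}. Since $\mathbf{A}$ is a conduct, the Biorthogonal Closure proposition gives $(\mathbf{A}^{\bot_\mathcal{A}})^{\bot_\mathcal{A}} = \mathbf{A}$, hence
\[ \mathbf{A} \multimap_\mathcal{A} \mathbf{B} = \mathbf{A}^{\bot_\mathcal{A}} \parr_\mathcal{A} \mathbf{B} = \big((\mathbf{A}^{\bot_\mathcal{A}})^{\bot_\mathcal{A}} \otimes_\mathcal{A} \mathbf{B}^{\bot_\mathcal{A}}\big)^{\bot_\mathcal{A}} = \big(\mathbf{A} \otimes_\mathcal{A} \mathbf{B}^{\bot_\mathcal{A}}\big)^{\bot_\mathcal{A}}. \]
Writing $G = \{\Sigma_a \cdot \Sigma_b \mid \Sigma_a \in \mathbf{A}, \Sigma_b \in \mathbf{B}^{\bot_\mathcal{A}}\}$, the tensor equals $(G^{\bot_\mathcal{A}})^{\bot_\mathcal{A}}$, so using the standard collapse $X^{\bot_\mathcal{A}\bot_\mathcal{A}\bot_\mathcal{A}} = X^{\bot_\mathcal{A}}$ I obtain $\mathbf{A} \multimap_\mathcal{A} \mathbf{B} = G^{\bot_\mathcal{A}}$. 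By Definition~\ref{def:ortho}, this means $\Sigma_f \in \mathbf{A} \multimap_\mathcal{A} \mathbf{B}$ if and only if $\Sigma_f \perp_\mathcal{A} (\Sigma_a \cdot \Sigma_b)$ for all $\Sigma_a \in \mathbf{A}$ and all $\Sigma_b \in \mathbf{B}^{\bot_\mathcal{A}}$.

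Next, for a fixed pair $\Sigma_a, \Sigma_b$ I would decompose this single orthogonality. The statement $\Sigma_f \perp_\mathcal{A} (\Sigma_a \cdot \Sigma_b)$ unfolds to strong normalisation of $\exec_\mathcal{A}(\Sigma_f \cdot \Sigma_a \cdot \Sigma_b)$; rewriting this one-shot normalisation as a two-stage one (executing $\Sigma_a \cdot \Sigma_b$ first), via the associativity results of Theorems~\ref{thm:associativity} and~\ref{thm:assocexec}, it is equivalent to the left conjunction of the Trefoil Property, namely $\Sigma_a \perp_\mathcal{A} \Sigma_b$ together with $\Sigma_f \perp_\mathcal{A} \exec_\mathcal{A}(\Sigma_a \cdot \Sigma_b)$. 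Applying Theorem~\ref{thm:adjunction} with $\Sigma_1 := \Sigma_f$, $\Sigma_2 := \Sigma_a$, $\Sigma_3 := \Sigma_b$ then flips this to the more convenient right conjunction $\Sigma_f \perp_\mathcal{A} \Sigma_a$ and $\exec_\mathcal{A}(\Sigma_f \cdot \Sigma_a) \perp_\mathcal{A} \Sigma_b$. Finally I would quantify back over $\Sigma_b$: for fixed $\Sigma_a$, the clause ``$\exec_\mathcal{A}(\Sigma_f \cdot \Sigma_a) \perp_\mathcal{A} \Sigma_b$ for every $\Sigma_b \in \mathbf{B}^{\bot_\mathcal{A}}$'' is exactly $\exec_\mathcal{A}(\Sigma_f \cdot \Sigma_a) \in (\mathbf{B}^{\bot_\mathcal{A}})^{\bot_\mathcal{A}} = \mathbf{B}$, the equality holding because $\mathbf{B}$ is a conduct (Definition~\ref{def:conduct}). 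Collecting these clauses over all $\Sigma_a \in \mathbf{A}$ yields precisely the right-hand set.

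The side conditions and one degenerate case are where I expect the difficulty. To invoke Theorems~\ref{thm:assocexec} and~\ref{thm:adjunction} I must verify that $\Sigma_f, \Sigma_a, \Sigma_b$ meet the required disjointness of locations; this follows from $\mathbf{A}$ and $\mathbf{B}^{\bot_\mathcal{A}}$ being disjoint conducts, a hypothesis inherited from the definition of $\otimes_\mathcal{A}$, together with the fact that variables of distinct stars are bound. The genuinely delicate point is the case $\mathbf{B}^{\bot_\mathcal{A}} = \emptyset$: then the universal quantifier over $\Sigma_b$ becomes vacuous, so the standalone clause $\Sigma_f \perp_\mathcal{A} \Sigma_a$ is no longer forced by membership in $G^{\bot_\mathcal{A}}$, and the equality of the two sets must be checked by hand (or by restricting to conducts with inhabited orthogonal, for which the Trefoil-based equivalence is clean). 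Pinning down which inhabitation the definitions actually guarantee, and confirming that this edge case does not break the stated identity, is the step I would treat most carefully.
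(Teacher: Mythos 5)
Your overall skeleton is the intended one, and for what it is worth the paper itself offers no proof of Theorem~\ref{thm:altlin} (it is stated bare, immediately after the Trefoil Property, with such verifications deferred to the second author's earlier interaction-graph work). Your collapse $\mathbf{A}\multimap_\mathcal{A}\mathbf{B} = (\mathbf{A}\otimes_\mathcal{A}\mathbf{B}^{\bot_\mathcal{A}})^{\bot_\mathcal{A}} = G^{\bot_\mathcal{A}}$ via biorthogonal closure and the triple-orthogonal identity is correct, the instantiation of Theorem~\ref{thm:adjunction} is the right one, the final quantification over $\Sigma_b$ using $\mathbf{B}^{\bot_\mathcal{A}\bot_\mathcal{A}}=\mathbf{B}$ is sound, and your flagging of the degenerate case $\mathbf{B}^{\bot_\mathcal{A}}=\emptyset$ (where the standalone clause $\Sigma_f\perp_\mathcal{A}\Sigma_a$ is no longer forced) is a genuine issue with the statement that the paper silently ignores.

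The gap is the step you delegate to ``the associativity results of Theorems~\ref{thm:associativity} and~\ref{thm:assocexec}''. Membership in $G^{\bot_\mathcal{A}}$ is the \emph{one-shot} condition: $\exec_\mathcal{A}(\Sigma_f\cdot\Sigma_a\cdot\Sigma_b)$ is finite, since the generating set of the tensor consists of plain unions $\Sigma_a\cdot\Sigma_b$ with no execution interposed. The Trefoil Property, however, only relates the two \emph{staged} conjunctions to each other; neither of its sides is the one-shot condition. So you need the lemma: $\exec_\mathcal{A}(\Sigma_f\cdot\Sigma_a\cdot\Sigma_b)$ finite $\iff$ ($\Sigma_a\perp_\mathcal{A}\Sigma_b$ and $\Sigma_f\perp_\mathcal{A}\exec_\mathcal{A}(\Sigma_a\cdot\Sigma_b)$), and the results you cite do not give it. Theorem~\ref{thm:associativity} requires the two rounds of execution to use \emph{disjoint} colour sets, whereas here both stages run over the same $\mathcal{A}$, so it says nothing. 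Theorem~\ref{thm:assocexec} only equates the two staged executions $\exec_\mathcal{A}(\Sigma_1\cdot\exec_\mathcal{A}(\Sigma_2\cdot\Sigma_3))$ and $\exec_\mathcal{A}(\exec_\mathcal{A}(\Sigma_1\cdot\Sigma_2)\cdot\Sigma_3)$; it never mentions $\exec_\mathcal{A}(\Sigma_1\cdot\Sigma_2\cdot\Sigma_3)$, and in particular it does not assert that finiteness of the total execution forces finiteness of the inner stage. That implication is exactly the delicate point: a saturated correct diagram of $\Sigma_a\cdot\Sigma_b$ need not remain saturated once $\Sigma_f$ is present (it may extend through stars of $\Sigma_f$), so infinitude of $\exec_\mathcal{A}(\Sigma_a\cdot\Sigma_b)$ does not transparently propagate to $\exec_\mathcal{A}(\Sigma_f\cdot\Sigma_a\cdot\Sigma_b)$. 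This one-shot-versus-staged equivalence is precisely the ``adjunction'' that the trefoil property is said to generalise, and it is the actual mathematical content of the theorem; without proving it under the location-disjointness hypotheses, your chain from $G^{\bot_\mathcal{A}}$ to the right-hand set does not close. (A minor point in the same vein: your appeal to bound variables is irrelevant for the side conditions --- what you need is that the triple condition $\location{\Sigma_f}\ucap\location{\Sigma_a}\ucap\location{\Sigma_b}=\emptyset$ already follows from $\location{\mathbf{A}}\ucap\location{\mathbf{B}^{\bot_\mathcal{A}}}=\emptyset$, since the triple intersection is contained in the pairwise one.)
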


As described in several work by the second author \cite{seiller2012interaction,seiller2012logique,seiller2016interaction}, the trefoil property and the associativity of execution ensure that one can define a $\ast$-autonomous category with conducts as objects and vehicles as morphisms. Due to the lack of space, we chose to omit this result which do not require new proof techniques and involves lots of bureaucratic definitions to deal with locations. We instead prove full soundness and completeness results.

% -------------------------------------------
\subsection{Truth and Soundness}\label{subsec:truth}
% -------------------------------------------

We will now define the interpretation of MLL formulas, which depends on a \emph{basis of interpretation}, and then prove full soundness and full completeness for MLL+MIX, that is the system obtained by adding to MLL sequent calculus the MIX rule, corresponding to the axiom scheme $A\parr B\multimap A\otimes B$. It is known that the correctness criterion for the MIX rule consists in taking the Danos-Regnier correction graphs and checking for acyclicity (but not connectedness). The proof of Theorem \ref{thm:correctness2} shows how the orthogonality of a vehicle w.r.t. the ordeal coincides with MLL+MIX correction, since the constellation $\post{+t.\mathcal{S}^\mathrm{ax}}\cdot \negc{-c.\mathcal{S}^\mathrm{cut}} \cdot \mathcal{S}_\sigma^\bigstar$ is strongly normalisable if and only if the structure $\mathcal{S}$ is acyclic. This is the key ingredient in the proof of full completeness. We are still working on obtaining similar results for MLL without MIX, using the results of Section \ref{subsec:stellcorr}.

We chose to fix the set of colours used, and omit the subscript that appeared in the constructions of the previous section. We also use a notion of \emph{localised} formulas, following previous work of the second author \cite{seiller2012interaction,seiller2012logique,seiller2016interaction}: it is defined using the same grammar as MLL formulas, except that variables are of the form $X_i(j)$, where $j$ is a term (here constructed from unary symbols $\mathtt{r},\mathtt{l}$ and $p_A$ for all occurrences of formulas $A$) used to distinguish occurrences, and one expect each occurrence to appear at most once in a formula.

A \emph{basis of interpretation} is a function $\Phi$ associating to each integer $i \in \nat$ a conduct $\Phi(i)$ in such a way that the conducts $(\Phi(i))_{i\in\nat}$ are pairwise disjoint. We moreover fix a bijection $\varphi:\setofaddresses{S}\times\nat\rightarrow\nat$, where $\setofaddresses{S}$ is the set of addresses $\loc_\mathcal{S}(\_,x)$ -- i.e. the countable set of all terms of the form $p_A(t(x))$ where $A$ ranges over conclusions of $\mathcal{S}$ and $t$ is a term constructed from the unary symbols $\mathtt{l}$ and $\mathtt{r}$. While the interpretation depends on the choice of $\varphi$, we will not indicate it in the notation for the sake of clarity.

In the next definition, we use the substitutions $\theta_r$ and $\theta_l$ which are defined as the identity for all variables except for $x$, and are defined respectively by $\theta_r(x)=\mathtt{r}(x)$ and $\theta_l(x)=\mathtt{l}(x)$.

\begin{definition}[Interpretation of formulas in proof-structures]
Given $\Phi$ a basis of interpretation, and $A$ a MLL formula occurrence identified by a unique unary function symbol $p_A$(cf. Definition \ref{def:loc}). We define the interpretation $I_\Phi(A,t)$ along $\Phi$ and a term $t$ inductively:
\begin{itemize}
\item $I_\Phi(A,t) = \Phi(\varphi(t,i))$ when $A=X_i$;
\item $I_\Phi(A,t) = \Phi(\varphi(t,i))^\bot$ when $A=X_i^\bot$;
\item $I_\Phi(A\otimes B,t) = I_\Phi(A,u) \otimes I_\Phi(B,v)$ where $u=\theta_l(t)$ and $v=\theta_r(t)$;
\item $I_\Phi(A\parr B,t) = I_\Phi(A,u) \parr I_\Phi(B,v)$ where $u=\theta_l(t)$ and $v=\theta_r(t)$.
\end{itemize}
We then define $I_\Phi(A)$ as $I_{\Phi}(A,p_A(x))$. We extend the interpretation to sequents by letting $I_\Phi(\vdash A_1, ..., A_n) = I_\Phi(A_1) \parr ... \parr _\Phi(A_n)$.
\end{definition}

\begin{definition}[Proof-like constellations]
A constellation $\Sigma$ is proof-like w.r.t. a set of addresses $\mathfrak{A}=\{\mathfrak{l}_1,\dots,\mathfrak{l}_k\}$ if $\sharp\Sigma=\mathfrak{A}$ and $\Sigma$ consists of binary stars only.
\end{definition}

The proof of the following result is a simple induction, combined with Theorem \ref{def:dynamics} to ensure that $\exec(\mathcal{S}^\bigstar)$ is proof-like w.r.t. the set $\sharp\mathcal{S}$ of addresses of conclusions of axions in $\mathcal{S}$.

\begin{theorem}[Full soundness]
\label{thm:soundness}
Let $\mathcal{S}$ be a MLL+MIX proof-net of conclusion $\vdash \Gamma$ and $\Phi$ a basis of interpretation. We have $\exec(\mathcal{S}^\bigstar) \in I_\Phi(\vdash \Gamma)$, and $\exec(\mathcal{S}^\bigstar)$ is proof-like w.r.t. $\sharp\mathcal{S}$.
\end{theorem}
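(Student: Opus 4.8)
The plan is to treat the two assertions separately. The proof-like claim follows directly from the dynamics: a proof-net $\mathcal{S}$ rewrites to a unique cut-free normal form $\mathcal{S}_0$, and Theorem~\ref{def:dynamics} gives $\exec(\mathcal{S}^\bigstar)=\mathcal{S}_0^\bigstar$. Since $\mathcal{S}_0$ has no cut, $\mathcal{S}_0^{\mathrm{cut}}$ is empty and $\mathcal{S}_0^\bigstar$ reduces to its vehicle $\mathcal{S}_0^{\mathrm{ax}}$, a multiset of the binary axiom-stars $\gstar{\loc_{\mathcal{S}_0}(C_e^l),\loc_{\mathcal{S}_0}(C_e^r)}$ whose address set is $\sharp\mathcal{S}$. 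This is precisely the definition of being proof-like w.r.t. $\sharp\mathcal{S}$.

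For membership in $I_\Phi(\vdash\Gamma)$ I would argue by induction on a sequent calculus derivation $\pi$ with $\interp{\pi}=\mathcal{S}$. Two preliminary facts streamline every case. First, a de Morgan compatibility lemma $I_\Phi(A^\bot)=I_\Phi(A)^\bot$, proved by a routine induction on $A$: the atomic clauses are built into the definition of $I_\Phi$, while the connective clauses reduce to $(\mathbf{A}\otimes\mathbf{B})^\bot=\mathbf{A}^\bot\parr\mathbf{B}^\bot$, which is immediate from the definition of $\parr$ and the biorthogonal closure of conducts. Second, the locations manipulated are always pairwise disjoint --- ensured by the pairwise disjointness of the $\Phi(i)$ together with the occurrence-distinct addressing through $p_A,\mathtt{l},\mathtt{r}$ --- so that the tensor operation, the associativity of execution (Theorem~\ref{thm:assocexec}) and the trefoil property (Theorem~\ref{thm:adjunction}) are all applicable.

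The inductive cases then run as follows. For an axiom the vehicle is a single copycat star, and membership in $I_\Phi(A)\parr I_\Phi(A^\bot)=I_\Phi(A)\multimap I_\Phi(A)$ is an identity lemma, verified through Theorem~\ref{thm:altlin} by checking that executing the copycat against any realiser of $I_\Phi(A)$ returns a relocation of that realiser, still in $I_\Phi(A)$. The $\parr$ rule is essentially transparent on vehicles (it only refines addresses by $\mathtt{l},\mathtt{r}$), and $I_\Phi(\vdash\Gamma,A,B)$ agrees with $I_\Phi(\vdash\Gamma,A\parr B)$ by associativity of $\parr$, so membership is inherited from the premise. The $\otimes$ rule uses the definition of $\otimes$ together with the disjointness of the locations of the two subproofs. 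The cut rule is the composition case: writing $\exec(\mathcal{S}^\bigstar)$ as the execution of the union of the two normalised subvehicles with the cut star, I would feed the induction hypotheses into the trefoil property and associativity of execution (again invoking Theorem~\ref{def:dynamics} to identify the cut-linked execution with cut elimination) to land in $I_\Phi(\vdash\Gamma,\Delta)$. Finally the MIX rule is handled by disjoint-union reasoning as in the proof of Theorem~\ref{thm:assocexec}: disjoint locations make the execution of a union the union of executions and create no new cycles, so the union of two realisers realises the $\parr$ of the corresponding conducts, which is the model-level content of MIX (cf. Corollary~\ref{cor:mixcorrection}).

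The conceptual skeleton --- cut as composition via the trefoil property, MIX as the absence of new cycles --- is clean; the hard part will be the bookkeeping of locations, namely verifying at each step that the occurrences really are addressed disjointly so that $\otimes$ and the trefoil property genuinely apply, and pinning down the two base computations (the copycat realising the identity, and the exact location matching in the tensor case) under the $p_A,\mathtt{l},\mathtt{r}$ renaming discipline. These are where the \emph{simple induction} hides what little real content it has.
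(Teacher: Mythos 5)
Your skeleton matches the paper's own (extremely terse) proof: the paper, too, disposes of the proof-like half via Theorem~\ref{def:dynamics} and declares the membership half to be ``a simple induction.'' Your treatment of the proof-like claim, of the $\parr$ rule, of cut via Theorem~\ref{thm:assocexec} and Theorem~\ref{thm:adjunction}, and of MIX via disjoint unions is the standard fleshing-out of that sentence, and I have no objection there.

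There is, however, a genuine gap in the preliminary lemma that carries your axiom case. The interpretation is \emph{locative}: $I_\Phi(A)$ abbreviates $I_\Phi(A,p_A(x))$, and an atom occurrence at address $t$ is interpreted as $\Phi(\varphi(t,i))$, where $\varphi$ is a \emph{bijection} and the conducts $\Phi(j)$ are required to be pairwise disjoint. What your routine induction actually proves is the address-fixed identity $I_\Phi(A^\bot,t)=I_\Phi(A,t)^\bot$. Your stated lemma $I_\Phi(A^\bot)=I_\Phi(A)^\bot$ compares interpretations at two \emph{different} addresses, $p_{A^\bot}(x)$ versus $p_A(x)$: for $A=X_i$ it asserts a relation between $\Phi(\varphi(p_{X_i^\bot}(x),i))$ and $\Phi(\varphi(p_{X_i}(x),i))$, two basis conducts that the definition leaves entirely unrelated --- indeed forces to be disjoint. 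Consequently the axiom case cannot be reduced to ``copycat $\in I_\Phi(A)\multimap I_\Phi(A)$''; what must be shown is copycat $\in \Phi(\varphi(p_A(x),i)) \parr \Phi(\varphi(p_{A^\bot}(x),i))^\bot$, and your key step --- that executing the copycat against a realiser yields ``a relocation of that realiser, still in $I_\Phi(A)$'' --- is exactly where this fails: membership in a conduct is location-sensitive, so the relocated constellation is not in $I_\Phi(A)$, and for an arbitrary basis nothing makes it land in the unrelated conduct attached to the other address. Repairing this requires a uniformity assumption on the basis (occurrences of the same variable interpreted by delocated copies of a single conduct, with execution commuting with delocation), in the style of Seiller's interaction-graphs papers cited here; the paper's one-line proof silently glosses over the same point, but a proof that makes the de Morgan step explicit, as yours does, has to confront it.
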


We now consider syntax trees of formulas as incomplete proof-structures, where axioms are missing. We can extend the notion of switching to those \emph{pre-proof-structures}, and define their ordeal (as ordeals are defined without considering axioms hyperedges\footnote{We adapt the first case of Definition \ref{def:ordeals} and introduce the stars for atoms, i.e. for vertices that are not the target of an hyperedge.}). This is extended to sequents and used in the next lemma: given a sequent $\vdash \Gamma$ one can consider \emph{switchings} $\sigma$ of $\vdash \Gamma$ and ordeals $(\vdash \Gamma)_\sigma^\star$. We also define $\sharp\Gamma$ as the set of addresses of occurrences of atoms of $\Gamma$ seen as a pre-proof-structure.

\begin{lemma}\label{lem:tests}
Let $\Phi$ be a basis of interpretation, $\vdash \Gamma$ a sequent, and $\sigma$ a switching of $\vdash \Gamma$. Then $\exec_{\{c\}}((\vdash \Gamma)_\sigma^\star) \in (I_\Phi(\vdash \Gamma))^\bot$.
\end{lemma}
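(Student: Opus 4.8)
The plan is to establish $\exec_{\{c\}}((\vdash \Gamma)_\sigma^\star)\in (I_\Phi(\vdash \Gamma))^\bot$ by first computing the executed ordeal explicitly, then reducing the membership to a single-formula statement via the conduct algebra, and finally proving that statement by induction on the formula. First I would make the shape of $\exec_{\{c\}}((\vdash \Gamma)_\sigma^\star)$ explicit, exactly as in the proof of Theorem \ref{thm:correctness2}: executing the ordeal along the computation colour $c$ contracts the switched syntactic forest $(\vdash \Gamma)_\sigma$, turning each of its connected components into a single \emph{hub star} carrying one free, colourless conclusion ray together with the $-t$-coloured testing rays sitting at the addresses of the atom-leaves that the component reaches. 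Since the conclusions $A_1,\dots,A_n$ have pairwise disjoint addresses, these components split along the $A_i$, so by associativity of execution (Theorem \ref{thm:assocexec}) the executed ordeal factors as a product of per-formula tests $T_i=\exec_{\{c\}}((\vdash A_i)_{\sigma_i}^\star)$, where $\sigma_i$ is the restriction of $\sigma$ to $A_i$.

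Next I would use the conduct algebra. Unfolding Definition \ref{def:par} and the biorthogonal-closure property (which gives $\mathbf{D}^{\bot\bot}=\mathbf{D}$ for conducts) yields the de Morgan identities $(\mathbf{X}\otimes \mathbf{Y})^\bot=\mathbf{X}^\bot\parr \mathbf{Y}^\bot$ and, by induction on $A$, $I_\Phi(A)^\bot=I_\Phi(A^\bot)$; in particular $(I_\Phi(\vdash \Gamma))^\bot=I_\Phi(A_1)^\bot\otimes \cdots\otimes I_\Phi(A_n)^\bot$, this tensor being well defined because the $I_\Phi(A_i)$ are pairwise disjoint (the $p_{A_i}$ are distinct and $\varphi$ is a bijection). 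Because a raw product of constellations always belongs to the biorthogonal closure defining the tensor (Definition \ref{def:tensor}), it then suffices to prove the single-formula statement: for every formula $A$, every base address $t$, and every switching, $\exec_{\{c\}}((\vdash A)_\sigma^\star)\in I_\Phi(A,t)^\bot$.

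I would prove this by induction on $A$. In the base case $A=X_i$ (and dually $X_i^\bot$), the test is a single star with one free colourless conclusion ray and a single $-t$ testing ray at the relevant address; by ray-injectivity its testing ray is used at most once and its conclusion ray has no dual, so the star can only appear as a leaf of any diagram and never lies on a cycle, whence composing it with any constellation located there preserves strong normalisation and it is orthogonal to $\Phi(\varphi(t,i))$. For $A=B\parr C$ the switching $\sigma$ selects a branch: the par hub merges with the chosen sub-test into one star located at the address of $B\parr C$, which lies in $I_\Phi(B,\theta_l(t))^\bot$ (resp. $I_\Phi(C,\theta_r(t))^\bot$) by the induction hypothesis, while the other sub-test stands alone in the dual orthogonal; their product sits in $I_\Phi(B,\theta_l(t))^\bot\otimes I_\Phi(C,\theta_r(t))^\bot=I_\Phi(B\parr C,t)^\bot$ by the raw-product inclusion. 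For $A=B\otimes C$ there is no switching on the tensor, so the hub glues the two sub-tests and one must instead verify orthogonality against every product $\Sigma_B\cdot \Sigma_C$ with $\Sigma_B\in I_\Phi(B,\theta_l(t))$ and $\Sigma_C\in I_\Phi(C,\theta_r(t))$; here the trefoil property (Theorem \ref{thm:adjunction}) together with Theorem \ref{thm:assocexec} lets one peel off the interaction with $\Sigma_B$ and then with $\Sigma_C$, reducing the orthogonality of the glued test to the two orthogonalities granted by the induction hypothesis.

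The hard part will be the tensor case: applying the trefoil property requires careful bookkeeping of the locations $\location{\cdot}$ created by the hub's relocalisation (the address shifts $\theta_l,\theta_r$), so that the disjointness hypotheses $\location{\Sigma_1}\ucap\location{\Sigma_2}\ucap\location{\Sigma_3}=\emptyset$ of Theorems \ref{thm:adjunction} and \ref{thm:assocexec} are genuinely met at each step. The secondary subtlety is the base case, where one must invoke the location discipline to guarantee that a leaf test cannot contribute a cycle when confronted with an arbitrary element of the base conduct, so that strong normalisation — hence orthogonality in the sense of Definition \ref{def:ortho} — is preserved.
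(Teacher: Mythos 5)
Your overall skeleton parallels the paper's proof more than it differs from it: the paper also argues by induction (on the sequent rather than on formulas), its base case is the same pendant-star observation, its $\parr$ case is your de~Morgan step in disguise (the paper gets it for free from $I_\Phi(\vdash \Delta, A\parr B)=I_\Phi(\vdash \Delta,A,B)$), and its $\otimes$ case likewise reduces orthogonality against the tensor conduct to orthogonality against a generating set, using $E^\bot=(E^{\bot\bot})^\bot$. Your preliminary factorisation of the executed ordeal over the conclusions and the raw-product inclusion into the tensor are consistent with this.

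The genuine gap is in your tensor case. Writing $T$ for the glued test of $B\otimes C$, the trefoil property applied to $(T,\Sigma_B,\Sigma_C)$ trades the goal $T\perp \Sigma_B\cdot\Sigma_C$ for the conjunction of $T\perp\Sigma_B$ and $\exec(T\cdot\Sigma_B)\perp\Sigma_C$. The first conjunct does follow from the induction hypothesis, since $\Sigma_B$ carries no rays at $C$'s addresses and the extra rays of $T$ are therefore inert in that interaction. But the second conjunct is \emph{not} one of ``the two orthogonalities granted by the induction hypothesis'': $\exec(T\cdot\Sigma_B)$ is not the test of $C$. Its stars are actualisations of saturated diagrams of $T\cdot\Sigma_B$, and such a diagram may use \emph{several} copies of the glued star $T$; the actualised star then carries several $-t$-coloured testing rays at $C$'s addresses. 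The induction hypothesis says nothing about constellations whose stars carry repeated testing rays at the same location, and this is precisely where the difficulty of the tensor lives: a star of $\Sigma_C$ having two rays at $C$'s address can be chained with such multi-ray stars into arbitrarily long alternating cycles, and nothing you have peeled off controls this. This multiple-visit phenomenon is exactly what the glued tensor star is designed to expose (it is how the criterion catches the incorrect structure in Example \ref{ex:ordeals1}, and what Corollary \ref{cor:mixcorrection} turns into non-normalisation), yet it is invisible to the peeling argument — so the reduction to the induction hypothesis does not go through, and your closing remark that the remaining work is ``bookkeeping of locations'' misidentifies the obstruction. The paper does something genuinely different at this point: it claims that $I_\Phi(\vdash\Delta, A\otimes B)$ is generated by a set $E$ of constellations in which \emph{no star connects locations of $A$ with locations of $B$}, and argues orthogonality of the glued test against such constellations directly, rather than routing through intermediate executions. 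That structural property of the generators is the missing idea; the trefoil property cannot substitute for it, because the intermediate constellation $\exec(T\cdot\Sigma_B)$ escapes the class of constellations that the induction controls.
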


\begin{proof}[Proof sketch.]
The proof is done by induction.
\begin{itemize}
\item If $\vdash \Gamma$ has only formulas $X_i$ or $X_i^\bot$, then there is a single switching (there are no $\parr$), and $\exec_{\{c\}}((\vdash \Gamma)_\sigma^\star)=\sum \gstar{-t.p_{X_i},p_{X_i}}$. Since an element $\Sigma\in I_\Phi(\vdash \Gamma)$ is necessarily strongly normalisable and this implies that $\Sigma\cdot\exec_{\{c\}}((\vdash \Gamma)_\sigma^\star)$ is strongly normalisable, this shows the result.
\item If $\vdash \Gamma = \vdash \Delta, A\parr B$, then a switching $\sigma$ of $\vdash \Gamma$ is a switching $\bar{\sigma}$ of $\vdash \Delta,A,B$ extended to the additional $\parr$ connective linking $A$ and $B$. It should be clear that $\exec_{\{c\}}((\vdash \Delta,A\parr B)_\sigma^\star)=\exec_{\{c\}}((\vdash \Delta,A,B)_{\bar{\sigma}}^\star)$. This shows the result, since $I_\Phi(\vdash \Delta,A\parr B)=I_\Phi(\vdash \Delta,A,B)$.
\item If $\vdash \Gamma = \vdash \Delta, A\otimes B$, a switching of $\vdash \Gamma$ is a switching of $\vdash \Delta, A, B$, and $\exec_{\{c\}}((\vdash \Delta,A\otimes B)_\sigma^\star)$ can be defined from $\exec_{\{c\}}((\vdash \Delta,A, B)_\sigma^\star)$ by colouring the terms starting by $p_A$ and $p_B$ with a fresh colour $+u$ to obtain a constellation $\Theta$ and considering $\exec_{\{u\}}(\Theta\cdot \gstar{-u.p_A(x), -u.p_B(x), p_{A\otimes B}(x)})$. Moreover, one can show that $I_\Phi(\vdash \Delta,A\otimes B)$ is generated (in the sense of bi-orthogonal closure) by a set of constellations $E$ in which no star connects locations of $A$ with locations of $B$. This shows the result since this implies that $\exec_{\{c\}}((\vdash \Delta,A\otimes B)_\sigma^\star)\in E^\bot$ and it is known that $E^\bot=E^{\bot\bot}$ in general.
\qedhere\end{itemize}
\end{proof}

\begin{theorem}[Full completeness]
\label{thm:completeness}
If a constellation $\Sigma \in I_\Phi(\vdash \Gamma)$ is proof-like w.r.t. $\sharp\Gamma$, there exists a MLL+MIX proof-net $\mathcal{S}$ of conclusion $\vdash \Gamma$ such that $\Sigma = \mathcal{S}^{\mathrm{ax}}$.
\end{theorem}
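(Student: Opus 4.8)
The plan is to construct the proof-net $\mathcal{S}$ explicitly from $\Sigma$, observe that its vehicle is exactly $\Sigma$ by construction, and then certify correctness by feeding the orthogonality-to-ordeals supplied by Lemma~\ref{lem:tests} into Corollary~\ref{cor:mixcorrection}.

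First I would build a candidate proof-structure. Since $\Sigma$ is proof-like w.r.t. $\sharp\Gamma$, it consists of binary stars whose rays sit at the addresses $\sharp\Gamma$ of the atom occurrences of $\Gamma$ read as a pre-proof-structure. Taking the syntactic forest of the formulas in $\Gamma$ (its $\otimes$, $\parr$ and conclusion hyperedges) fixes every vertex except the axiom links, and its leaves are precisely the occurrences located at $\sharp\Gamma$. Each binary star $\gstar{r_1,r_2}$ of $\Sigma$ then prescribes an axiom hyperedge joining the two leaves whose addresses are $\floor{r_1}$ and $\floor{r_2}$; because $\sharp\Sigma=\sharp\Gamma$ and the stars are binary, these links pair up the atom occurrences, so that every leaf becomes the target of exactly one axiom. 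This yields a genuine cut-free MLL proof-structure $\mathcal{S}$ of conclusion $\vdash\Gamma$, and by construction $\mathcal{S}^{\mathrm{ax}}=\Sigma$ (with $\mathcal{S}^{\mathrm{cut}}=\emptyset$).

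It then remains to show that $\mathcal{S}$ is an MLL+MIX proof-net, i.e. that all its correction graphs are acyclic. Here I would use that the ordeal of $\mathcal{S}$ depends only on its non-axiomatic part, so $\mathcal{S}_\sigma^\bigstar=(\vdash\Gamma)_\sigma^\star$ for every switching $\sigma$. By Lemma~\ref{lem:tests}, $\exec_{\{c\}}((\vdash\Gamma)_\sigma^\star)\in (I_\Phi(\vdash\Gamma))^{\bot}$, and since $\Sigma\in I_\Phi(\vdash\Gamma)=(I_\Phi(\vdash\Gamma)^{\bot})^{\bot}$ we obtain $\Sigma\perp \exec_{\{c\}}((\vdash\Gamma)_\sigma^\star)$ for each $\sigma$. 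Unfolding orthogonality, and using Theorem~\ref{thm:assocexec} to absorb the inner execution and match the colouring conventions of Section~\ref{subsec:stellcorr} (vehicle coloured $+t$, cuts $-c$), this says exactly that $\post{+t.\mathcal{S}^\mathrm{ax}}\cdot\negc{-c.\mathcal{S}^\mathrm{cut}}\cdot\mathcal{S}_\sigma^\bigstar$ is strongly normalising for every $\sigma$. By Corollary~\ref{cor:mixcorrection}, strong normalisation of this constellation for all switchings is equivalent to acyclicity of all correction graphs of $\mathcal{S}$, which is precisely the MLL+MIX correctness criterion (acyclicity without connectedness — consistent with the statement being about MLL+MIX rather than MLL). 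Hence $\mathcal{S}$ is an MLL+MIX proof-net of conclusion $\vdash\Gamma$ with $\mathcal{S}^{\mathrm{ax}}=\Sigma$, as required.

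The main obstacle I anticipate lies not in the correctness argument, which flows cleanly from Lemma~\ref{lem:tests} and Corollary~\ref{cor:mixcorrection}, but in the rigorous bookkeeping of the first step: verifying that a proof-like constellation genuinely encodes a matching of the atom occurrences of $\Gamma$, and hence a bona fide proof-structure whose vehicle reproduces $\Sigma$ exactly. This is where the interplay between the notion of address, the location $\sharp$, and the disjointness conditions must be handled carefully to exclude degenerate stars (a star connecting two rays at the same address, or rays located strictly below a leaf), and where the hypothesis $\Sigma\in I_\Phi(\vdash\Gamma)$ may have to be invoked a second time to rule out configurations that are superficially proof-like yet do not arise as vehicles.
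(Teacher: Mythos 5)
Your proposal is correct and takes essentially the same route as the paper: construct $\mathcal{S}$ by uniting the axiom links read off from the binary stars of $\Sigma$ with the syntax forest of $\vdash\Gamma$, then combine Lemma~\ref{lem:tests} (which puts the executed ordeals in $(I_\Phi(\vdash\Gamma))^{\bot}$, hence orthogonal to $\Sigma$) with Corollary~\ref{cor:mixcorrection} to get acyclicity of all correction graphs, i.e. MLL+MIX correctness. Your write-up merely makes explicit two points the paper leaves implicit — the appeal to associativity of execution to pass between orthogonality to the executed ordeal and strong normalisation of the combined constellation, and the bookkeeping showing a proof-like constellation genuinely induces a perfect matching of atom occurrences.
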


\begin{proof}
A proof-like constellation $\Sigma \in I_\Phi(\vdash \Gamma)$ w.r.t to $\sharp\Gamma$ can always be considered as the interpretation of a proof-structure with only axioms; we can then construct a proof-structure $\mathcal{S}$ by considering the union of the latter with the syntax forest of $\vdash \Gamma$. Since $\Sigma$ belongs to $I_\Phi(\vdash \Gamma)$, and for all switching $\sigma$ of $\vdash \Gamma$ (equivalently, of $\mathcal{S}$) the ordeal $(\vdash \Gamma)_\sigma^\star=\mathcal{S}^\star_\sigma$ is orthogonal to $\Sigma$, Corollary \ref{cor:mixcorrection} shows that $\mathcal{S}$ is acyclic, i.e. satisfies the correctness criterion for MLL+MIX.
\end{proof}

% ==============================================================
\section{Perspectives and future works}
\label{sec:conclusion}
% ==============================================================

While we have shown here how to reconstruct the multiplicative fragment of linear logic, an interpretation of exponential connectives should also be possible using stellar resolution. Extension to the MELL fragment will be particularly interesting since it allows for the interpretation of System F \cite{girard1972interpretation} and pure $\lambda$-calculus \cite{danos1990logique,regnier1992lambda}. In fact, Girard's first article on Transcendental Syntax \cite{girard2017transcendental} sketches some reconstruction of the exponentials, but relies on constructions from later articles.

To interpret the additive connectives $\oplus, \&$, a way to exclude or force some choices in the construction of diagrams. For this purpose, Girard third article on Transcendental Syntax \cite{girard2016transcendental} mentions some involved coherence relations between stars that is not completely satisfying. This idea was already properly developed in Seiller's PhD thesis \cite{seiller2012logique} in the setting of interaction graphs; an improved and extended account can be found in a recent article by Nguyen and Seiller \cite{nguyen2019coherent}. We can expect to build on the latter to interpret additive connectives.

Last, but not the least, the third article on Transcendental Syntax \cite{girard2016transcendental3} suggests to interpret the terms of first-order logic as multiplicative propositions and the equality as the linear equivalence. This extension to first-order logic is the initial motivation behind the present work, and the authors expect to provide a formal account of these ideas. This would provide computational content for first-order logic in the sense of the Curry-Howard correspondence, something new and fascinating that would open numerous applications.

%%
%% Bibliography
%%

%% Please use bibtex,

\bibliography{references}

\appendix

\end{document}